\newcommand{\mc}[1]{\mathcal{#1}}
\newcommand{\wt}[1]{\widetilde{#1}}
\newcommand{\abs}[1]{\left\lvert#1\right\rvert}
\newcommand{\norm}[1]{\left\lVert#1\right\rVert}
\newcommand{\ud}{\,\mathrm{d}}
\newcommand{\Or}{\mathcal{O}}
\newcommand{\RR}{\mathbb{R}}
\newcommand{\CC}{\mathbb{C}}
\newtheorem{thm}{\protect\theoremname}
\theoremstyle{plain}
\theoremstyle{plain}
\newtheorem{lem}[thm]{\protect\lemmaname}
\theoremstyle{plain}
\newtheorem{rem}[thm]{\protect\remarkname}
\theoremstyle{plain}
\newtheorem*{lem*}{\protect\lemmaname}
\theoremstyle{plain}
\newtheorem{prop}[thm]{\protect\propositionname}
\theoremstyle{plain}
\newtheorem{cor}[thm]{\protect\corollaryname}
\newtheorem{assumption}[thm]{\protect\assumptionname}
\newtheorem{defn}[thm]{\protect\definitionname}
\theoremstyle{definition}
\providecommand{\definitionname}{Definition}
\providecommand{\assumptionname}{Assumption}
\providecommand{\corollaryname}{Corollary}
\providecommand{\lemmaname}{Lemma}
\providecommand{\propositionname}{Proposition}
\providecommand{\remarkname}{Remark}
\providecommand{\examplename}{Example}
\providecommand{\theoremname}{Theorem}
\providecommand{\conjecturename}{Conjecture}
\NewDocumentCommand{\ketbra}{mG{#1}}{\mathinner{|{#1}\rangle\!\langle{#2}|}}
\tikzset{%
  highlight/.style={rectangle,rounded corners,fill=blue!15,draw,fill opacity=0.3,thick,inner sep=0pt}
}
\DeclareMathOperator{\tr}{tr}
\newcommand{\maf}[1]{\mathfrak{#1}}
\renewcommand{\R}{\mathbb{R}} 
\renewcommand{\C}{\mathbb{C}} 
\newcommand{\eps}{\epsilon}
\newcommand{\lad}{\lambda}
\newcommand{\si}{\sigma}
\newcommand{\rd}{\mathrm{d}}
\newcommand{\re}[1]{{\color{black} #1}}
\newcommand{\rre}[1]{{\color{black} #1}}
\newcommand{\magg}[1]{{\color{black} #1}}
\def \ww {\omega}
\def \d {\delta}
\def \w {\widetilde}
\def \q {\quad}
\def  \mi {{\bf 1}}
\def \bh {\mc{B}(\mc{H})}
\def \l {\langle}
\def \r {\rangle}
\def \si {\sigma}
\def \dd {\cdot}
\def \pa {\partial}
\numberwithin{equation}{section}
\newcommand{\DeptMath}{Department of Mathematics, University of California, Berkeley, CA 94720, USA}
\newcommand{\LBLMath}{Applied Mathematics and Computational Research Division, Lawrence Berkeley National Laboratory, Berkeley, CA 94720, USA}
\newcommand{\cityMath}{Department of Mathematics,  City University of Hong Kong, Kowloon Tong, Hong Kong SAR}
\begin{document}

\title[Quantum Gibbs samplers with KMS detailed balance condition]{Efficient quantum Gibbs samplers with Kubo--Martin--Schwinger detailed balance condition}
\date{Latest revision: \today}

\author[Z. Ding]{Zhiyan Ding}
\address{\DeptMath}
\email{zding.m@math.berkeley.edu}
\author[B. Li]{Bowen Li}
\address{\cityMath}
\email{bowen.li@cityu.edu.hk}
\author[L. Lin]{Lin Lin}
\address{\DeptMath; \LBLMath}
\email{linlin@math.berkeley.edu}

\begin{abstract}
Lindblad dynamics and other open-system dynamics provide a promising path towards efficient Gibbs sampling on quantum computers.  In these proposals, the Lindbladian is obtained via an algorithmic construction akin to designing an artificial thermostat in classical Monte Carlo or molecular dynamics methods, rather than treated as an approximation to weakly coupled system-bath unitary dynamics. Recently, Chen, Kastoryano, and Gily\'en (arXiv:2311.09207) introduced the first efficiently implementable Lindbladian satisfying the  Kubo--Martin--Schwinger (KMS) detailed balance condition, which ensures that the Gibbs state is a fixed point of the dynamics and is applicable to non-commuting Hamiltonians. This Gibbs sampler uses a continuously parameterized set of jump operators, and the energy resolution required for implementing each jump operator depends only logarithmically on the precision and the mixing time. In this work, we build upon the structural characterization of KMS detailed balanced 
Lindbladians by Fagnola and Umanit\`a, and develop a family of efficient quantum Gibbs samplers using a finite set of jump operators (the number can be as few as one), \re{akin to the classical Markov chain-based sampling algorithm. Compared to the existing works, 
our quantum Gibbs samplers have a comparable quantum simulation cost but with greater design flexibility and a much simpler implementation and error analysis.} Moreover, it encompasses the construction of Chen, Kastoryano, and Gily\'en as a special instance.

\end{abstract}

\maketitle

\section{Introduction}

For a given quantum Hamiltonian $H\in\CC^{N\times N}$, preparing the associated Gibbs state
$\sigma_\beta=e^{-\beta H}/\mc{Z}_\beta$  (also called quantum Gibbs sampling) has a wide range of applications in condensed matter physics, quantum chemistry, and optimization.
Here $N=2^n$ is the dimension of the underlying Hilbert space, $\beta$ is the inverse temperature, and 
$\mc{Z}_\beta=\tr(e^{-\beta H})$ is the partition function. \re{In this work,} we assume efficient quantum access to the Hamiltonian simulation $\exp(-itH),t\in\RR$, \re{and we measure} the cost of a quantum algorithm \re{in terms of} the total Hamiltonian simulation time of $H$. 

\re{Recently, there has been a revival of interest in designing quantum Gibbs samplers based on the Lindblad dynamics \cite{MozgunovLidar2020,RallWangWocjan2023,chen2021fast,ChenKastoryanoBrandaoEtAl2023,ChenKastoryanoGilyen2023,WocjanTemme2023}
\begin{equation*}
\partial_t\rho_t = \mc{L}^{\dag} \rho_t\,,
\end{equation*}
which is a Markovian semigroup on quantum states. For ease of exposition, we shall refer to both the generator $\mc{L}^\dag$ and its adjoint $\mc{L}$ as Lindbladian, which are 
related to each other via
\begin{equation} \label{eq:adjointl}
    \l  X, \mc{L}(Y) \r = \l  \mc{L}^\dag (X), Y\r\,, \quad \text{for $X, Y\in\CC^{N\times N}$}\,.
\end{equation}
Here $\l \dd, \dd \r$ is the Hilbert--Schmidt inner product for matrices, and we follow the convention that the adjoint $\mc{L}^{\dag}$ denotes the Lindbladian 
in the Schr\"{o}dinger picture acting on a quantum state. 
A common choice of Lindbladian for quantum Gibbs sampling is the Davies generator, which is originally derived from the weak-coupling limit of open quantum dynamics \cite{Davies1974,Davies1976,davies1979generators}:  
\begin{equation} \label{eqq:davies}
    \mc{L}^\dag_{\rm Davies}(\rho) = - i [H + H_{\rm LS}, \rho] + \sum_{a,\,\ww} \gamma_a(\ww) \left(\widehat{A}^a(\ww) \rho  \widehat{A}^a(\ww)^\dag - \frac{1}{2}\left\{\widehat{A}^a(\ww)^\dag  \widehat{A}^a(\ww), \rho \right\}  \right)\,, 
\end{equation}
with the functions $\gamma_a(\omega)$ and  operators $\widehat{A}^a(\ww)$ satisfying 
\begin{equation*}
    \gamma_a(-\ww) = e^{\beta \ww} \gamma_a(\ww)\,,\q \si_\beta \widehat{A}^a(\ww) \si_\beta^{-1} = e^{-\beta \ww} \widehat{A}^a(\ww)\,,
\end{equation*}
where $\{X, Y\} = XY + YX$ stands the anticommutator and the sum is over $a$ and $\ww$ from some discrete index sets, representing the number of couplings to the environment and Bohr frequencies, respectively. Here the commutator term involving $H + H_{\rm LS}$ is called the coherent part, and $H_{\rm LS}$ is a correction Hamiltonian called the Lamb shift satisfying $[H_{\rm LS}, H] = 0$. 
A key property for using the Davies generator as a quantum Gibbs sampler \cite{RallWangWocjan2023,WocjanTemme2023} is that the associated dynamics $e^{t \mc{L}^\dag_{\rm Davies}}$ admits the Gibbs state as a fixed point, equivalently, $\mc{L}^\dag_{\rm Davies}(\sigma_\beta)=0$, which can be directly checked from \eqref{eqq:davies} (see \cref{sec:daveisgns} for a more detailed discussion).}

\re{In general, for a fixed Gibbs state $\si_\beta$,  \rre{it is not an easy task to design a Lindbladian satisfying the fixed-point property $\mc{L}^\dag(\si_\beta) = 0$.} A \rre{sufficient condition guaranteeing this} is the quantum \rre{generalization of the} detailed balance condition (DBC) \cite{kossakowski1977quantum,fagnola2007generators,fagnola2010generators} (see \cref{def:sidbc}). The quantum DBC is a set of criteria for the equilibria of quantum systems, and is crucial in understanding the dynamics and thermodynamics of such systems.} 
Moreover, for a given precision $\epsilon>0$, we define the mixing time of a Lindblad dynamic with $\si_\beta$ as a unique fixed point by 
\begin{equation} \label{eqn:mixtime}
t_{\mathrm{mix}} := \inf\left\{t\ge 0\,;\  \left\|e^{t \mathcal{L^{\dag}} }\left(\rho\right)-\sigma_\beta\right\|_{1} \leq \epsilon, \  \forall \, \text{\re{quantum states} $\rho$}\right\}\,,
\end{equation}
\re{which quantifies the convergence speed of the dynamics,} where $\norm{\cdot}_1$ denotes the trace norm. The quantum DBC is also instrumental for proving the finite mixing time (if it is the case), and its scaling with respect to $\beta$ and the system size~\cite{TemmeKastoryanoRuskaiEtAl2010,KastoryanoTemme2013,BardetCapelGaoEtAl2023,rouz2024}.
However, due to the non-commutativity of operators in quantum mechanics, there is no unique definition of
quantum DBC~\cite{TemmeKastoryanoRuskaiEtAl2010,CarlenMaas2017,CarlenMaas2020}.


  
\subsection{Related works}

The most widely studied form of quantum DBC is in the sense of Gelfand--Naimark--Segal (GNS). The seminal result by Alicki~\cite{Alicki1976} characterizes the class of Lindbladians satisfying the GNS DBC, which turns out to have the same dissipative part as that of the Davies generator; see \cref{sec:daveisgns}. 
\re{The Davies generator $ \mc{L}_{\rm Davies}$ is typically derived as a simplified representation of weakly interacting system-bath models, following the Born-Markov-Secular\footnote{The secular approximation is also referred to as the rotating wave approximation (RWA).} approximation route \cite{breuer2002theory,Lidar2019}, and its applicability range seems to be constrained by the limitations of these ideal approximations. However, for the purpose of Gibbs sampling, we may also view the Davies generator as a purely algorithmic process, without mimicking any specific system-bath unitary dynamics in nature. A much more severe issue is that the implementation of $\widehat{A}^a(\omega)$ requires the resolution of all Bohr frequencies, which is very challenging for most Hamiltonians (see also \cref{rem:ineffi}). For this reason, earlier works on the quantum thermal state preparation based on Davies generators often suffer from some unphysical assumptions, such as the rounding promise \cite{RallWangWocjan2023,WocjanTemme2023}.
 

In what follows, $\mc{A}$ denotes a discrete index set.} The key object in the work \cite{ChenKastoryanoBrandaoEtAl2023} is the following frequency-dependent jump operator:
\begin{equation}\label{eqn:freq_jump}
  \widehat{A}^a_f(\omega) := \int^\infty_{-\infty}f(t)e^{-i\omega t} e^{iHt}A^ae^{-iHt} \ud t, \quad \omega\in\RR\,, \quad a\in\mc{A}\,,
\end{equation}
which is a $f$-weighted Fourier transform of the Heisenberg evolution $A^a(t) := e^{iHt}A^{a}e^{-iHt}$ of $A^a$.
Here $f:\RR\to \mathbb{R}$ \re{is a \emph{real} integrable   filtering function,} and 
$\{A^a\}_{a\in\mc{A}}$ is a set of (frequency-independent) coupling operators provided by the user, which represent the coupling between the system and the fictitious environment, akin to designing an artificial thermostat in the classical Monte Carlo or molecular dynamics methods~\cite{FrenkelSmit2002}. The choice of $\{A^a\}_{a\in\mc{A}}$ can be flexible and relatively simple  (such as Pauli operators). The algorithmic Lindbladian in the Schr\"{o}dinger picture then takes the form
\begin{equation}\label{eqn:lindblad_filter}
  \mc{L}^{\dag}(\rho) = 
  \sum_{a\in\mathcal{A}}\int^\infty_{-\infty}\gamma(\omega)\left(\widehat{A}_f^a(\omega)\rho \widehat{A}_f^a(\omega)^\dagger-\frac{1}{2}\Big\{\widehat{A}_f^a(\omega)^\dagger \widehat{A}_f^a(\omega),\rho\Big\}\right)\ud \omega\,,
\end{equation}
\re{where $\gamma(\ww): \RR \to [0,1]$ is the transition weight function satisfying $ \gamma(-\ww) = e^{\beta \ww} \gamma(\ww)$.
One may easily see from \eqref{eqq:davies} that the Davies generator can formally correspond to taking the filtering function $f(t)\equiv 1$ in \eqref{eqn:lindblad_filter}, up to a coherent part.}
Unfortunately, in this case, since $f$ does not decay in $t$, the exact implementation of $\widehat{A}^a_f(\omega)$ in 
\eqref{eqn:freq_jump} requires simulating the Heisenberg evolution $A^a(t)$ for an infinitely long period. This implies that 
in the frequency space, the energy levels of $H$ have to be distinguished to infinite precision, which cannot be achieved 
in general, except for some special systems such as Hamiltonians with commuting terms. \re{A natural idea is} to select certain decaying filtering functions $f$ in \eqref{eqn:lindblad_filter} such that $\mc{L}$ approximates a Davies generator,  \re{whereas in this case, the generators can only satisfy the \emph{approximate}} GNS DBC and their fixed points are not known \emph{a priori}. 
Consequently, estimating the deviation of the fixed point of the approximate dynamics from $\sigma_\beta$ involves tracking the accumulated error along the dynamic trajectory. \re{It was proved in 
\cite[Theorems I.1, I.3]{ChenKastoryanoBrandaoEtAl2023} 
that} to approximate the Gibbs state to precision $\epsilon$, the dynamics should distinguish the energy levels of $H$ to precision $\poly(\beta^{-1} t_{\mathrm{mix}}^{-1} \epsilon)$, where $t_{\rm mix}$ is given in \cref{eqn:mixtime}. 
It means that the integral in \cref{eqn:freq_jump} could be truncated to $T=\poly(\beta t_{\mathrm{mix}} \epsilon^{-1})$. However, this cost may still be prohibitively high for practical applications. 

 

In the \rre{follow-up} work \cite{ChenKastoryanoGilyen2023}, Chen et al. introduced a new quantum Gibbs sampler based on \eqref{eqn:lindblad_filter}, \re{which requires a lower energy resolution in simulating  $\widehat{A}^a_f(\omega)$ and hence has the reduced computational complexity, by constructing a coherent term $- i [G, \rho]$ to $\mc{L}^\dag$ in \eqref{eqn:lindblad_filter} to make the total Lindbladian $\w{\mc{L}}(X) =i [G, X] + \mc{L}(X)$} exactly satisfies a less stringent version of DBC called the Kubo--Martin--Schwinger (KMS) DBC. \re{To be specific, \cite[Section II.C]{ChenKastoryanoGilyen2023} showed that the transition part of $\mc{L}^\dag$ in \eqref{eqn:lindblad_filter}:
\begin{equation} \label{eq:transi}
    \mc{T}(\rho) = \sum_{a\in\mathcal{A}}\int^\infty_{-\infty}\gamma(\omega) \widehat{A}_f^a(\omega)\rho \widehat{A}_f^a(\omega)^\dagger \ud \omega
\end{equation}
satisfies the $\sigma$-KMS DBC if $f$ and $\gamma$ are Gaussians with some parameter constraints. Furthermore, if $f(t)$ is chosen to be a real function, then the Gaussian function turns out to be the unique choice (up to a linear combination of $\gamma$, see \cite[Appendix D]{ChenKastoryanoGilyen2023}). On the other hand, \cite[Lemma II.1]{ChenKastoryanoGilyen2023} showed one can always design a coherent term $G$ such that the superoperator $\mc{R}(\rho) - i[G,\rho]$ is KMS detailed balanced, where $\mc{R}(\dd)$ is the decay part in \eqref{eqn:lindblad_filter}:
\begin{equation} \label{eq:decay}
    \mc{R}(\rho) = 
  -\frac{1}{2}\sum_{a\in\mathcal{A}}\int^\infty_{-\infty}\gamma(\omega)\Big\{\widehat{A}_f^a(\omega)^\dagger \widehat{A}_f^a(\omega),\rho\Big\}\ud \omega\,.
\end{equation} 
  Thanks to the \emph{exact} KMS DBC, the Gibbs state $\sigma_\beta$ remains a fixed point of the Lindblad dynamics, and one no longer needs to keep track of the accumulated deviation and distinguish the energy levels of $H$ to very high precision.}
\re{In this setup, it was proved \cite[Theorem I.2]{ChenKastoryanoGilyen2023} that} 
to prepare $\sigma_\beta$ to precision $\epsilon$, the integral in \cref{eqn:freq_jump} can be truncated to $T=\Or(\beta \log(t_{\mathrm{mix}}/\epsilon))$ and the cost can be significantly reduced. 

\rre{It is worth pointing out that for each jump proposal $A^a$,} the integral with respect to $\omega$ in \cref{eqn:lindblad_filter} involves a continuously parameterized set of jump operators 
if $\gamma(\omega)$ is a continuous function. Although one can discretize 
such an integral using a quadrature scheme, the algorithm must be
meticulously designed to efficiently simulate the resulting Lindblad
dynamics \cite{CW17,ChenKastoryanoBrandaoEtAl2023,
ChenKastoryanoGilyen2023}. To the best of our knowledge,  high-order
Lindblad simulators designed for a finite number of jump operators,
which allow for simpler implementations \cite{LiWang2023,DingLiLin2024}, are not suitable for this task.

\re{
Along with \cite{ChenKastoryanoBrandaoEtAl2023,ChenKastoryanoGilyen2023}, there are other algorithms \cite{Rall_thermal_22} for the thermal state preparation also aiming to approximate the GNS detailed balanced Lindbladian (i.e., the Davies generator). 
Similar to \cite{ChenKastoryanoBrandaoEtAl2023}, the truncation time of these methods often depends polynomially on  $1/\epsilon$, resulting in the total cost scales polynomially with $1/\epsilon$. \cref{tab:comparison} summarizes the total costs of the algorithms in \cite{ChenKastoryanoBrandaoEtAl2023,ChenKastoryanoGilyen2023} and some earlier related works. A more comprehensive overview of quantum Gibbs samplers can be found in \cite{ChenKastoryanoBrandaoEtAl2023} and references therein.}

In parallel to Alicki's characterization of Lindblad generators satisfying the GNS DBC, Fagnola, and Umanit\`a~\cite[Theorems 7.2 and 7.3]{fagnola2007generators} have prescribed the necessary and sufficient conditions for a Lindblad dynamics to satisfy the KMS DBC.
This leads to a set of conditions on the jump operators and the Hamiltonian for its corresponding Lindblad generator; see also \cite{fagnola2010generators,AmorimCarlen2021}. \re{ 
However, it is unclear from the results in \cite{fagnola2007generators} how to choose suitable jump operators for the quantum thermal state preparation, and they also did not provide an explicit form of the coherent term required for the KMS detailed balance. In this sense,  \cite[Lemma II.1]{ChenKastoryanoGilyen2023} shares some commonalities to the results in \cite{fagnola2007generators} and provides a more explicit form of the Lindbladian.} 

\subsection{Contribution and outline}
\re{In this work, we first provide a detailed introduction to the canonical forms of Lindbladians with various detailed balance conditions in \cref{sec:dbclindblad}. In particular, in \cref{sec:kmsdbc}, we revisit the arguments in \cite{fagnola2007generators}, along the lines of the recent work \cite{AmorimCarlen2021}, and provide a self-contained construction for the $\si_\beta$-KMS detailed balanced Lindbladian with explicit formulations of the jumps and the coherent term (see \cref{prop:KMS_DBC} and \cref{eqn:V_first}). This is the starting point of our quantum Gibbs
samplers constructed and analyzed in \cref{sec:QGS}.}

\re{The idea underlying our quantum Gibbs sampler is analogous to the classical Markov chain-based sampling. In the classical context, a state undergoes a candidate jump according to a transition matrix. The proposed new state is then accepted with a certain probability to ensure that the resulting modified transition matrix satisfies a detailed balance. A natural question that arises when designing efficient quantum Gibbs samplers is:
\begin{quote}
    \emph{How to modify a predefined set of jump proposals (i.e., coupling operators) such that the resulting Lindblad dynamics satisfies the quantum detailed balance for the target Gibbs state and meanwhile can be efficiently simulated on a quantum computer?}
\end{quote}
Based on the structural characterization in \cref{prop:KMS_DBC}, we provide an affirmative answer to this question. Our approach enables the use of general filter functions to modify a set of predefined coupling operators, resulting in dynamics that can be efficiently simulated and exactly satisfy the KMS detailed balance condition.


In contrast to \cite{ChenKastoryanoBrandaoEtAl2023,ChenKastoryanoGilyen2023}, which can be regarded as smoothened approximations of Davies generators, our new framework provides significantly more flexibility in designing quantum Gibbs samplers. Recall from  \cite[Appendix D]{ChenKastoryanoGilyen2023} that to make the transition part \eqref{eq:transi} satisfy the KMS DBC, the filtering function $f$ in \eqref{eqn:freq_jump} has to be a Gaussian if it is assumed to be real, and then the outer integration in \eqref{eq:transi} with respect to a continuous function $\gamma(\ww)$ is necessary. Intuitively, this is because when $f$ is real, its Fourier transform must be conjugate symmetric. We need an additional integration to induce the biased energy transition. However, if the main goal is to ensure the KMS detailed balance (instead of approximating Davies generators), by considering a general class of filtering functions with a mild symmetry condition (see \eqref{eq:addjump}--\eqref{eq:integ_jump} below), we find that $f$ can be relaxed to be \emph{complex} and the outer integration in $\omega$ can be removed. 

To be specific, in \cref{sec:gibbsampler}, we show that the admissible jump $L$ for KMS detailed balance is of the form: for a self-adjoint operator $A$ with $A_\nu = \sum_{\lambda_i - \lambda_j = \nu} P_i A P_j$,
\begin{equation*}
    L = \sum_{\nu \in B_H} e^{- \beta \nu/4} A_\nu\,,
\end{equation*}
where $B_H$ is the set of Bohr frequencies (see \eqref{def:Bohr}--\eqref{eqn:A_v} for related definitions). We may overparameterize it 
by adding a \emph{weighting function} $q(\nu)$:
\begin{equation}  \label{eq:addjump}
       L = \sum_{\nu \in B_H} \widehat{f}(\nu) A_\nu\,, \quad \widehat{f}(\nu)=  e^{- \beta \nu/4} q(\nu)\,,
\end{equation}
with $q$ being conjugate symmetric $q(-\nu) = \overline{q(\nu)}$, and then, without loss of generality, we can assume the operator norm $\|A\| \le 1$. Here, intuitively, the Fourier transform of the filtering function $\widehat{f}$ helps modify the transition probability between eigenstates with an energy difference $\nu$ to satisfy the KMS detailed balance. This jump operator can be efficiently simulated in the time domain: 
\begin{equation} \label{eq:integ_jump}
    L = \int_{-\infty}^{\infty} f(t)e^{iHt}Ae^{-iHt}\, \mathrm{d}t \q \text{with}\q f(t) = \frac{1}{2 \pi} \int_{-\infty}^\infty \widehat{f}(\nu) e^{- i \nu t} \ud \nu\,,
\end{equation}
when $f(t)$ decays rapidly as $\abs{t}\to \infty$ (see also \eqref{defaf}--\eqref{eq:fcalculus}). This is the main motivation for us to consider the \emph{Gevrey function} class with compact support in \cref{assumption:q} as weighting functions $q(\nu)$ \cite{Adwan_2017,Gus_2019}, whose Fourier transform has the desired sub-exponential decay as $|t| \to \infty$ (see Lemmas  \ref{lem:hat_Gev_decay} and \ref{lem:property_f_g}), so that one could approximate the integral \eqref{eq:integ_jump} accurately with a moderate truncation time $\wt{\Or}(\beta\mathrm{log}^{1+o(1)}(t_{\rm mix}/\epsilon))$ (see \cref{lem:discretization_L_G} and \cref{thm:simulation}). 
Once we design a family of efficient jumps $\{L_a\}$ via $\{(A_a,f^a)\}$, the coherent term can be readily defined via \cref{eqn:G_formula_KMS} in \cref{prop:KMS_DBC}, admitting a similar time integral formulation as \eqref{eq:integ_jump} (see \cref{eqn:G_KMS,def:hll}). 
The quantum algorithm for the proposed quantum Gibbs sampler is discussed in detail in \cref{sec:simulation}. In \cref{sec:recover}, we elaborate on how to fit the construction of~\cite{ChenKastoryanoGilyen2023} into our algorithmic framework as a special instance. }


\re{In addition, we emphasize that compared to the algorithm in \cite{ChenKastoryanoGilyen2023}, our Gibbs sampler \rre{only uses one jump operator per jump proposal $A^a$}. In particular, we can directly achieve a Metropolis-type transition using our framework while exactly satisfying the KMS DBC (see \cref{fig:f}). This greatly simplifies the quantum simulation process. Thanks to \cref{lem:discretization_L_G} built on the technical \cref{assumption:q} via the Poisson summation formula,} our jump operators can be constructed using the standard linear combination of
unitaries (LCU) routine~\cite{BerryChildsCleveEtAl2014,
GilyenSuLowEtAl2019}. As a result, our Lindblad dynamics can be efficiently simulated using \emph{any} high-order simulation algorithms, including those in \cite{LiWang2023,DingLiLin2024}. \re{We show in \cref{thm:simulation} that our approach and that in \cite{ChenKastoryanoGilyen2023} have a comparable computational cost when the best
available Lindblad simulation algorithms with near-optimal costs are used. 
One should note that \cref{assumption:q} (Gevrey-type weighting functions) is mainly for simplifying the discretization error analysis and not essential for the design of our quantum Gibbs sampler. We believe that a broader class of smooth functions can also serve as weighting functions without affecting the total quantum algorithm cost. }






\begin{table}[htbp!]
\begin{adjustbox}{width=\textwidth} 
\makegapedcells
\centering
\begin{tabular}{c|cccc|c}
\hline
\hline
\textbf{Algorithms} & \multicolumn{4}{c|}{\textbf{Properties}} & \textbf{Remark} \\
 & Detailed & Truncation     & \# Jump          & Total &  \\
 & balance  & time & \rre{per $A^a$}     & cost  &  \\ \hline
\cite{chen2021fast} & $\approx$ GNS & N/A & $\infty$ & $\poly(\beta\epsilon^{-1} t_{\mathrm{mix}})$ & \begin{tabular}[c]{@{}c@{}} Weak coupling\\Refreshable bath
\end{tabular} 
\\ \hline
\cite[Theorem 1]{RallWangWocjan2023} & $\approx$ GNS & $\wt{\Or}(\beta \epsilon^{-2})$ & $\infty$ & $\wt{\Or}(\beta^3 t_{\mathrm{mix}}\epsilon^{-7})$ & Rounding promise
\\ \hline
\cite[Theorem I.1]{ChenKastoryanoBrandaoEtAl2023} & $\approx$ GNS & $\wt{\Or}(\beta t_{\mathrm{mix}}^2 \epsilon^{-2})$ & $\infty$ & $\wt{\Or}(\beta t_{\mathrm{mix}}^3 \epsilon^{-2})$ & Rectangular filter
\\ \hline
\cite[Theorem I.3]{ChenKastoryanoBrandaoEtAl2023} & $\approx$ GNS & $\wt{\Or}(\beta t_{\mathrm{mix}} \epsilon^{-1})$ & $\infty$ & $\wt{\Or}(\beta t_{\mathrm{mix}}^2  \epsilon^{-1})$ & Gaussian filter
\\ \hline
\cite[Theorem I.2]{ChenKastoryanoGilyen2023} & KMS & $\wt{\Or}(\beta \log(t_{\mathrm{mix}}/\epsilon))$ & $\infty$ & $\wt{\Or}(\beta t_{\mathrm{mix}}  \mathrm{polylog}(1/\eps))$ & \begin{tabular}[c]{@{}c@{}} Gaussian / \\ Metropolis filter
\end{tabular}
\\ \hline
This work [\cref{thm:simulation}]& KMS & $\wt{\Or}(\beta\mathrm{log}^{1+o(1)}(t_{\rm mix}/\epsilon))$ & \rre{$1$} & \re{$\wt{\Or}( \beta t_{\mathrm{mix}} \mathrm{polylog}(1/\epsilon))$} & A family of filters
\\ \hline\hline 
\end{tabular}
\end{adjustbox}
\vspace{0.5em}
\caption{\re{A comparison of some quantum Gibbs samplers using techniques related to Lindblad dynamics \rre{with $\beta > 1$}. Here the truncation time $T$ is used to truncate the integral in \cref{eqn:freq_jump} to $[-T, T]$ in simulation. \rre{For each jump proposal $A^a$}, the number of jump operators for the Lindbladian is denoted by $\infty$ if $\gamma(\omega)$ is a continuous function in \cref{eqn:lindblad_filter}.
The total cost refers to the total Hamiltonian simulation time using the best available Lindblad simulation algorithm. The $o(1)$  factor in the truncation time of this work stems from our choice of the filtering function and $o(1)$ can be chosen to be arbitrarily small without much added cost. The mixing time 
$t_{\mathrm{mix}}$ is method-dependent, and its value is generally difficult to compare with each other in theory.
The weak coupling assumption may be unphysical for large quantum systems. The rounding promise-based method prepares an ensemble of density operators, and the total cost for preparing each density operator in the ensemble may be improved to 
$\wt{\Or}(\beta t_{\rm mix} \epsilon^{-2})$ by the improved Lindbladian
simulation technique from \cite{ChenKastoryanoBrandaoEtAl2023}
(see \cite[Remark of Theorem 1]{RallWangWocjan2023}).}}
\label{tab:comparison}
\end{table}


\subsection{Discussion and open questions}

\rre{Despite their importance, the computational complexity of preparing quantum Gibbs states, and the associated task of computing partition functions, is a topic that is being actively debated in the literature~\cite{bravyi2021complexity,bravyi2024quantum,BakshiLiuMoitraEtAl2024}. It is particularly challenging to establish quantum computational advantage for Gibbs state preparation at a constant temperature. This is because on one hand, at high enough (constant) temperatures, there exist polynomial-time classical algorithms to sample from Gibbs states and to estimate partition functions~\cite{bravyi2021complexity,MannHelmuth2021,YinLucas2023,BakshiLiuMoitraEtAl2024}. On the other hand, in the low-temperature regime, preparing classical Gibbs states is already \NP-hard in the worst case~\cite{Barahona1982,Sly2010}. 
When the temperature is sufficiently low, the quantum Gibbs state can have high overlap with the ground state, and the task of cooling to these temperatures should be \QMA-hard in the worst case. So we do not expect generically efficient quantum algorithms in these cases. In this line, recent developments~\cite{rouz2024,bergamaschi2024quantum,rajakumar2024gibbs} have used Gibbs samplers to provide new insights from the complexity theory perspective. Specifically, let $n$ be the number of qubits. \cite{bergamaschi2024quantum} proved that there exists a family of $\Or(\log \log n)$-local Hamiltonians such that sampling from their associated Gibbs states at constant temperatures can be achieved in time $T = \poly(n)$ by the Davies generator~\cite{Davies1974} with the block-encoding framework~\cite{ChenKastoryanoBrandaoEtAl2023}, but is classically intractable assuming no collapse of the polynomial hierarchy. 
Under the same assumptions, \cite{rajakumar2024gibbs} recently improved this result to $k$-local Hamiltonians at a constant temperature lower than the classically simulatable threshold. Furthermore, \cite{rouz2024} showed that simulating the Lindbladian in \cite{ChenKastoryanoGilyen2023} at $\beta = \Omega(\log(n))$ to runtime $T=\Or(\poly(n))$ for a $k$-local Hamiltonian is \BQP-complete.
Consequently, the development of practical quantum Gibbs samplers, which is the focus of our work, emerges as a critical endeavor in realizing these potential quantum advantages.}

There exist a series of algorithms~\cite{PoulinWocjan2009,ChowdhurySomma2017,VanApeldoornGilyenGriblingEtAl2017,GilyenSuLowEtAl2019,an2023quantum} that require only quantum access to $H$ without additional information (such as coupling operators). The cost of these algorithms is deterministic and scales as $\Or(\sqrt{N/\mc{Z}_\beta}\poly(\beta,\log\epsilon^{-1}))$. \re{These algorithms can perform efficiently in the extremely high-temperature regime, where $\beta$ approaches to zero and $\sqrt{N/\mc{Z}_\beta}\sim 1$ (assuming the smallest eigenvalue of $H$ is zero).} However, they become significantly less efficient in the low-temperature regime, where $\beta$ is large, and $\sqrt{N/\mc{Z}_\beta}\sim \sqrt{N}$. 
Furthermore, the factor $\sqrt{N/\mc{Z}_\beta}$ is explicitly present in the algorithm, and the average-case complexity is not very different from the worst-case scenario.
On the other hand,  the computational cost of open-system quantum dynamics is primarily determined by the mixing time, which can vary significantly across different systems. Besides the Lindblad dynamics, alternative open-system dynamics formalisms are also viable~\cite{Temme_2011,Man_2012,shtanko2023preparing,cubitt2023dissipative} for Gibbs state preparation.
We may anticipate that for certain classes of physical Hamiltonians, even at low temperatures, Gibbs sampling could be executed efficiently.  This possibility does not contradict the statement that preparing the ground state of $H$ (when $\beta=\infty$) remains QMA-hard in the worst-case scenario~\cite{KitaevShenVyalyi2002,AharonovGottesmanEtAl2009}.

\rre{We emphasize that in this work, an \emph{efficient} quantum Gibbs sampler means that the proposed Gibbs sampler can be efficiently simulated on a quantum computer, while in classical MCMC literature, \emph{efficient} often indicates rapid mixing. The rigorous bounding of the mixing time has been achieved for Davies generators for some general families of local commuting Hamiltonians~\cite{KastoryanoBrandao2016,BardetCapelGaoEtAl2023,kochanowski2024rapid},} whereas establishing the mixing time for non-commuting Hamiltonians at moderate or even low temperatures presents a substantial theoretical challenge. There are two interesting works along this line. The first is that Rouz\'e et al~\cite{rouz2024} established the spectral gap of KMS detailed balanced Lindbladians for certain $k$-local Hamiltonians at high temperatures using Lieb-Robinson estimates. 
The analysis in \cite{rouz2024} may be applicable in our setting and we plan to investigate this in detail in a future work. 
On the other hand, Bakshi et al~\cite{BakshiLiuMoitraEtAl2024} demonstrated that at the high temperature, the Gibbs State of certain $k$-local Hamiltonian becomes a linear combination of tensor products of stabilizer states, can be prepared in polynomial time using randomized classical algorithms. This suggests that exploring the relationship between the complexity of Gibbs states and mixing times could be a fruitful avenue for future research. 



It is also noteworthy that introducing a nontrivial coherent term $i[G, \dd]$ with $[G,\sigma_\beta] = 0$ to any
$\si_\beta$-KMS detailed balanced Lindbladian \rre{potentially disrupts} the detailed balance condition, but the Gibbs state remains a fixed point. The influence of the coherent term on the mixing time may be significant and its characterization \re{remains a largely open question. We refer interested readers to \cite{fang2024mixing,li2024quantum} for the recent progress on analyzing the coherent term effect via hypocoercivity.} 
Finally, the mixing time $t_{\rm mix}$ may be very different across different quantum Gibbs samplers. Both theoretical and numerical evidence are needed in order to quantify the mixing time and to compare the efficiency of quantum Gibbs samplers for physical systems of interest.

\subsection{Notation}


We denote by $\mc{H}$ a finite-dimensional Hilbert space with dimension $N = 2^n$, and by $\mc{B}(\mc{H})$ the space of bounded operators.  For simplicity, we usually write $A \ge 0$ (resp., $A > 0$) for a positive semidefinite (resp., definite) operator. The identity element in $\bh$ is denoted by $\mi$. Moreover, we denote by $\mc{D}(\mc{H})$ the set of quantum states (i.e., $\rho \ge 0$ with $\tr(\rho) = 1$), and $\mc{D}_+(\mc{H})$ the subset of full-rank states. Let $X^\dag$ be the adjoint operator of $X$. We denote by $\l\dd,\dd\r$ the Hilbert--Schmidt inner product on $\bh$: $\l X, Y\r := \tr (X^\dag Y)$. Then, with some slight abuse of notation, the adjoint of a superoperator $\Phi: \mc{B}(\mc{H}) \to \mc{B}(\mc{H})$ with respect to $\l \dd,\dd \r $ is also denoted by $\Phi^\dag$. Unless specified otherwise, $\norm{X}$ denotes the operator norm for $X \in \mc{B}(\mc{H})$, while $\norm{x}_s := (\sum_{j} \abs{x_j}^s)^{1/s}$ denotes the $s$-norm of the vector $x \in \C^N$ ($s \ge 1$). \rre{The diamond norm of a superoperator $\mc{E}$ on $\mc{B}(\mc{H})$ is defined by $\norm{\mc{E}}_{\Diamond} = \norm{\mc{E} \otimes {\rm id}}_1$, where ${\rm id}$ is the identity map on $\mc{B}(\mc{H})$.}

We adopt the following asymptotic notations beside the usual big $\Or$ one. We write $f=\Omega(g)$ if $g=\Or(f)$; $f=\Theta(g)$ if $f=\Or(g)$ and $g=\Or(f)$. The notations $\wt{\Or}$, $\wt{\Omega}$, $\wt{\Theta}$ are used to suppress subdominant polylogarithmic factors. Specifically, $f = \wt{\Or}(g)$ if $f = \Or(g\operatorname{polylog}(g))$; $f = \wt{\Omega}(g)$ if $f = \Omega(g\operatorname{polylog}(g))$; $f = \wt{\Theta}(g)$ if $f = \Theta(g\operatorname{polylog}(g))$. Note that these tilde notations do not remove or suppress dominant polylogarithmic factors. For instance, if $f=\Or(\log g \log\log g)$, then we write $f=\wt{\Or}(\log g)$ instead of $f=\wt{\Or}(1)$.   

\subsection*{Acknowledgement}
This material is based upon work supported by the Challenge Institute for Quantum Computation (CIQC) funded by National Science Foundation (NSF) through grant number OMA-2016245 (Z.D.),
National Science Foundation (NSF) award under grant number DMS-2012286 and CHE-2037263 (B.L.), the  Applied Mathematics Program of the US Department of Energy (DOE) Office of Advanced Scientific Computing Research under contract number DE-AC02-05CH1123, and a Google Quantum Research Award (L.L.). L.L. is a Simons investigator in Mathematics. We thank Anthony Chen, \re{Andr\'as Gily\'en}, Li Gao, Marius Junge and Jianfeng Lu for insightful discussions.   


\vspace{1em}
\noindent \textit{Note:} In completing this work, we became aware of the concurrent research by Chen, Doriguello, and Gily\'en~\cite{gilyen2024quantumgeneralizations}, which similarly aims to develop efficiently simulable quantum Gibbs samplers with a finite number of jump operators by coherent reweighing in both continuous-time and discrete-time settings, in analog with the classical Metropolis sampling.

\section{Structures of detailed balanced Lindbladians} \label{sec:dbclindblad}

In this section, we present the canonical forms of the Lindbladians with detailed balance conditions and discuss their feasibility for implementation on a quantum computer. 


We first recall that a quantum channel $\Phi: \bh \to \bh$ is a completely positive trace preserving (CPTP) map, and a quantum Markov semigroup (QMS)\footnote{\re{With some abuse of terminology, we will refer to both $\mc{P}_t$ and its adjoint dynamics $\mc{P}^\dag_t$ as a QMS.}} $(\mc{P}_t)_{t \ge 0}: \mc{B}(\mc{H}) \to \mc{B}(\mc{H})$, also called Lindblad dynamics, is defined as a $C_0$-semigroup of completely positive, unital maps. The generator $$\mc{L}(X): = \lim_{\re{t \to 0^+}} t^{-1}(\mc{P}_t (X) - X)$$ is usually referred to as the Lindbladian, which has the following GKSL form \cite{Lindblad1976,GoriniKossakowskiSudarshan1976}.
\begin{lem} \label{lem:lgksform}
For any generator $\mc{L}$ of a QMS $\mc{P}_t$, there exist operators $L_j, K \in \bh$ such that
\begin{equation}\label{GKSL_1}
    \mc{L} (X) = \Psi(X) + K^\dag X + X K\,,
\end{equation}
where $\Psi(\dd)$ is completely positive with the Kraus representation:
\begin{equation} \label{eq:kraus}
    \Psi(X) = \sum_{j \in \mc{J}} L_j^\dag X L_j\,,
\end{equation}
with $\mc{J}$ being an index set with cardinality $|\mc{J}| \le N^2$. \rre{In particular, letting $G:= \frac{K^\dag - K}{2 i}$, we have the Lindbladian form:
\begin{equation}\label{GKSL_2}
     \mc{L} (X) = i [G, X] + \sum_{j \in \mc{J}} \left( L_j^\dag X L_j - \frac{1}{2}\left\{L_j^\dag L_j, X  \right\} \right)\,,
\end{equation}
where $i [G, X]$ and $\mc{L} (X) -  i [G, X]$ are the coherent and dissipative parts of dynamics, respectively.} 
\end{lem}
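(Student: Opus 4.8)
The plan is to reduce \cref{lem:lgksform} to the finite-dimensional Gorini--Kossakowski--Sudarshan argument, exploiting that $\mc{H}$ is finite-dimensional. Since $\dim\mc{H} = N < \infty$, the $C_0$-semigroup $(\mc{P}_t)$ is automatically uniformly continuous, so $\mc{L}$ is bounded, $\mc{P}_t = e^{t\mc{L}}$, and all the scalar quantities below depend analytically on $t$. First I would fix a Hilbert--Schmidt orthonormal basis $\{F_j\}_{j=0}^{N^2-1}$ of $\bh$ normalized so that $F_0 = \mi/\sqrt{N}$ and $\Tr F_j = 0$ for $j \ge 1$. Because the maps $X \mapsto F_j^\dag X F_k$ form a basis of the space of linear maps on $\bh$, each $\mc{P}_t$ has a unique representation
\begin{equation*}
  \mc{P}_t(X) = \sum_{j,k=0}^{N^2-1} c_{jk}(t)\, F_j^\dag X F_k\,,
\end{equation*}
and, by Choi's theorem phrased in this basis, complete positivity of $\mc{P}_t$ is equivalent to the matrix $c(t) = (c_{jk}(t))$ being positive semidefinite. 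Since $\mc{P}_0 = \mathrm{id}$ and $N F_0^\dag X F_0 = X$, the matrix $c(0)$ has its only nonzero entry at position $(0,0)$, equal to $N$.

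Next I would differentiate at $t = 0$. Writing $a := \dot{c}(0)$, which is Hermitian because each $c(t)$ is, one gets $\mc{L}(X) = \sum_{j,k} a_{jk} F_j^\dag X F_k$. Separating the terms with $j = 0$ or $k = 0$ and using $F_0 = \mi/\sqrt N$ together with $a_{00} \in \RR$ and $\overline{a_{0k}} = a_{k0}$, these terms collapse into $K^\dag X + X K$ for
\begin{equation*}
  K := \frac{a_{00}}{2N}\,\mi + \frac{1}{\sqrt N}\sum_{k \ge 1} a_{0k}\, F_k\,,
\end{equation*}
while the remaining terms define $\Psi(X) = \sum_{j,k\ge 1} a_{jk} F_j^\dag X F_k$, giving \eqref{GKSL_1}. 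The key step, and the part I expect to be the main obstacle, is to show that the reduced block $(a_{jk})_{j,k \ge 1}$ is positive semidefinite. For this I would take any vector $v$ supported on the indices $\{1,\dots,N^2-1\}$ and consider $g(t) := \l v, c(t) v\r$; complete positivity gives $g(t) \ge 0$ for all $t \ge 0$, while $g(0) = 0$ because $c(0)$ is supported on the $(0,0)$ entry, whence the one-sided derivative satisfies $g'(0) = \l v, a v\r \ge 0$. It is precisely here that complete positivity of the whole semigroup, rather than mere positivity, is used.

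Finally I would diagonalize the positive semidefinite block, $(a_{jk})_{j,k\ge 1} = \sum_m \lambda_m\, u_m u_m^\dag$ with $\lambda_m \ge 0$, and set $L_m := \sqrt{\lambda_m}\sum_{j\ge 1} \overline{(u_m)_j}\, F_j$, which yields the Kraus form $\Psi(X) = \sum_m L_m^\dag X L_m$ with at most $N^2 - 1 \le N^2$ terms, establishing \eqref{eq:kraus}. To obtain \eqref{GKSL_2}, I would use unitality: $\mc{P}_t(\mi) = \mi$ forces $\mc{L}(\mi) = 0$, so \eqref{GKSL_1} evaluated at $X = \mi$ gives $\sum_m L_m^\dag L_m + (K + K^\dag) = 0$, i.e.\ the self-adjoint part of $K$ equals $-\tfrac12\sum_m L_m^\dag L_m$. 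Setting $G := \frac{K^\dag - K}{2i}$, which is self-adjoint, then gives $K = -\tfrac12\sum_m L_m^\dag L_m - i G$; substituting this into \eqref{GKSL_1} turns $K^\dag X + X K$ into $-\tfrac12\{\sum_m L_m^\dag L_m, X\} + i[G, X]$, which is exactly \eqref{GKSL_2}.
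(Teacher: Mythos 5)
Your proposal is correct. Note, however, that the paper does not actually prove the first half of the lemma: it invokes the classical GKSL theorem of Lindblad and Gorini--Kossakowski--Sudarshan for the existence of the decomposition \eqref{GKSL_1}--\eqref{eq:kraus}, and only supplies the short argument passing from \eqref{GKSL_1} to \eqref{GKSL_2} via unitality ($\mc{L}(\mi)=0$ forces $K = -\tfrac12\sum_j L_j^\dag L_j - iG$). Your final paragraph reproduces that argument exactly. What you add is a self-contained reconstruction of the cited part, and it is the standard finite-dimensional GKS proof: expand $\mc{P}_t$ in the basis of maps $X\mapsto F_j^\dag X F_k$, identify complete positivity with positive semidefiniteness of the coefficient matrix $c(t)$, differentiate at $t=0$, and extract positivity of the traceless block $(a_{jk})_{j,k\ge 1}$ from $g(t)=\l v,c(t)v\r\ge 0=g(0)$ for vectors $v$ with $v_0=0$ --- the step where complete positivity of the whole semigroup (not just of $\mc{L}$) enters. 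The collapse of the $j=0$ and $k=0$ terms into $K^\dag X + XK$ and the diagonalization yielding at most $N^2-1$ Kraus operators are both carried out correctly, and the analyticity of $c(t)$ needed for $\dot c(0)$ is justified by finite-dimensionality. So your route is more self-contained than the paper's, at the cost of reproving a classical result the authors were content to cite.
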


The operators $L_j$ in \eqref{GKSL_2} are called jump operators, which are non-unique for a given Lindbladian. \rre{The formula \eqref{GKSL_2} is a direct consequence of \eqref{GKSL_1}. In fact, from $\mc{L}(\mi) = 0$ with \eqref{GKSL_1}, the operator $K$ can be written as $K = V - i G$ with $V = - \frac{1}{2} \sum_{j \in \mc{J}} L_j^\dag L_j$, where $V := \frac{K^\dag + K}{2}$ and $G := \frac{K^\dag - K}{2 i}$ are self-adjoint operators. Plugging this into \eqref{GKSL_1} readily gives \eqref{GKSL_2}.}

For the purpose of quantum state preparation, we are interested in those QMS converging to a given full-rank state $\si > 0$, i.e., 
\begin{align} \label{eq:conver_qms}
    \lim_{t \to \infty} \mc{P}^\dag_t(\rho) = \si\,,\q \forall \rho \in \mc{D}(\mc{H})\,, 
\end{align}
equivalently, the irreducible QMS $\mc{P}^\dag_t$ \cite[Proposition 7.5]{wolf5quantum}. 
For  readers' convenience, we recall the definition of irreducibility and some equivalent conditions. We say that a quantum channel $\Phi$ is irreducible if all the orthogonal projections $P$ satisfying $\Phi(P \mc{B}(\mc{H}) P) \subset P \mc{B}(\mc{H}) P$ are trivial, i.e., zero or identity. The following results are adapted from \cite{wolf5quantum,zhang2023criteria}.

\begin{lem}
A QMS $\mc{P}^\dag_t = e^{t \mc{L}^\dag}$ is irreducible if and only if one of the following conditions holds:
\begin{itemize}
    \item $\mc{P}^\dag_t$ (as a quantum channel) is irreducible for some $t_0 > 0$. 
    \item There exists a unique full-rank invariant state $\si$, i.e., $\mc{L}^\dag(\si) = 0$. 
    \item The $\C$-algebra generated by the jump operators $\{L_j\}_{j \in \mc{J}}$ and $K = - \frac{1}{2}\sum_j L_j^\dag L_j - i G$ gives the whole algebra $\bh$. 
     \item The operators $\{L_j\}_{j \in \mc{J}}$ and $K$ have no trivial common invariant subspace. 
\end{itemize}
\end{lem}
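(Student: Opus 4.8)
The plan is to route all four conditions through a single hub: the absence of a nontrivial common invariant subspace (the last bullet). First I would establish the bridge between the analytic notion of QMS-irreducibility and this purely algebraic condition. Concretely, writing the Schr\"odinger generator as $\mc{L}^\dag(\rho) = K\rho + \rho K^\dag + \sum_{j} L_j \rho L_j^\dag$, I claim that for an orthogonal projection $P$ with range $\mc{K}$, one has $\mc{P}^\dag_t(P \bh P) \subseteq P \bh P$ for all $t \ge 0$ if and only if $\mc{K}$ is simultaneously invariant under every $L_j$ and under $K$. The ``if'' direction follows by inserting the generator into the Dyson--Duhamel series for $e^{t \mc{L}^\dag}$ and checking term-by-term that $L_j \mc{K} \subseteq \mc{K}$ together with $K \mc{K} \subseteq \mc{K}$ (invariance under the no-jump flow $e^{tK}$) keeps the support inside $\mc{K}$. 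The ``only if'' direction comes from differentiating at $t=0$: testing $\mc{L}^\dag(P)$ against $P_\perp = \mi - P$ gives $\sum_j (P_\perp L_j P)(P_\perp L_j P)^\dag = 0$, forcing $L_j \mc{K} \subseteq \mc{K}$, and the remaining part then yields $K \mc{K} \subseteq \mc{K}$. This directly identifies QMS-irreducibility (no nontrivial invariant projection for all $t$) with the fourth bullet.

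Next, the equivalence of the third and fourth bullets is Burnside's theorem: the operators $\{L_j\} \cup \{K\}$ have no nontrivial common invariant subspace precisely when the subalgebra of $\bh$ they generate acts irreducibly on $\mc{H}$, and by Burnside an irreducibly acting subalgebra of $M_N(\C)$ must equal all of $M_N(\C) = \bh$ (no $\ast$-closure is required). This step is essentially immediate once the hub equivalence is in place.

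The equivalence with the second bullet (a unique full-rank invariant state) is where I would invoke a noncommutative Perron--Frobenius argument. Existence of at least one invariant state is automatic by compactness. Assuming irreducibility, I would first show any invariant state $\sigma$ is faithful: its support projection is an enclosure, hence trivial, so $\mathrm{supp}(\sigma) = \mc{H}$. Uniqueness then follows by the standard extremal trick: given invariant states $\sigma_1, \sigma_2$, take the largest $\lambda$ with $\sigma_1 - \lambda \sigma_2 \ge 0$; the resulting invariant positive operator has nontrivial kernel, contradicting faithfulness unless it vanishes, forcing $\sigma_1 = \sigma_2$. Conversely, a unique full-rank invariant state rules out nontrivial enclosures, since a proper enclosure would carry its own invariant state supported on a strict subspace, producing a second, non-faithful invariant state.

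Finally, for the first bullet I would show QMS-irreducibility is equivalent to irreducibility of the single channel $\mc{P}^\dag_{t_0}$ for some $t_0 > 0$. One direction is trivial: a semigroup enclosure is in particular $\mc{P}^\dag_{t_0}$-invariant, so channel-irreducibility at any $t_0$ implies QMS-irreducibility. The reverse direction is the \emph{main obstacle}: I expect to need the norm-continuity of the finite-dimensional semigroup together with quantum Perron--Frobenius/primitivity theory to show that once the QMS is irreducible (equivalently, has a unique faithful invariant state) the channel $\mc{P}^\dag_{t_0}$ is irreducible for $t_0 > 0$ --- the continuous-time analogue of $e^{tQ}$ being a strictly positive (primitive) stochastic matrix whenever the generator $Q$ is irreducible. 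The delicate point is ruling out ``spurious'' enclosures of a fixed $\mc{P}^\dag_{t_0}$ that are not invariant for the full family; I would control these by noting that subharmonicity of a projection for $\mc{P}^\dag_{t_0}$ propagates, via positivity of the semigroup, to all integer multiples of $t_0$, and then upgrade this to genuine semigroup invariance using continuity and uniqueness of the invariant state, which the hub equivalence then forbids.
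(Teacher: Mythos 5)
The paper does not actually prove this lemma --- it is stated as ``adapted from'' Wolf's lecture notes and Zhang--Barthel, so there is no in-paper argument to compare against. Judged on its own terms, most of your proposal is sound: the hub lemma (semigroup-invariant corners $P\mc{B}(\mc{H})P$ correspond exactly to common invariant subspaces of $\{L_j\}$ and $K$, via the Dyson expansion for ``if'' and differentiation at $t=0$ plus $\sum_j (P_\perp L_j P)(P_\perp L_j P)^\dag=0$ for ``only if'') is correct; Burnside's theorem correctly links the third and fourth bullets; and the Perron--Frobenius argument for the second bullet (support projections of invariant states are enclosures, plus the extremal-$\lambda$ uniqueness trick) is the standard and correct route. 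These are essentially the same ingredients used in the cited references.

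The genuine gap is in the last step, and it is exactly where you flag the ``main obstacle'': your proposed mechanism for showing that QMS-irreducibility implies channel-irreducibility of $\mc{P}^\dag_{t_0}$ does not work as stated. Knowing that a projection $P$ is subharmonic for $\mc{P}^\dag_{t_0}$ propagates trivially to the times $n t_0$, but the set of times at which $P\mc{B}(\mc{H})P$ is invariant is merely a closed additive subsemigroup of $[0,\infty)$; norm-continuity does not let you interpolate from $\{n t_0\}$ to all $t\ge 0$, so ``upgrade to genuine semigroup invariance using continuity'' is not a valid step. Two standard repairs are available. (i) Use the trajectory/Dyson expansion you already set up: for any $t>0$ the support of $e^{t\mc{L}^\dag}(\ketbra{\psi}{\psi})$ is the span of all vectors $e^{s_k K}L_{j_k}\cdots L_{j_1}e^{s_0 K}\psi$, and by analyticity of $s\mapsto e^{sK}\psi$ this span equals the smallest subspace containing $\psi$ and invariant under $K$ and all $L_j$ --- independent of $t>0$. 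Under the fourth bullet this is all of $\mc{H}$, so $e^{t\mc{L}^\dag}(\rho)>0$ for every state and every $t>0$, making $\mc{P}^\dag_{t_0}$ irreducible for \emph{every} $t_0$. (ii) Alternatively, since the lemma only asks for \emph{some} $t_0$, choose $t_0$ so that no nonzero eigenvalue of $\mc{L}^\dag$ lies in $\frac{2\pi i}{t_0}\ZZ$; then $\mathrm{Fix}(\mc{P}^\dag_{t_0})=\ker\mc{L}^\dag=\CC\si$ with $\si$ faithful, and a nontrivial enclosure of $\mc{P}^\dag_{t_0}$ would produce a second, non-faithful fixed state --- a contradiction. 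Either patch closes the argument; as written, the final step is not yet a proof.
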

\begin{lem}[{\cite[Theorem 7.2]{wolf5quantum}}]\label{lem:converg}
If the QMS $\mc{P}^\dag_t$ admits a full-rank invariant state, then 
\begin{equation*}
    \{G, L_j, L_j^\dag\}' = \ker(\mc{L})\,,
\end{equation*}
where the commutant $\{G, L_j, L_j^\dag\}'$ is defined by all the operators commuting with $L_j$, $L_j^\dag$ and $G$. It follows that in this case, the irreducibility is also equivalent to 
\begin{equation}  \label{eq:irredu}
    \{G, L_j, L_j^\dag\}' = \{z \mi\,;\ z \in \CC\}\,.
\end{equation}
\end{lem}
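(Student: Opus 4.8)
The plan is to prove $\{G, L_j, L_j^\dag\}' = \ker(\mc{L})$ by two inclusions, and then read off the equivalence \eqref{eq:irredu} from the preceding irreducibility lemma. The inclusion $\{G, L_j, L_j^\dag\}' \subseteq \ker(\mc{L})$ is immediate from the GKSL form \eqref{GKSL_2} and requires no hypothesis on invariant states: if $X$ commutes with $G$ and with every $L_j$ and $L_j^\dag$, then $i[G,X]=0$, and for each $j$ one has $L_j^\dag X L_j = L_j^\dag L_j X = \tfrac12\{L_j^\dag L_j, X\}$, so the dissipative part vanishes term by term and $\mc{L}(X)=0$.

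The substance is the reverse inclusion $\ker(\mc{L}) \subseteq \{G, L_j, L_j^\dag\}'$. Here I would first record the gradient-form (carr\'e du champ) identity
\[
\mc{L}(X^\dag X) - X^\dag \mc{L}(X) - \mc{L}(X^\dag) X = \sum_{j \in \mc{J}} [L_j, X]^\dag [L_j, X]\,,
\]
valid for all $X \in \bh$. This is a direct computation from \eqref{GKSL_2}: the coherent contributions $i[G, X^\dag X] - i X^\dag[G,X] - i[G,X^\dag]X$ cancel by the Leibniz rule for the commutator, and the dissipative contributions reorganize precisely into $\sum_j [L_j,X]^\dag[L_j,X]$. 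Since $\mc{L}$ is $*$-preserving, $\mc{L}(X^\dag)=\mc{L}(X)^\dag$, so $\ker(\mc{L})$ is closed under adjoints; hence for $X\in\ker(\mc{L})$ both $\mc{L}(X)=0$ and $\mc{L}(X^\dag)=0$, and the identity collapses to $\mc{L}(X^\dag X) = \sum_j [L_j,X]^\dag[L_j,X] \ge 0$. Pairing against the full-rank invariant state $\si$ and using $\mc{L}^\dag(\si)=0$, which gives $\tr(\si\,\mc{L}(Y))=0$ for every $Y$, with the choice $Y = X^\dag X$ yields $\sum_j \tr\big(\si\,[L_j,X]^\dag[L_j,X]\big) = 0$. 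Each summand is nonnegative, and because $\si$ is strictly positive one has $\tr(\si M)\ge \lambda_{\min}(\si)\,\tr(M)$ for $M\ge 0$; this forces $[L_j,X]=0$ for every $j$. Taking adjoints gives $[L_j^\dag, X]=0$ as well, and then $\mc{L}(X)=0$ reduces to $i[G,X]=0$, so $X\in\{G, L_j, L_j^\dag\}'$.

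For the final claim \eqref{eq:irredu}, I would combine the established equality $\ker(\mc{L}) = \{G, L_j, L_j^\dag\}'$ with the irreducibility criterion from the preceding lemma (irreducibility $\iff$ uniqueness of the full-rank invariant state) and the duality $\dim\ker(\mc{L}) = \dim\ker(\mc{L}^\dag)$. Since $\mi\in\ker(\mc{L})$ always, uniqueness of the invariant state means $\dim\ker(\mc{L}^\dag)=1$, equivalently $\ker(\mc{L})=\CC\mi$, i.e. $\{G, L_j, L_j^\dag\}' = \{z\mi : z\in\CC\}$. The main obstacle is the reverse inclusion: deriving the gradient-form identity correctly and, more importantly, using strict positivity of $\si$ to pass from the vanishing of a $\si$-weighted sum of nonnegative operators to the pointwise vanishing of each commutator $[L_j,X]$. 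Mere positivity $\si\ge 0$ would be insufficient, so the full-rank hypothesis is genuinely used precisely at this step.
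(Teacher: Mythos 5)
Your proof is correct. The paper itself offers no proof of this lemma, citing it directly from \cite[Theorem 7.2]{wolf5quantum}, and your argument --- the trivial inclusion from the GKSL form, the dissipation (carr\'e du champ) identity $\mc{L}(X^\dag X) - X^\dag \mc{L}(X) - \mc{L}(X^\dag)X = \sum_j [L_j,X]^\dag[L_j,X]$, pairing against the faithful invariant state to kill each commutator, and the dimension count $\ker(\mc{L}) = \CC\mi$ for the irreducibility equivalence --- is precisely the standard argument underlying the cited result, with the full-rank hypothesis used exactly where it must be.
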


We next discuss the quantum detailed balance condition (DBC), which provides a sufficient criterion to guarantee steady states of a Lindbladian.
For a given full-rank state 
$\si \in \mc{D}_+(\mc{H})$, we define the modular operator: 
\begin{equation*} 
    \Delta_{\si}(X) = \si X \si^{-1}:\ \mc{B}(\mc{H}) \to \mc{B}(\mc{H})\,,
\end{equation*}
and the weighting operator:
\begin{align*} 
    \Gamma_\si X = \si^{\frac{1}{2}}X\si^{\frac{1}{2}}:\ \mc{B}(\mc{H}) \to \mc{B}(\mc{H})\,.
\end{align*} 
We also let $L_{\si}(X) = \si X$ and $R_{\si}(X) = X \si$ be the left and right multiplication operators, respectively. 
Then, for any $f: (0,\infty) \to (0,\infty)$ satisfying $f(1) = 1$ and $\si \in \mc{D}_+(\mc{H})$, we define the operator: 
\begin{align} \label{def:operator_kernel}
    J_\si^f := R_\si f(\Delta_\si): \bh \to \bh\,,
\end{align} 
and the associated inner product:
\begin{align} \label{eq:general_inner}
    \l X, Y\r_{\si,f} := \l X, J_\si^f (Y)\r\,.
\end{align}
In particular, for $f = x^{1-s}$ with $s \in \R$, the above inner product gives 
\begin{align} \label{def:s-inner}
    \l X, Y\r_{\si,s} := \tr(\si^s X^\dag \si^{1-s} Y)\,, \q \forall X, Y \in \bh\,,
\end{align}
where $\l \dd, \dd\r_{\si,1}$ and $\l \dd, \dd \r_{\si,1/2}$ are the Gelfand-Naimark-Segal (GNS) and Kubo-Martin-Schwinger (KMS) inner products, respectively. 

\begin{defn} \label{def:sidbc}
A QMS $\mc{P}_t= e^{t \mc{L}}$ satisfies the $J_\si^f$-DBC for some $\si \in \mc{D}_+(\mc{H})$ if the Lindbladian $\mc{L}$ is self-adjoint with respect to the inner product $\l\dd, \dd\r_{\si,f}$, equivalently, 
\begin{align*}
    J_\si^f  \mc{L} = \mc{L}^\dag  J_\si^f\,.
\end{align*}
In the cases of $f = 1$ and $x^{1/2}$, it is called $\si$-{\rm GNS DBC} and $\si$-{\rm KMS DBC}, respectively.
\end{defn}
By the above definition, we find that if $\mc{P}_t$ satisfies the $J_\si^f$-DBC, there holds
\begin{equation*}
  0 = \l X, \mc{L}(\mi)\r_{\si,f} =  \l \mc{L}(X), J_\si^f(\mi)\r = \l X, \mc{L}^\dag(\si)\r\,,\q \forall X \in \bh\,,
\end{equation*}
which gives $\mc{L}^\dag(\si) = 0$, namely, $\si$ is an invariant state of $\mc{P}_t^\dag$. The following lemma relates different concepts of detailed balance conditions; see \cite[Lemma 2.5 and Theorem 2.9]{CarlenMaas2017}. 

\begin{lem} \label{lem:self_adjoint}
Let $\si \in \mc{D}_+(\mc{H})$ be a full-rank quantum state and $\mc{P}_t= e^{t \mc{L}}$ be a QMS. Then, 
\begin{itemize}
    \item If $\mc{P}_t$ satisfies the $\si$-{\rm GNS DBC}, then it satisfies the $J_\si^f$-DBC for any $f$ and the generator $\mc{L}$ commutes with the modular operator $\Delta_\si$. 
    \item If $\mc{P}_t$ satisfies the $J_\si^f$-DBC for $f = x^{1-s}$, $s \in [0,1]\backslash \{\frac{1}{2}\}$, then it also satisfies $\si$-{\rm GNS DBC}. 
\end{itemize}
\end{lem}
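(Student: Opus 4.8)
The plan is to reduce both assertions to a single statement: any $J_\si^f$-DBC with $f = x^{1-s}$ and $s \neq 1/2$ already forces $\mc{L}$ to commute with the modular operator $\Delta_\si$. First I would record the algebraic shape of the kernel for $f = x^{1-s}$. Setting $r := \tfrac12 - s$, one finds $J_\si^f = R_\si \Delta_\si^{1-s} = \Gamma_\si \Delta_\si^{r}$, using $R_\si = \Gamma_\si\Delta_\si^{-1/2}$ and that $\Gamma_\si,\Delta_\si$ commute (both are built from the commuting multiplications $L_\si, R_\si$). Introducing the KMS-dual superoperator $\mc{K} := \Gamma_\si^{-1} \mc{L}^\dag \Gamma_\si$ and substituting $J_\si^f = \Gamma_\si\Delta_\si^r$ into the DBC identity $J_\si^f \mc{L} = \mc{L}^\dag J_\si^f$ of \cref{def:sidbc}, I obtain the reformulation
\[
 \mc{K} = \Delta_\si^{r}\, \mc{L}\, \Delta_\si^{-r}, \qquad r = \tfrac12 - s .
\]
At $s = 1/2$ (i.e.\ $r = 0$) this is exactly the $\si$-KMS DBC $\mc{K} = \mc{L}$, and it relates $\mc{L}$ to $\Delta_\si$ not at all; the whole content is what happens when $r \neq 0$.

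The crux, and what I expect to be the main obstacle, is to extract modular commutation from this single non-KMS relation. My key observation is that one does not need the full completely positive structure here, only Hermiticity preservation. Let $\Theta(X) = X^\dag$ be the $\dag$-involution on $\bh$. By the GKSL form of \cref{lem:lgksform}, both $\mc{L}$ and $\mc{L}^\dag$ are Hermiticity preserving, i.e.\ $\Theta\mc{L}\Theta = \mc{L}$ and $\Theta\mc{L}^\dag\Theta = \mc{L}^\dag$; since $\Gamma_\si$ is Hermiticity preserving as well, so is $\mc{K}$, that is $\Theta\mc{K}\Theta = \mc{K}$. A one-line check gives $\Theta\,\Delta_\si^{r}\,\Theta = \Delta_\si^{-r}$ (because $(\si^r X^\dag \si^{-r})^\dag = \si^{-r}X\si^{r}$). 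Conjugating the displayed identity by $\Theta$ therefore yields the mirror relation $\mc{K} = \Delta_\si^{-r}\mc{L}\Delta_\si^{r}$. Equating the two expressions for $\mc{K}$ gives $\Delta_\si^{r}\mc{L}\Delta_\si^{-r} = \Delta_\si^{-r}\mc{L}\Delta_\si^{r}$, hence $[\mc{L}, \Delta_\si^{2r}] = 0$. Since $r \neq 0$ and $x \mapsto x^{2r}$ is injective on $(0,\infty)$, the positive operator $\Delta_\si^{2r}$ shares the spectral projections of $\Delta_\si$, so commutation with $\Delta_\si^{2r}$ upgrades to $[\mc{L}, \Delta_\si] = 0$.

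With commutation in hand the remainder is routine bookkeeping. For the second bullet, once $[\mc{L},\Delta_\si]=0$ the reformulation collapses to $\mc{K} = \mc{L}$, i.e.\ $\mc{L}^\dag \Gamma_\si = \Gamma_\si \mc{L}$; combining this with $R_\si = \Gamma_\si \Delta_\si^{-1/2}$ and commutation yields $\mc{L}^\dag R_\si = \Gamma_\si \mc{L} \Delta_\si^{-1/2} = R_\si \mc{L}$, which is precisely the $\si$-GNS DBC. For the first bullet, the GNS hypothesis is the case $s = 1$ (so $r = -\tfrac12 \neq 0$), so the argument of the second paragraph already delivers $[\mc{L}, \Delta_\si]=0$; then for any admissible $f$ I would write $J_\si^f = R_\si f(\Delta_\si)$, note that $\mc{L}$ commutes with $f(\Delta_\si)$, and chain $\mc{L}^\dag J_\si^f = \mc{L}^\dag R_\si f(\Delta_\si) = R_\si \mc{L}\, f(\Delta_\si) = R_\si f(\Delta_\si)\,\mc{L} = J_\si^f \mc{L}$ (the middle equality being the GNS DBC just established), giving the $J_\si^f$-DBC for every $f$. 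The only point needing care is the well-definedness of the real powers $\Delta_\si^r$, which is harmless since $\si$ is full rank; the genuine content is entirely concentrated in the reflection argument relating the $r$ and $-r$ conjugations.
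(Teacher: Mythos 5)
Your proof is correct. The paper does not prove this lemma itself but defers to \cite[Lemma 2.5 and Theorem 2.9]{CarlenMaas2017}; your argument --- reducing both bullets to $[\mc{L},\Delta_\si]=0$ by rewriting the DBC as $\mc{K}=\Delta_\si^{r}\mc{L}\Delta_\si^{-r}$ and exploiting the reflection $\Theta\Delta_\si^{r}\Theta=\Delta_\si^{-r}$ together with Hermiticity preservation of $\mc{L}$ --- is essentially the standard proof given in that reference, and all the individual steps (the identity $J_\si^f=\Gamma_\si\Delta_\si^{1/2-s}$, the upgrade from $[\mc{L},\Delta_\si^{2r}]=0$ to $[\mc{L},\Delta_\si]=0$ via injectivity of $x\mapsto x^{2r}$ on the spectrum, and the final bookkeeping for general $f$) check out.
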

The above lemma means that the quantum DBC for the inner products $\l \dd, \dd \r_{\si,s}$ with $s \in [0,1]\backslash \{\frac{1}{2}\}$ are all equivalent, and they are stronger notions than $\si$-{\rm KMS DBC} (i.e., $s = \frac{1}{2}$). In fact, one can show that the class of QMS with $\si$-{\rm KMS DBC} is strictly larger than the class of QMS satisfying $\si$-GNS DBC \cite[Appendix B]{CarlenMaas2017}. These properties underscore the special roles played by the KMS and GNS detailed balance when analyzing Lindblad dynamics. 

\subsection{Davies generator and GNS-detailed balance}\label{sec:daveisgns}

Let $H$ be a quantum Hamiltonian on the Hilbert space $\mc{H}$ with the eigendecomposition:
\begin{align}  \label{eq:hamihh}
    H = \sum_{i} \lad_ i P_i\,,
\end{align}
where $P_i$ is the orthogonal projector to the eigenspace associated with the energy $\lad_i$. Given an inverse temperature $\beta > 0$, the corresponding Gibbs state $\si_\beta$ is defined by 
\begin{equation} \label{eq:gibbstate}
 \si_\beta : = e^{-\beta H}/\mc{Z}_\beta\,,
\end{equation} 
with $\mc{Z}_\beta = \tr (e^{-\beta H})$ being the normalization constant called the partition function. It is easy to see that any full-rank quantum state can be written as a Gibbs state $\si = e^{-h}$ with $h = - \log(\si)$. 

Recall that our main aim is to design an efficient quantum Gibbs sampler via QMS. An important class of Lindbladians for this purpose are Davies generators, which describe the weak coupling limit of a system coupled to a large thermal bath \cite{Davies1976,davies1979generators}. It has natural applications in thermal state preparations but with inherent difficulties from the energy-time uncertainty principle \cite{MozgunovLidar2020,RallWangWocjan2023,ChenKastoryanoBrandaoEtAl2023}; see \cref{rem:ineffi} below. 
We next review the canonical form of Davies generators and show that they essentially characterize the Lindbladians with GNS-DBC \cite{kossakowski1977quantum}. 

For the Hamiltonian \eqref{eq:hamihh}, we define the set of Bohr frequencies by 
\begin{equation} \label{def:Bohr}
    B_H = \left\{\nu = \lad_i - \lad_j\,;\ \lad_i,\lad_j \in {\rm Spec}(H) \right\}\,,
\end{equation}
which is a sequence of real numbers symmetric with respect to $0$. Here, ${\rm Spec}(H)$ denotes the spectral set of $H$. Then, for any bounded operator $A \in \mc{B}(\mc{H})$, one can write 
\begin{align} \label{eq:jumpdecom}
    A = \sum_{\lad_i,\lad_j \in {\rm Spec}(H)} P_i A P_j = \sum_{\nu \in B_H} A_\nu\,, 
\end{align}
where 
\begin{equation}\label{eqn:A_v}
    A_\nu := \sum_{\lad_i - \lad_j = \nu}  P_i A P_j\,, \quad \text{with} \quad (A_{\nu})^\dag = (A^\dag)_{-\nu}\,,
\end{equation}
is an eigenstate of the modular
operator $\Delta_{\sigma_\beta}$ (see \cref{eq:rela2} below).
Such a decomposition \eqref{eq:jumpdecom} naturally relates to the Heisenberg evolution of $A$:
\begin{align} \label{eq:Heisenbergevo}
    A(t): =  e^{iH t} A e^{-iH t} & = \sum_{\nu\in B_H} A_{\nu}e^{i\nu t}\,, \quad \text{equivalently}, \quad [H,A_{\nu}] = \nu A_{\nu}\,.
\end{align}
\re{For our later use, we introduce a weighted integral $A_f$ of the Heisenberg evolution $A(t)$ with respect to an $L^1$-integrable function $f$: for $A \in \bh$, 
\begin{equation} \label{defaf}
    A_{f}:= \int_{-\infty}^{\infty} A(t) f(t)\ud t
 = \sum_{\nu \in B_H} A_{\nu} \widehat{f}(\nu)\,,
\end{equation}
where $\widehat{f}(\nu)$ is the inverse Fourier transform of $f(t)$:
\begin{equation*}
    \widehat{f}(\nu) = \int_{-\infty}^\infty f(t) e^{i \nu t} \ud t\,.
\end{equation*}
Moreover, by a direct computation, we have 
\begin{equation} \label{eq:fcalculus}
     A_{f} = \widehat{f}(-\beta^{-1}\log \Delta_{\si_\beta}) A\,,
\end{equation}
where the operator $\widehat{f}(-\beta^{-1}\log \Delta_{\si_\beta})$ is defined by functional calculus based on the modular operator $\Delta_\si$, noting that the Fourier transform of an $L^1$ function is bounded and uniformly continuous.}




The Davies Lindbladian is generally of the form: 
\begin{align} \label{eq:davies}
    \mc{L}_\beta(X) := i [H, X] + \sum_{a \in \mc{A}} \sum_{\nu \in B_H} \mc{L}_{a,\nu}(X)\,, \q X \in \mc{B}(\mc{H})\,,
\end{align}
where the dissipative generators are given by  
\begin{align} \label{eq:dissipative}
    \mc{L}_{a,\nu}(X)  = \gamma_a(\nu) \left( (A_{\nu}^a)^\dag X A_{\nu}^a - \frac{1}{2}\left\{(A_{\nu}^a)^\dag A_{\nu}^a, X\right\} \right)\,.
\end{align}
Here, the index $a \in \mc{A}$ sums over all the coupling operators $A^a \in \mc{B}(\mc{H})$ to the environment that satisfy $\{A^a\}_{a \in \mc{A}} = \{(A^a)^{\dag}\}_{a \in \mc{A}}$, and $\gamma_a(\cdot)$ are the Fourier transforms of 
the bath correlation functions, which are nonnegative and bounded. The jump operators $\{A_{\nu}^a\}$ associated with a coupling $A^a \in \mc{B}(\mc{H})$ are defined by \cref{eq:jumpdecom,eqn:A_v}:
\begin{equation}
    A^a = \sum_{\lad_i,\lad_j \in {\rm Spec}(H)} P_i A^a P_j = \sum_{\nu \in B_H} A_{\nu}^a,
\end{equation}
which gives the transitions from the eigenvectors of $H$ with energy $E$ to those with $E+\nu$. In addition, the following relations hold, for any $a \in \mc{A}$ and $\nu$,
    \begin{equation} \label{eq:rela1}
        \gamma_{a} (- \nu) = e^{\beta \nu} \gamma_a(\nu)\,, 
    \end{equation}
    and    
    \begin{align} \label{eq:rela2}
        \Delta_{\si_\beta} (A_{\nu}^a) = e^{-\beta \nu} A_{\nu}^a\,,
    \end{align}
    that is, $A_{\nu}^a$ is an eigenvector of $\Delta_{\si_\beta}$ with the eigenvalue $e^{-\beta \nu}$. 
    Note that the condition \eqref{eq:rela2} holds by the definition of $A_{\nu}^a$, while \re{the other one} \eqref{eq:rela1} is often referred to as KMS condition\footnote{The KMS condition should not be confused with the KMS detailed balance condition. These two terms are mathematically unrelated.} \cite{kossakowski1977quantum}. 
    \re{These two conditions,} as we shall see below, ensure the GNS-reversibility of the Lindbladian.   
    
The following canonical form for the QMS that satisfies the $\si$-{\rm GNS DBC} is due to Alicki \cite{Alicki1976}.
\begin{lem}\label{lem:qms_gns}
For a Lindbladian $\mc{L}$ satisfying $\si_\beta$-{\rm GNS DBC}, there holds 
 \begin{align} \label{eq:structure}
        \mc{L}(X) = \sum_{j \in \mc{J}} \left(e^{-\omega_j/2}L_j^\dag[X,L_j] + e^{\omega_j/2}[L_j,X]L^\dag_j\right)\,,
    \end{align}
with $\ww_j \in \R$, where \re{$\mc{J}$ is an index set with cardinality $|\mc{J}| \le N^2 -1$}, and 
$L_j \in \mc{B}(\mc{H})$ satisfies 
\begin{align} \label{eq:eigmodular}
    \Delta_{\si_\beta}(L_j) =  e^{-\omega_j}L_j\,,\quad \tr(L^\dag_j L_k) = c_j\d_{j,k}\,,\quad \tr(L_j) = 0\,,
\end{align}
for normalization constants $c_j > 0$, and for each $j$, there exists $j' \in \mc{J}$ such that 
\begin{align} \label{eq:adjoint_index}
    L_j^\dag = L_{j'}\,,\quad \ww_j = - \ww_{j'}\,.
\end{align}
\end{lem}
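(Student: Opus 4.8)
The plan is to begin from an arbitrary GKSL representation of $\mc{L}$ and progressively normalize the jump operators until the GNS self-adjointness pins them into the claimed form. The natural arena for the argument is $\bh$ equipped with the GNS inner product $\langle X, Y\rangle_{\si_\beta,1} = \tr(\si_\beta X^\dag Y)$ from \eqref{def:s-inner}, on which the $\si_\beta$-GNS DBC asserts precisely that $\mc{L}$ is self-adjoint.

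First, I would exploit the modular structure. By \cref{lem:self_adjoint}, the $\si_\beta$-GNS DBC forces $[\mc{L}, \Delta_{\si_\beta}] = 0$. Since $\Delta_{\si_\beta}$ is diagonalizable with positive spectrum, $\bh$ splits into an orthogonal direct sum of eigenspaces $\mc{B}_\omega = \{X : \Delta_{\si_\beta}(X) = e^{-\omega}X\}$, where $\omega = \beta\nu$ runs over the scaled Bohr frequencies \eqref{def:Bohr} and, by \eqref{eq:rela2}, $\mc{B}_\omega$ is spanned by operators of the type $A_\nu$. The commutation relation means $\mc{L}$ preserves each $\mc{B}_\omega$. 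I would then use the gauge freedom in \cref{lem:lgksform} — the jumps $\{L_j\}$ are fixed only up to a unitary mixing together with additive scalar shifts that can be absorbed into the coherent part — to choose every $L_j$ to be a modular eigenoperator lying in a single $\mc{B}_{\omega_j}$, to subtract its trace (the trace component lives in $\mc{B}_0 \ni \mi$ and reassigns only to the identity/coherent sector, giving $\tr(L_j)=0$), and to Gram--Schmidt within each eigenspace so that $\tr(L_j^\dag L_k) = c_j\d_{j,k}$ with $c_j > 0$. Because the traceless operators form an $(N^2-1)$-dimensional space, this bounds $|\mc{J}| \le N^2-1$.

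Second — the crux — I would impose $\langle X, \mc{L}(Y)\rangle_{\si_\beta,1} = \langle \mc{L}(X), Y\rangle_{\si_\beta,1}$ and read off the constraints. The decisive computational device is the eigenoperator identity $\si_\beta L_j = e^{-\omega_j} L_j \si_\beta$, equivalent to $\Delta_{\si_\beta}(L_j)=e^{-\omega_j}L_j$, which commutes $\si_\beta$ through a jump at the price of a factor $e^{\pm\omega_j}$. A preliminary observation is that a coherent part $i[G,\cdot]$ with $[G,\si_\beta]=0$ (which one checks is forced by the modular commutation) is \emph{anti}-self-adjoint for $\langle\cdot,\cdot\rangle_{\si_\beta,1}$, so the GNS self-adjointness of $\mc{L}$ makes it vanish and leaves a purely dissipative generator. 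Substituting the eigenoperator GKSL form into the GNS inner product and repeatedly applying $\si_\beta L_j = e^{-\omega_j}L_j\si_\beta$ then converts the symmetry condition into algebraic identities among the $L_j$; these force the family $\{L_j\}$ to be stable under the adjoint up to relabeling, producing the involution $j\mapsto j'$ with $L_{j'}=L_j^\dag$ and $\omega_{j'}=-\omega_j$, and they fix the surviving coefficients to be the balanced modular weights $e^{\pm\omega_j/2}$. Reassembling the paired terms $j \leftrightarrow j'$ recombines the one-sided products into anticommutators and reproduces \eqref{eq:structure} exactly; conversely, a direct expansion of \eqref{eq:structure} (pairing each $j$ with $j'$) verifies that it is purely dissipative, completely positive, and GNS self-adjoint, a useful consistency check.

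I expect the second step to be the main obstacle: disentangling the genuine constraints of GNS symmetry from the gauge ambiguity of the GKSL representation, and in particular showing that the weights must be the \emph{balanced} exponentials $e^{\pm\omega_j/2}$ rather than an arbitrary positive factor with the correct ratio $e^{\omega_j}$. A careful bookkeeping of the $e^{\pm\omega_j/2}$ factors, together with verifying that the anticommutator terms reassemble correctly once the pairing is applied and that the normalizations $c_j>0$ and tracelessness survive the symmetrization, is where the real work concentrates. An alternative and perhaps cleaner route would encode the symmetry directly in the associated GNS Dirichlet form $\mc{E}(X,Y) = -\langle X,\mc{L}(Y)\rangle_{\si_\beta,1}$ and diagonalize it simultaneously with $\Delta_{\si_\beta}$, thereby reading off the eigenoperator pairs and their weights in one stroke.
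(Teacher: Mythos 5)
The paper does not prove \cref{lem:qms_gns}; it quotes the result from Alicki's 1976 paper, so there is no internal proof to compare against. Your outline is, in substance, the standard argument (Alicki; see also Carlen--Maas, Theorem 3.1): modular commutation from \cref{lem:self_adjoint}, reduction of the jumps to traceless, mutually orthogonal modular eigenoperators, and then extraction of the adjoint pairing $j\mapsto j'$ and the balanced weights $e^{\pm\omega_j/2}$ from GNS self-adjointness via the exchange rule $\si_\beta L_j = e^{-\omega_j}L_j\si_\beta$. The bound $|\mc{J}|\le N^2-1$ from linear independence inside the traceless subspace, the vanishing of the coherent part, and the consistency check at the end are all correct.

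The one step you cannot obtain from ``gauge freedom'' alone is the claim that the jumps may be chosen as modular eigenoperators. The unitary mixing in \cref{lem:lgksform} acts within a fixed completely positive part $\Psi$, so before any mixing helps you must show that $\Psi$ itself commutes with $\Delta_{\si_\beta}$ -- i.e.\ that the commutation $[\mc{L},\Delta_{\si_\beta}]=0$ passes separately to the $\maf{H}_S$ and $\maf{H}_S^\perp$ components of $\mc{L}$ once the normalization $\tr(L_j)=0$ makes that decomposition unique (the analogue, for modular conjugation, of the invariance argument used in the proof of \cref{lem:kmsform} for the KMS adjoint). The clean way to finish is to expand $\Psi$ with a Kossakowski matrix in a fixed orthonormal basis of traceless modular eigenoperators: commutation with $\Delta_{\si_\beta}$ forces that positive semidefinite matrix to be block diagonal over the Bohr frequencies, and diagonalizing each block yields the eigenoperator jumps, after which GNS self-adjointness relates the blocks at $\omega$ and $-\omega$ and produces exactly the pairing and the factors $e^{\pm\omega_j/2}$. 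Likewise, $[G,\si_\beta]=0$ is not immediate from $[\mc{L},\Delta_{\si_\beta}]=0$; it follows from the commutation of the $\maf{H}_S$ component together with \cref{lem:unixy} and a trace argument. With those two points supplied, your plan goes through.
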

It is easy to see that the Davies semigroup \eqref{eq:davies} is exactly the class of QMS with GNS-DBC, up to the coherent term $i [H, \cdot]$. Indeed, we define $g_a(\nu) := e^{\beta \nu/2} \gamma_a(\nu)$ and find $g_a(\nu) = g_a(-\nu)$ by the KMS condition \eqref{eq:rela1}. Then, letting $L_{a,\nu} := g_a(\nu)^{1/2} A_{\nu}^a$, it follows from \eqref{eq:davies} and \eqref{eq:dissipative} that 
\begin{align*}
     \mc{L}_\beta(X) & = i [H, X] + \sum_{a \in \mc{A}} \sum_{\nu \in B_H} e^{-\beta \nu/2} \left( L_{a,\nu}^\dag X L_{a,\nu} - \frac{1}{2}\left\{L_{a,\nu}^\dag L_{a,\nu}, X\right\} \right) \\
     & =  i [H, X] + \frac{1}{2}\sum_{a \in \mc{A}} \sum_{\nu \in B_H} e^{-\beta \nu/2} L_{a,\nu}^\dag [X, L_{a,\nu}] + e^{\beta \nu/2} [L_{a,\nu}, X] L_{a,\nu}^\dag\,,
\end{align*}
where the dissipative part exactly satisfies the conditions in \cref{lem:qms_gns} by re-indexing $j = (a,\nu)$.

\begin{rem} \label{rem:ineffi}
\re{As reviewed in the introduction, the  Davies generator $\mc{L}_\beta$ in \eqref{eq:davies} has been adopted for the thermal state preparation  \cite{RallWangWocjan2023,ChenKastoryanoBrandaoEtAl2023,ChenKastoryanoBrandaoEtAl2023}. 
However, implementing it with a small error
requires accurately resolving 
all the Bohr frequencies $\nu$, while the minimal gap between two Bohr frequencies $\nu, \nu'$ could be exponentially small in the worst case.}
By the energy-time uncertainty principle, this
means an impractically long Hamiltonian simulation time and is a key obstacle in directly leveraging the Davies semigroup as an \emph{efficient} quantum Gibbs sampler \cite{ChenKastoryanoBrandaoEtAl2023}. 

    It is also worth mentioning that the sum over Bohr frequencies \eqref{eq:davies} is derived from a secular approximation \cite{breuer2002theory}, which may be regarded as theoretical evidence that the GNS detailed balance is an idealized construction and thus is difficult to \re{be efficiently implemented with desired accuracy.} 
\end{rem}

\subsection{KMS-detailed balanced generators} \label{sec:kmsdbc}

Recalling that the KMS DBC is a weaker property compared to the GNS one, but can still guarantee the Gibbs state as a fixed point of the dynamics, 
one may expect that the KMS-detailed balanced QMS can provide a more efficient class of Gibbs state preparation algorithms. \re{The first complete characterization of the $\sigma$-KMS detailed balanced Lindbladian was given in \cite[Theorems 7.2 and 7.3]{fagnola2007generators}, where some equivalent conditions on the coherent term and jump operators were proposed. In this section, we revisit their arguments along the lines of the recent work \cite{AmorimCarlen2021} and 
introduce the canonical form of the QMS satisfying $\si_\beta$-KMS DBC with a self-contained proof, which are the starting points of our quantum Gibbs samplers.}



Let $\maf{H}: = \mc{B}(\bh)$ be the space of superoperators. We define the subspace $\maf{H}_S$ consisting of $\Phi \in \maf{H}$ of the form: for some $X, Y \in \mc{B}(\mc{H})$,  
\begin{equation} \label{def:Phixy}
    \Phi(A) = XA + A Y\,,
\end{equation}
and denote by $\maf{H}_S^\perp$ its orthogonal complement \re{for the normalized Hilbert--Schmidt inner product on superoperators, which is defined as follows: for $\Phi, \Psi \in \maf{H}$, 
\begin{equation*}
    \l \Phi, \Psi\r_\maf{H} = \frac{1}{N^2} \sum_{i,j = 1}^N \l \Phi(X_{i,j}), \Phi(X_{i,j})\r\,,
\end{equation*}
with $\{X_{i,j}\}_{i,j = 1}^N$ being any orthonormal basis of $\bh$ satisfying}
\begin{equation*}
 \re {\l X_{i,j}, X_{k,l}\r =  \tr(X_{i,j}^\dag X_{k,l}) = \d_{(i,j),(k,l)}\,.}
\end{equation*}
The following useful lemma characterizes the freedom of $X, Y$ in \eqref{def:Phixy}\footnote{This result is from \cite[Lemma 3.10 and Remark 3.11]{AmorimCarlen2021}, which include some typos in the arguments. We provide a short proof here for the reader's convenience.}.

\begin{lem}\label{lem:unixy}
    Let $\Phi$ be a superoperator with the representation \eqref{def:Phixy}. If some $X', Y' \in \mc{B}(\mc{H})$ gives the same $\Phi$ via \eqref{def:Phixy}, then $X' = X + \eta {\bf 1}$ and $Y' = Y - \eta {\bf 1}$ for some $\eta \in \CC$. 
\end{lem}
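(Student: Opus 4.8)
The plan is to reduce the statement to the elementary fact that the only operators commuting with all of $\bh$ are the scalar multiples of the identity. Suppose $X', Y' \in \bh$ induce the same superoperator $\Phi$ via \eqref{def:Phixy}, i.e., $XA + AY = X'A + AY'$ for every $A \in \bh$. Subtracting the two representations and rearranging, I would first record the identity
\begin{equation*}
    (X - X')\,A = A\,(Y' - Y) \qquad \text{for all } A \in \bh\,.
\end{equation*}

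The key step is to evaluate this at $A = {\bf 1}$, which immediately gives $X - X' = Y' - Y$. Denoting this common operator by $P := X - X' = Y' - Y$, the displayed identity becomes $PA = AP$ for all $A \in \bh$; that is, $P$ lies in the commutant of the full matrix algebra $\bh$. Since that commutant is trivial (by Schur's lemma, or by testing $PA = AP$ against the matrix units $E_{ij}$, which forces $P$ to be a scalar), we conclude $P = c\,{\bf 1}$ for some $c \in \CC$.

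Unwinding the definition of $P$ then yields the claim: setting $\eta := -c$, we obtain $X' = X - c\,{\bf 1} = X + \eta\,{\bf 1}$ and $Y' = Y + c\,{\bf 1} = Y - \eta\,{\bf 1}$. I do not anticipate any genuine obstacle here. The only points requiring care are the bookkeeping of signs, so that the free constant appears as $+\eta$ in $X'$ and $-\eta$ in $Y'$, and the invocation of the triviality of the center of $\bh$; the latter is entirely standard for a full matrix algebra over $\CC$.
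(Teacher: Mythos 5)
Your proof is correct, and it takes a genuinely different and more elementary route than the paper's. You subtract the two representations to get $(X-X')A = A(Y'-Y)$ for all $A$, evaluate at $A=\mathbf{1}$ to identify the two differences as a single operator $P$, and then conclude from $PA=AP$ for all $A\in\bh$ that $P$ lies in the (trivial) center of the full matrix algebra; the sign bookkeeping at the end is right. The paper instead expands $\Phi$ in the orthonormal superoperator basis $\Phi_{\alpha,\beta}(A)=F_\alpha^\dag A F_\beta$ built from a trace-orthogonal operator basis $\{F_\alpha\}$ with $F_{(1,1)}=\mathbf{1}$, computes the coefficients $(C_\Phi)_{\alpha,\beta}$ explicitly, observes that they vanish unless $\alpha=(1,1)$ or $\beta=(1,1)$, and reads off the form of any admissible $X',Y'$ from the resulting expansion. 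Your argument is shorter and invokes only Schur's lemma (or matrix units); the paper's computation is heavier but dovetails with the surrounding machinery of the subspace $\maf{H}_S$ and its orthogonal complement, since the same coefficient expansion is what underlies Lemmas 3.12--3.13 of \cite{AmorimCarlen2021} used in the proof of \cref{lem:kmsform}. Either proof is acceptable for the lemma as stated.
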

\begin{proof}
  Let $F_\alpha$ with $\alpha = (i,j)$ and $1 \le i,j \le N$ be a basis of $\mc{B}(\mc{H})$ satisfying 
$\l F_\alpha, F_\beta\r/N = \d_{\alpha\beta}$,
   $F_{(1,1)} = \mi$, and $F^\dag_{(i,j)} = F_{(j,i)}$, and let $E_{i,j}$ be \re{another basis defined by $E_{i,j} = \ket{u_i}\bra{u_j}$}, where $\{u_j\}$ is an orthonormal basis of $\mc{H}$. Suppose that $\Phi(A) = X A + A Y$ for some $X, Y$. \re{Note that $\Phi_{\alpha,\beta}(X) = F_\alpha^\dag A F_\beta$ is an orthonormal basis of $\maf{H}$ by computing
   \begin{equation*}
       \l \Phi_{\alpha,\beta}, \Phi_{\gamma,\eta} \r_{\maf{H}} = \frac{1}{N}\l F_\gamma, F_\alpha \r \frac{1}{N} \l F_\beta, F_\eta \r = \d_{(\alpha,\beta),(\gamma,\eta)}\,.
   \end{equation*}
   We then} compute the coefficients for the \re{orthogonal} expansion of $\Phi(A) = \sum_{\alpha,\beta} (C_\Phi)_{\alpha,\beta} F_\alpha^\dag A F_\beta$: 
\begin{align*}
    (C_\Phi)_{\alpha,\beta} & = \re{\frac{1}{N^2}} \sum_{i,j = 1}^N \tr \left[ (F_\alpha^\dag E_{i,j} F_\beta)^\dag \Phi(E_{i,j}) \right] \\
    & = \frac{1}{N^2} \left( \tr [F_\beta^\dag] \tr [F_\alpha X] + \tr [F_\alpha] \tr [F_\beta^\dag Y] \right) = \frac{1}{N} \left( \d_{\beta,(1,1)} \tr [F_\alpha X] + \d_{\alpha,(1,1)} \tr [F_\beta^\dag Y] \right)\,,
\end{align*}
which is zero if $\alpha, \beta \neq (1,1)$.  It follows that 
\begin{align*}
    \Phi(A) = (C_\Phi)_{(1,1),(1,1)} A + \sum_{\alpha \neq (1,1)} (C_\Phi)_{\alpha,(1,1)} F_\alpha^\dag A + \sum_{\beta \neq (1,1)} (C_\Phi)_{(1,1),\beta} A F_\beta\,.
\end{align*}
Therefore, any $X', Y'$ such that $\Phi(A) = X' A + A Y'$ satisfy 
\begin{align*}
    X' = \sum_{\alpha \neq (1,1)} (C_\Phi)_{\alpha,(1,1)} F_\alpha^\dag + a \mi\,,\q Y' = \sum_{\beta \neq (1,1)} (C_\Phi)_{(1,1),\beta} F_\beta + b \mi\,,
\end{align*}
for some $a,b \in \C$ with $a + b = (C_\Phi)_{(1,1),(1,1)}$. The proof is complete. 
\end{proof}


We next discuss the structure of QMS satisfying $\si_\beta$-KMS DBC for some Gibbs state $\si_\beta$ \eqref{eq:gibbstate}.

\begin{lem} \label{lem:kmsform}
A Lindbladian $\mc{L}$ satisfies $\si_\beta$-KMS DBC if and only if $\mc{L}$ has the form: 
\begin{align} \label{eq:lgksform2}
    \mc{L} (X) = \Psi(X) + \Phi(X)\,, \q 
\end{align}  
with the CP operator $\Psi(\dd)$ admitting the Kraus representation \eqref{eq:kraus} and the operator 
\begin{equation} \label{eq:phii}
    \Phi(X) := K^\dag X + X K\,,\quad \text{for some $K \in \bh$}\,,     
\end{equation}
and both $\Psi$ and $\Phi$ are self-adjoint with respect to the KMS inner product. In this case, there exist the jump operators $\{L_j\}_{j \in \mc{J}}$ and the operator $K$ in \cref{eq:phii} satisfying
\begin{equation} \label{eq:adjointlj}
    \Delta_{\si_\beta}^{-1/2} L_j = L_j^\dag\,,
\end{equation}
and 
\begin{equation} \label{eq:constK}
 \Delta_{\si_{\beta}}^{-1/2} K =  K^\dag \,.
\end{equation}
 \end{lem}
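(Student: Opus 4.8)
The plan is to recast the $\si_\beta$-KMS DBC as an \emph{involutive} symmetry of superoperators and then symmetrize an arbitrary GKSL decomposition of $\mc{L}$. Write $\Gamma := \Gamma_{\si_\beta}$ and $\Delta := \Delta_{\si_\beta}$; since $J_{\si_\beta}^{1/2} = R_{\si_\beta}\Delta^{1/2} = \Gamma$, the KMS inner product is $\l X, Y\r_{\si_\beta,1/2} = \l X, \Gamma Y\r$, and \cref{def:sidbc} says $\mc{L}$ is $\si_\beta$-KMS self-adjoint iff $\mc{J}(\mc{L}) = \mc{L}$, where $\mc{J}(\mc{T}) := \Gamma^{-1}\mc{T}^\dag\Gamma$. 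Using $\Gamma^\dag = \Gamma$ one checks $\mc{J}^2 = \id$, so $\mc{J}$ is an involution on $\maf{H}$. With this in hand the easy direction is immediate: if $\mc{L} = \Psi + \Phi$ with $\Psi,\Phi$ both KMS self-adjoint, then so is their sum, i.e.\ $\mc{L}$ satisfies the $\si_\beta$-KMS DBC.

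For the converse I would first record how $\mc{J}$ acts on the two natural building blocks. For a CP Kraus map $\Psi_0(X) = \sum_j L_j^\dag X L_j$, a direct computation gives $\mc{J}(\Psi_0)(X) = \sum_j (\Theta L_j)^\dag X (\Theta L_j)$, where $\Theta(L) := \Delta^{1/2} L^\dag = \si_\beta^{1/2} L^\dag \si_\beta^{-1/2}$ is \emph{anti-linear} with $\Theta^2 = \id$, and whose fixed points $\Theta(L) = L$ are precisely the operators satisfying $\Delta^{-1/2} L = L^\dag$, i.e.\ \eqref{eq:adjointlj}. In particular $\mc{J}$ sends CP Kraus maps to CP Kraus maps. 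For $\Phi_0(X) = K^\dag X + X K \in \maf{H}_S$ one checks $\mc{J}(\Phi_0)(X) = (\Delta^{-1/2}K) X + X (\Delta^{1/2}K^\dag) \in \maf{H}_S$. Now take any GKSL form $\mc{L} = \Psi_0 + \Phi_0$ from \cref{lem:lgksform} and symmetrize: since $\mc{J}(\mc{L}) = \mc{L}$ and $\mc{J}^2 = \id$,
\begin{equation*}
    \mc{L} = \tfrac12\big(\Psi_0 + \mc{J}\Psi_0\big) + \tfrac12\big(\Phi_0 + \mc{J}\Phi_0\big) =: \Psi + \Phi,
\end{equation*}
where $\Psi$ is CP (a convex combination of CP maps) and fixed by $\mc{J}$, and $\Phi \in \maf{H}_S$ is fixed by $\mc{J}$; both are thus KMS self-adjoint and $\Psi + \Phi = \mc{L}$, which is the decomposition \eqref{eq:lgksform2}.

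It then remains to extract the explicit forms and the relations \eqref{eq:adjointlj}--\eqref{eq:constK}. For $\Phi$: since $\mc{L}$ and the CP map $\Psi$ preserve adjoints, so does $\Phi$; together with $\Phi \in \maf{H}_S$ and \cref{lem:unixy} this forces $\Phi(X) = K^\dag X + X K$ for a suitable $K$. Imposing KMS self-adjointness, I would reduce the condition to $\si_\beta^{1/2} W D = D W \si_\beta^{1/2}$ for all $W \in \bh$, where $D := K\si_\beta^{1/2} - \si_\beta^{1/2}K^\dag$; testing against matrix units forces $D = \mu\,\si_\beta^{1/2}$, and since $D^\dag = -D$ we get $\mu \in i\RR$. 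The residual gauge $K \mapsto K + i c\,\mi$ (the only shift in \cref{lem:unixy} that preserves $\Phi$) shifts $\mu \mapsto \mu + 2ic$, so choosing $c$ appropriately makes $\mu = 0$ and yields $\Delta^{-1/2}K = K^\dag$, i.e.\ \eqref{eq:constK}. For the jumps: writing $\Psi(X) = \tfrac12\sum_j\big(L_j^\dag X L_j + (\Theta L_j)^\dag X (\Theta L_j)\big)$ and introducing the $\Theta$-fixed operators $P_j := \tfrac12(L_j + \Theta L_j)$ and $Q_j := \tfrac1{2i}(L_j - \Theta L_j)$, the cross terms cancel and one obtains $\Psi(X) = \sum_j\big(P_j^\dag X P_j + Q_j^\dag X Q_j\big)$, a Kraus representation all of whose operators are $\Theta$-fixed, i.e.\ satisfy \eqref{eq:adjointlj}.

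The step I expect to demand the most care is ensuring the symmetrization $\tfrac12(\cdot + \mc{J}\,\cdot)$ \emph{simultaneously} preserves complete positivity and the GKSL structure, which is exactly where the involutivity $\mc{J}^2 = \id$ (equivalently $\Theta^2 = \id$) is indispensable, together with the gauge-fixing that removes the anti-Hermitian defect $\mu\,\si_\beta^{1/2}$ from $\Phi$. Once these are settled, the recombination $L_j, \Theta L_j \mapsto P_j, Q_j$ that delivers the self-conjugate jumps \eqref{eq:adjointlj} is a routine cancellation of cross terms, and the easy direction closes the equivalence.
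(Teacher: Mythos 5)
Your proof is correct, and it reaches the conclusion by a genuinely different route from the paper. The paper keeps the original GKSL pieces $\Psi \in \maf{H}_S^\perp$ and $\Phi \in \maf{H}_S$ from \cref{lem:lgksform} (after normalizing $\tr(L_j)=0$) and deduces that each is \emph{separately} KMS self-adjoint from the invariance of $\maf{H}_S$ and $\maf{H}_S^\perp$ under the KMS adjoint, then invokes \cite[Theorem 4.1]{AmorimCarlen2021} for the existence of jumps obeying \eqref{eq:adjointlj}. You instead symmetrize with the involution $\mc{T} \mapsto \Gamma_{\si_\beta}^{-1}\mc{T}^\dag\Gamma_{\si_\beta}$, which manufactures a KMS self-adjoint CP part and a KMS self-adjoint element of $\maf{H}_S$ by construction (no orthogonality or invariance lemmas needed), and your recombination $P_j = \tfrac12(L_j + \Theta L_j)$, $Q_j = \tfrac{1}{2i}(L_j - \Theta L_j)$ is a direct, citation-free proof of the structure of the jumps --- in fact the same device the paper deploys later in \cref{cor:gns_kms}. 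Your derivation of \eqref{eq:constK}, reducing to $K\si_\beta^{1/2} - \si_\beta^{1/2}K^\dag = \mu\,\si_\beta^{1/2}$ with $\mu \in i\RR$ and killing $\mu$ by the gauge $K \mapsto K + ic\,\mi$, is equivalent to the paper's route through \cref{lem:unixy} plus the trace normalization. What your approach buys is self-containedness and an explicit witness of the decomposition; the minor costs are that the decomposition you exhibit is generally not the one you started from (harmless, since the lemma only asserts existence), and that the doubled family $\{P_j, Q_j\}$ may have up to $2N^2$ members, so to literally meet the cardinality bound in \eqref{eq:kraus} one should discard zero operators and reduce via a \emph{real} orthogonal recombination of Kraus operators, which preserves $\Theta$-fixedness whereas a general unitary one would not. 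These are cosmetic points; the argument is sound.
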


 \begin{proof}

It suffices to prove the \emph{only if} part. Recalling the structure of a Lindbladian in \cref{lem:lgksform}, without loss of generality, we assume $\tr(L_j) = 0$ by replacing $L_j$ with $L_j - \tr(L_j) {\bf 1}$ and  $K$ with $K + \sum_j \tr(L_j^\dag) L_j - \frac{1}{2}|\tr(L_j)|^2$. Then, by \cite[Lemma 3.12]{AmorimCarlen2021}, there holds $\Psi(\dd) \in \maf{H}_S^\perp$. According to \cite[Lemma 3.13]{AmorimCarlen2021}, the subspaces $\maf{H}_S$ and $\maf{H}_S^\perp$ are invariant under the adjoint with respect to the KMS inner product. By $\Phi\in\maf{H}_S$ and $\Psi\in \maf{H}_S^\perp$, it holds that adjoints $\Phi^\dagger_{{\rm KMS}} \in \maf{H}_S$ and $\Psi^\dagger_{{\rm KMS}} \in \maf{H}_S$, where $\Phi^\dagger_{{\rm KMS}}$ and $\Psi^\dagger_{{\rm KMS}}$ are adjoints of $\Phi$ and $\Psi$ for the KMS inner product. Thus, the self-adjointness $\Psi^\dagger_{{\rm KMS}}+\Phi^\dagger_{{\rm KMS}} = \Psi + \Phi$ implies  
$\Psi^\dagger_{{\rm KMS}}=\Psi$ and $\Phi^\dagger_{{\rm KMS}}=\Phi$. 

    
    
    Next, since $\Psi$ is a KMS-detailed balanced CP map, \eqref{eq:adjointlj} is implied by the structure result~\cite[Theorem 4.1]{AmorimCarlen2021}. To show \eqref{eq:constK}, by the invariance of $\Phi = K^\dag X + X K$ for adding a pure imaginary $i c \mi$ ($c \in \R$) to $K$, without loss of generality, we can assume $\tr(K) \in \R$, which further implies $\tr(\Delta_{\si_\beta}^{-1/2}K) \in \R$. It follows from $\si_\beta$-KMS DBC of $\Phi$ that $ \Gamma_{\si_\beta} \Phi = \Phi^\dag   \Gamma_{\si_\beta}$, equivalently, 
    \begin{equation*}
        K^\dag X + X K  = (\Delta_{\si_{\beta}}^{-1/2} K)   X   +   X (\Delta_{\si_{\beta}}^{1/2}  K^\dag)\,.
    \end{equation*}
    Then, by \cref{lem:unixy}, we derive $K^\dag = \Delta_{\si_{\beta}}^{-1/2} K + \eta {\bf 1}$ for some $\eta \in \C$, where $\eta$ must be zero, thanks to $\tr(K^\dag) = \tr(K)=\tr(\Delta_{\si_{\beta}}^{-1/2} K) \in \R$. The proof is complete. 
     \end{proof}



We proceed to derive an explicit formula for the operator $K$. Recalling the decomposition $K = V - iG$ with $V =  - \frac{1}{2} \sum_{j \in \mc{J}} L_j^\dag L_j$ (\rre{see after \cref{lem:lgksform}}), it suffices to find an expression for $G$. To do so, we reformulate the constraint \eqref{eq:constK} above as a Lyapunov equation:  
\begin{align*}
    G \si_\beta^{1/2} + \si_\beta^{1/2} G = i (\si_\beta^{1/2} V - V \si_\beta^{1/2})\,.
\end{align*}
It can be uniquely solved as 
\begin{equation} \label{eq:hamil}
        G = i \int_0^\infty e^{- t \si_\beta^{1/2}} (\si_\beta^{1/2} V - V \si_\beta^{1/2}) e^{- t \si_\beta^{1/2}} \ud t \,.
\end{equation}
To simplify the formula, we note that for any $\lad,\mu > 0$, 
\begin{equation*}
    \int_0^\infty e^{-t \lad^{1/2}}  e^{- t \mu^{1/2}} \ud t  = \frac{1}{\lad^{1/2} + \mu^{1/2}}\,,
\end{equation*}
and $\tanh (\log(x^{1/4})) = \frac{x^{1/2} - 1}{x^{1/2} + 1}$. Then, by functional calculus and \eqref{eq:hamil}, \re{it holds that}
\begin{equation} \label{eq: hamil2}
     G = i  \frac{L_{\si_\beta}^{1/2} - R_{\si_\beta}^{1/2}}{L_{\si_\beta}^{1/2} + R_{\si_\beta}^{1/2}}(V) = i  \frac{\Delta_{\si_\beta}^{1/2} - I}{\Delta_{\si_\beta}^{1/2} + I}(V) = i \tanh \circ \log (\Delta_{\si_\beta}^{1/4})(V)\,.
\end{equation}
 We summarize the above discussion in the following proposition.
 
 
\begin{thm}\label{prop:KMS_DBC}
    A Lindbladian $\mc{L}$ satisfies $\si_{\beta}$-KMS DBC if and only if there exist linear operators $L_j, G \in \bh$ such that  
    \begin{align} \label{eq:kmsformcan}
     \mc{L} (X) = i [G, X] + \sum_{j \in \mc{J}} \left( L_j^\dag X L_j - \frac{1}{2}\left\{L_j^\dag L_j, X  \right\} \right)\,,
\end{align}
with $L_j$ \re{satisfying 
\begin{equation*} 
    \Delta_{\si_\beta}^{-1/2} L_j = L_j^\dag\,,
\end{equation*}
and} $G$ being self-adjoint and given by
\begin{equation}\label{eqn:G_formula_KMS}
     G :=  - i \tanh \circ \log (\Delta_{\si_\beta}^{1/4})\left( \frac{1}{2} \sum_{j \in \mc{J}}  L_j^\dag L_j\right)\,.
\end{equation}
\end{thm}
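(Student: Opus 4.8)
The plan is to read this proposition as the summary of \cref{lem:kmsform} together with the explicit resolution of the coherent term carried out in \eqref{eq:hamil}--\eqref{eq: hamil2}; accordingly I would organize the argument around the two implications.

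For the \emph{only if} direction I would start from a Lindbladian $\mc{L}$ obeying the $\si_\beta$-KMS DBC and invoke \cref{lem:kmsform} to obtain jump operators $\{L_j\}_{j\in\mc{J}}$ and an operator $K$ satisfying $\Delta_{\si_\beta}^{-1/2}L_j = L_j^\dag$ and $\Delta_{\si_\beta}^{-1/2}K = K^\dag$. Following the decomposition recalled after \cref{lem:lgksform}, I would write $K = V - iG$ with $V = -\frac{1}{2}\sum_{j}L_j^\dag L_j$ and $G$ self-adjoint; substituting into \eqref{GKSL_1} produces the canonical form \eqref{eq:kmsformcan}. It then remains to pin down $G$. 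Using $K^\dag = V + iG$ and rewriting $\Delta_{\si_\beta}^{-1/2}K = \si_\beta^{-1/2}K\si_\beta^{1/2}$, the constraint \eqref{eq:constK} becomes the Lyapunov equation $G\si_\beta^{1/2} + \si_\beta^{1/2}G = i(\si_\beta^{1/2}V - V\si_\beta^{1/2})$, whose unique self-adjoint solution is the integral \eqref{eq:hamil}.

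The heart of the computation is to convert that integral into the closed form \eqref{eqn:G_formula_KMS}. Here I would exploit that the left and right multiplications $L_{\si_\beta}$ and $R_{\si_\beta}$ commute, so the superoperator integrand factorizes as $e^{-tL_{\si_\beta}^{1/2}}e^{-tR_{\si_\beta}^{1/2}}$ and the $t$-integration yields $(L_{\si_\beta}^{1/2} + R_{\si_\beta}^{1/2})^{-1}$ by joint functional calculus. Recognizing $\si_\beta^{1/2}V - V\si_\beta^{1/2} = (L_{\si_\beta}^{1/2} - R_{\si_\beta}^{1/2})(V)$ and factoring out $R_{\si_\beta}^{1/2}$ to express everything through $\Delta_{\si_\beta} = L_{\si_\beta}R_{\si_\beta}^{-1}$, I would reach $G = i\frac{\Delta_{\si_\beta}^{1/2}-I}{\Delta_{\si_\beta}^{1/2}+I}(V)$ and finally \eqref{eqn:G_formula_KMS} via the elementary identity $\tanh(\log x^{1/4}) = \frac{x^{1/2}-1}{x^{1/2}+1}$. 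I expect this functional-calculus step to be the main (if mild) obstacle, since one must justify the joint spectral calculus for the commuting pair $(L_{\si_\beta}, R_{\si_\beta})$ and confirm convergence of the integral using positivity of $\si_\beta$.

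For the \emph{if} direction I would reverse these steps: given $\{L_j\}$ with $\Delta_{\si_\beta}^{-1/2}L_j = L_j^\dag$ and $G$ defined by \eqref{eqn:G_formula_KMS}, I would split $\mc{L}$ as $\Psi + \Phi$ with $\Psi(X) = \sum_j L_j^\dag X L_j$ and $\Phi(X) = K^\dag X + XK$, and check that each summand is self-adjoint for the KMS inner product. Self-adjointness of $\Psi$ is precisely the content of the structure theorem for KMS-detailed balanced CP maps already invoked in the proof of \cref{lem:kmsform}; self-adjointness of $\Phi$ is equivalent to $\Delta_{\si_\beta}^{-1/2}K = K^\dag$, which holds because the chosen $G$ solves the Lyapunov equation above by construction. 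Summing the two superoperators then yields the $\si_\beta$-KMS DBC of $\mc{L}$, completing the equivalence.
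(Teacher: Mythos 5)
Your proposal is correct and follows essentially the same route as the paper: the \emph{only if} direction is \cref{lem:kmsform} combined with the decomposition $K=V-iG$ and the unique solution of the Lyapunov equation $G\si_\beta^{1/2}+\si_\beta^{1/2}G=i(\si_\beta^{1/2}V-V\si_\beta^{1/2})$, converted to the $\tanh\circ\log(\Delta_{\si_\beta}^{1/4})$ form by the joint functional calculus of the commuting multiplications $L_{\si_\beta}$ and $R_{\si_\beta}$, exactly as in \eqref{eq:hamil}--\eqref{eq: hamil2}. The \emph{if} direction by checking KMS self-adjointness of $\Psi$ and $\Phi$ separately is likewise the intended (and in the paper implicit) reversal of these steps.
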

    
We have shown in \cref{lem:self_adjoint} that the Lindbladians with $\si_\beta$-GNS DBC is a subclass of those with $\si_\beta$-KMS DBC, but this property cannot be easily seen from the corresponding structural results (cf.\, \cref{lem:qms_gns} and \cref{prop:KMS_DBC}), noting that an eigenvector $L$ of the operator $\Delta_{\si_{\beta}}$ is generally not a solution to \eqref{eq:adjointlj}. To fill this gap, we next show that the canonical form \eqref{eq:structure} of a QMS with GNS DBC can be  indeed reformulated as the one \eqref{eq:kmsformcan} for KMS DBC. 

\begin{cor} \label{cor:gns_kms}
    Let $\mc{L}$ be a Lindbladian with $\si_\beta$-{\rm GNS DBC} of the form:
    \begin{equation} \label{eq:auxlind}
        \mc{L}(X) = \sum_{j \in \mc{J}} \mc{L}_j(X)\quad \text{with}\quad \mc{L}_j(X) = 2 e^{-\ww_j/2} \left( L_j^\dag X L_j - \frac{1}{2}\left\{L_j^\dag L_j, X  \right\} \right)\,,
    \end{equation}
    where the jumps $\{L_j\}_{j \in \mc{J}}$ satisfy the conditions in \cref{lem:qms_gns}. Then, we can reformulate it in the form of a KMS detailed balanced Lindbladian \eqref{eq:kmsformcan}:
    \begin{align*}
        \mc{L}(X) = \sum_{j \in \mc{J},\, L_j^\dag = L_j} \mc{L}_j(X) + \frac{1}{2} \sum_{j \in \mc{J},\, L_j^\dag \neq L_j} \w{\mc{L}}_j(X) \,,  
    \end{align*}
 with  
    \begin{equation*}
        \w{\mc{L}}_j(X) =  \w{L}_{j,1}^\dag X \w{L}_{j,1}  + \w{L}_{j,2}^\dag X \w{L}_{j,2} - \frac{1}{2}\left\{\w{L}_{j,1}^\dag \w{L}_{j,1}, X  \right\} - \frac{1}{2}\left\{\w{L}_{j,2}^\dag \w{L}_{j,2}, X  \right\}\,,
    \end{equation*}
where 
\begin{align*}
    \w{L}_{j,1} :=  e^{-\ww_j/4} L_j + e^{\ww_j/4} L_j^\dag\,,\q  \w{L}_{j,2} :=  i(-  e^{-\ww_j/4} L_j + e^{\ww_j/4} L_j^\dag) 
\end{align*} 
satisfy the constraint \eqref{eq:adjointlj}. 
\end{cor}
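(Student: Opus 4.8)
The plan is to separate the statement into two essentially independent verifications: (i) the proposed operators $\w{L}_{j,1},\w{L}_{j,2}$ obey the KMS constraint \eqref{eq:adjointlj}, and (ii) the purely algebraic rewriting of the dissipator holds termwise on each pair. The only structural input I need beyond \cref{lem:qms_gns} is the pair of modular eigenrelations. The condition $\Delta_{\si_\beta}(L_j)=e^{-\ww_j}L_j$ from \cref{lem:qms_gns}, together with self-adjointness of $\si_\beta$, gives by taking adjoints $\Delta_{\si_\beta}(L_j^\dag)=e^{\ww_j}L_j^\dag$. Hence $\Delta_{\si_\beta}^{-1/2}$ acts diagonally, $\Delta_{\si_\beta}^{-1/2}L_j=e^{\ww_j/2}L_j$ and $\Delta_{\si_\beta}^{-1/2}L_j^\dag=e^{-\ww_j/2}L_j^\dag$, and this is the workhorse for both parts.

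For (i), I apply $\Delta_{\si_\beta}^{-1/2}$ to $\w{L}_{j,1}=e^{-\ww_j/4}L_j+e^{\ww_j/4}L_j^\dag$ termwise, obtaining $e^{\ww_j/4}L_j+e^{-\ww_j/4}L_j^\dag$, which is exactly $\w{L}_{j,1}^\dag$ (recall $\ww_j\in\RR$). The same computation for $\w{L}_{j,2}=i(-e^{-\ww_j/4}L_j+e^{\ww_j/4}L_j^\dag)$ yields $\Delta_{\si_\beta}^{-1/2}\w{L}_{j,2}=i(-e^{\ww_j/4}L_j+e^{-\ww_j/4}L_j^\dag)=\w{L}_{j,2}^\dag$, confirming \eqref{eq:adjointlj} for both jumps.

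For (ii), I would expand the two sandwich terms $\w{L}_{j,1}^\dag X\w{L}_{j,1}+\w{L}_{j,2}^\dag X\w{L}_{j,2}$ and the two products $\w{L}_{j,1}^\dag\w{L}_{j,1}+\w{L}_{j,2}^\dag\w{L}_{j,2}$. Writing $a=e^{-\ww_j/4},b=e^{\ww_j/4}$, the mixed terms proportional to $L_j^\dag X L_j^\dag$ and $L_j X L_j$ (and their analogues in the products) carry opposite signs in the two contributions because of the factor $i$ in $\w{L}_{j,2}$, so they cancel, leaving only $2a^2 L_j^\dag X L_j+2b^2 L_j X L_j^\dag$ and $2a^2 L_j^\dag L_j+2b^2 L_j L_j^\dag$. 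Since $a^2=e^{-\ww_j/2}$ and $b^2=e^{\ww_j/2}$, this gives $\w{\mc{L}}_j(X)=\mc{L}_j(X)+\mc{L}_{j'}(X)$, where $j'$ is the partner index with $L_{j'}=L_j^\dag,\ \ww_{j'}=-\ww_j$ from \eqref{eq:adjoint_index}. Moreover $\w{L}_{j',1}=\w{L}_{j,1}$ and $\w{L}_{j',2}=-\w{L}_{j,2}$, so $\w{\mc{L}}_{j'}=\w{\mc{L}}_j$ and therefore $\tfrac12(\w{\mc{L}}_j+\w{\mc{L}}_{j'})=\mc{L}_j+\mc{L}_{j'}$, reproducing exactly the pair's contribution to the original sum. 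A self-adjoint $L_j$ forces $e^{-\ww_j}=e^{\ww_j}$, i.e.\ $\ww_j=0$, so $\Delta_{\si_\beta}^{-1/2}L_j=L_j=L_j^\dag$ already satisfies \eqref{eq:adjointlj}, and such terms are kept verbatim (absorbing the factor $2$ into $\sqrt{2}L_j$).

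Finally, to confirm we genuinely land in the canonical form \eqref{eq:kmsformcan}, I would check the coherent term vanishes. The operator $\tfrac12\sum \hat{L}^\dag\hat{L}$ built from the new jumps consists of the blocks $L_j^\dag L_j$, $L_j L_j^\dag$ and $L_j^2$ (the last for $\ww_j=0$), each of which is a fixed point of $\Delta_{\si_\beta}$ by the eigenrelations above, hence commutes with $\si_\beta$; applying $\tanh\circ\log(\Delta_{\si_\beta}^{1/4})$ as in \eqref{eqn:G_formula_KMS} annihilates it, so $G=0$ and the rewriting is consistent with the purely dissipative display. I expect the main obstacle to be nothing conceptual but rather the careful bookkeeping of the $e^{\pm\ww_j/4}$ weights and the sign cancellations in the cross terms; everything else follows mechanically from the modular eigenrelations and \cref{prop:KMS_DBC}.
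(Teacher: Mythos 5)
Your proof is correct and follows essentially the same route as the paper's: the modular eigenrelations $\Delta_{\si_\beta}^{-1/2}L_j=e^{\ww_j/2}L_j$, $\Delta_{\si_\beta}^{-1/2}L_j^\dag=e^{-\ww_j/2}L_j^\dag$ give the KMS constraint for $\w{L}_{j,1},\w{L}_{j,2}$, the cross-term cancellation yields $\w{\mc{L}}_j=\mc{L}_j+\mc{L}_{j'}$ with $\w{\mc{L}}_{j'}=\w{\mc{L}}_j$ accounting for the factor $\tfrac12$, and the coherent term vanishes because $L_j^\dag L_j$, $L_jL_j^\dag$ (and $L_j^2$ when $\ww_j=0$) are fixed points of $\Delta_{\si_\beta}$. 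The only cosmetic difference is that the paper derives the pair $(\w{L}_{j,1},\w{L}_{j,2})$ by solving the constraint with an ansatz $aL_j+bL_j^\dag$, whereas you verify the given operators directly; the computations are identical.
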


\begin{proof}
    If $L_j$ is self-adjoint,  there hold $\ww_j = 0$ and
    $\Delta_{\si_\beta}^{-1/2} L_j = L_j = L_j^\dag$. By the formula \eqref{eq: hamil2} of $G$, the associated Hamiltonian is given by
    \begin{equation*}
        G_j :=  -\frac{i}{2} \frac{\Delta_{\si_\beta}^{1/2} - I}{\Delta_{\si_\beta}^{1/2} + I}(L_j^2) = 0\,.
    \end{equation*}
    Thus, in this case, $\mc{L}_j$ in \eqref{eq:auxlind} satisfies the canonical form \eqref{eq:kmsformcan} in  \cref{prop:KMS_DBC}. 

    We next consider the case where $L_j$ is not self-adjoint. Let $j'$ be the adjoint index for $j$ specified in \eqref{eq:adjoint_index}. It follows that 
    \begin{equation*}
        \mc{L}_j(X) + \mc{L}_{j'}(X) = 2 e^{-\ww_j/2} \left( L_j^\dag X L_j - \frac{1}{2}\left\{L_j^\dag L_j, X  \right\} \right) + 2 e^{\ww_j/2} \left( L_j X L_j^\dag - \frac{1}{2}\left\{L_j L_j^\dag, X  \right\} \right)\,.
    \end{equation*}
We consider the equation $\Delta_{\si_\beta}^{-1/2} \w{L} = \w{L}^\dag$ with ansatz $\w{L} = a L_j + b L_j^\dag$, $a,b \in \mathbb{C}$: 
\begin{equation*}
    \Delta_{\si_\beta}^{-1/2} ( a L_j + b L_j^\dag) = a e^{\ww_j/2} L_j + b e^{-\ww_j/2} L_j^\dag =  \overline{a} L_j^\dag + \overline{b} L_j\,.
\end{equation*}
It is easy to check that the coefficients $(a,b)$ being real linear combinations of vectors  $(e^{-\ww_j/2},1)$ and $(-e^{-\ww_j/2}i,i)$ satisfy $a = \overline{b} e^{-\ww_j/2}$, and  the corresponding $\w{\mc{L}}$ solves $\Delta_{\si_\beta}^{-1/2} \w{L} = \w{L}^\dag$. We define 
\begin{align} \label{auxeqq:Lj}
    \w{L}_1 =  e^{-\ww_j/4} L_j + e^{\ww_j/4} L_j^\dag\,,\q  \w{L}_2 =  i(-  e^{-\ww_j/4} L_j + e^{\ww_j/4} L_j^\dag)\,.
\end{align}
A direct computation gives, for $X \in \mc{B}(\mc{H})$, 
\begin{equation*}
    \w{L}_1^\dag X \w{L}_1 + \w{L}_2^\dag X \w{L}_2 = 2 e^{-\ww_j/2} L_j^\dag X L_j + 2 e^{\ww_j/2}  L_j X L_j^\dag\,,
\end{equation*}
and 
\begin{align*}
    G & =  - \frac{i}{2} \frac{\Delta_{\si_\beta}^{1/2} - I}{\Delta_{\si_\beta}^{1/2} + I}\left(\w{L}_1^\dag \w{L}_1 + \w{L}^\dag_2 \w{L}_2\right) \\
    & =  - i \frac{\Delta_{\si_\beta}^{1/2} - I}{\Delta_{\si_\beta}^{1/2} + I}\left( e^{-\ww_j/2} L_j^\dag  L_j +  e^{\ww_j/2}  L_j L_j^\dag \right) = 0\,,
\end{align*}
by noting that $L_j^\dag L_j$ and $L_j L_j^\dag$ are eigenvectors of $\Delta_{\si_\beta}$ associated with eigenvalue one. Therefore, for non-self-adjoint $L_j$, the Lindbladian $\mc{L}_j + \mc{L}_{j'}$ also matches with the form \eqref{eq:kmsformcan}. The proof is complete by the linearity of Lindbladians. 
\end{proof}



\section{A family of efficient quantum Gibbs samplers}\label{sec:QGS}

In this section, we present a general framework for designing \re{efficiently simulable} quantum Gibbs samplers via Lindblad dynamics satisfying $\sigma_\beta$-KMS DBC, \re{based on the representation \cref{prop:KMS_DBC}}. 


\subsection{Quantum Gibbs samplers via KMS-detailed balanced Lindbladian} \label{sec:gibbsampler}

By \cref{prop:KMS_DBC}, the class of KMS detailed balanced Lindbladians can be parameterized by
\begin{itemize}
    \item a set of jump operators for the Lindbladian $\{L_j\}_{j \in \mc{J}}$ satisfying \eqref{eq:adjointlj}: $\Delta_{\si_\beta}^{-1/2} L_j = L_j^\dagger$;
    

    
\item a coherent term $G$ defined as in \eqref{eqn:G_formula_KMS} via $\{L_j\}_{j \in \mc{J}}$. 


\end{itemize}

\re{We first reformulate the condition \eqref{eq:adjointlj} into a more convenient form.} 
Note that \eqref{eq:adjointlj} is equivalent to $\Delta_{\si_\beta}^{-1/4} L_j = \Delta_{\si_\beta}^{1/4} L_j^\dagger$, namely, $\Delta_{\si_\beta}^{-1/4} L_j$ is self-adjoint. Then the admissible set of jump operators can be given by
\begin{equation*} 
\{L \in \mc{B}(\mc{H})\,;\ L = \Delta_{\si_\beta}^{1/4} \w{A}\ \  \text{with}\ \  \w{A} = \w{A}^\dagger\}\,.    
\end{equation*}
From the eigendecomposition of $H$ in \eqref{eq:hamihh}, we have
\begin{equation}\label{eqn:V_first}
  L =  \Delta_{\si_\beta}^{1/4} \w{A} = \sum_{i,j} e^{-\beta (\lad_i - \lad_j)/4} P_i \w{A} P_j = \sum_{\nu\in B_H} e^{-\beta \nu/4}  \w{A}_{\nu}\,,
\end{equation}  
where $\w{A}_{\nu}$ is defined by \eqref{eqn:A_v} for some self-adjoint $\w{A}$. 

Suppose that we are given a set of self-adjoint coupling operators $\{A^a\}_{a \in \mc{A}}$ \re{as predefined jump proposals (coupling operators), where $\mc{A}$ is a discrete index set}. For each $a \in \mc{A}$, we choose a weighting function $q^a(\nu):\mathbb{R}\rightarrow\mathbb{C}$ such that
\begin{equation}\label{eqn:q_sym}
q^{a}(-\nu)=\overline{q^a(\nu)}, \quad \widehat{f^a}(\nu):=q^a(\nu)e^{-\beta\nu/4}\in L^1(\RR)\,.
\end{equation}
\re{Here $q^a$ is chosen to be a symmetric function so that $L_a$ is an admissible jump (i.e., satisfies the condition \eqref{eq:adjointlj}), see~\cref{eqn:A_tilde_hermitian}.}

We then construct the jump operator \re{associated with the pair $(A^a,q^a)$:} 
\begin{equation}\label{eqn:L_a_formula}
L_a := \sum_{\nu\in B_H}q^a(\nu)e^{-\beta\nu/4}A^a_\nu = 
\sum_{\nu\in B_H} \widehat{f^a}(\nu) A^a_\nu = \int^\infty_{-\infty} f^a(t) A^a(t) \mathrm{d}t\,,
\end{equation}
\re{as a weighted integral of $A^a$ (see \cref{defaf}),} where the Fourier transform of $\widehat{f^a}$:
\begin{equation}\label{eqn:f_a}
f^a(t)=\frac{1}{2\pi}\int^\infty_{-\infty} q^a(\nu)e^{-\beta \nu/4}e^{-it \nu} \ud \nu 
\end{equation}
gives the filtering function in the time domain. \re{To verify that $L_a$ satisfies the condition \eqref{eq:adjointlj}, it suffices to note that $\w{A}^a_\nu := q^a(\nu)A^a_\nu$ corresponds to a self-adjoint operator defined by $\wt{A}^a := \sum_{\nu\in B_H} \w{A}^a_{\nu}$:} 
\begin{equation}\label{eqn:A_tilde_hermitian}
(\wt{A}^a)^{\dag}=\sum_{\nu\in B_H} (\wt{A}^a)^{\dag}_{\nu}=\sum_{\nu\in B_H} \overline{q^a(\nu)} (A^a_\nu)^{\dag}=\sum_{\nu\in B_H} q^a(-\nu) A^a_{-\nu} = \wt{A}^a\,.
\end{equation}  
We see from \eqref{eqn:L_a_formula} that
each $L_a$ is a linear combination of the Heisenberg evolution $A^a(t) =  e^{iH t} A^a e^{-iH t}$, and \re{by \eqref{eq:fcalculus}, $L_a$ has an alternative representation:} 
\begin{align*}
  \re{ L_a = \widehat{f^a}(-\beta^{-1}\log \Delta_{\si_\beta}) A^a\,.}
\end{align*}
\re{Here, the filtering function $\widehat{f^a}(\nu)$ is used to modify the transition between eigenvectors of $H$ with an energy difference of 
$\nu$, similar to the rejection step in the classical sampling, and the associated operator $\widehat{f^a}(-\beta^{-1}\log \Delta_{\si_\beta})$ maps the predefined coupling operators $A^a$ to the desired jumps $L_a$ for the $\si_\beta$-KMS detailed balance. Moreover, due to the introduction of $q^a$ in $L_a$, without loss of generality, we can let $\norm{A^a} \le 1$ by rescaling the norm.}

\re{We proceed to construct the coherent part via the formula \eqref{eqn:G_formula_KMS}: 
\begin{align*}
    G &:= -i \tanh \circ \log (\Delta_{\si_\beta}^{1/4})\left( \frac{1}{2} \sum_{a\in\mathcal{A}} L_a^\dag L_a\right)\\
    &=- \frac{i}{2}\sum_{a \in \mc{A}} \sum_{\nu \in B_H} \tanh\left(-\frac{\beta\nu}{4}\right)\left(L_a^\dag L_a\right)_\nu\,,
\end{align*}
where $\left( L_a^\dag L_a\right)_\nu$ is defined as in \eqref{eqn:A_v}. However, the function $\tanh\left(-\beta\nu/4\right)$ is not $L^1$-integrable, and hence the useful time integral formulation as the one \eqref{eqn:L_a_formula} for the jump cannot readily follow. To remedy this issue, thanks to $|\nu|\leq 2\|H\|$ for a Bohr frequency, for any smooth and compactly supported function $\kappa(\dd)$ on $\R$ such that 
\begin{equation} \label{eq:condkappa}
 \kappa(\nu)=1 \q \text{for}\q \nu \in [-2\|H\|, 2\|H\|]\,,
\end{equation}
we define the filtering function 
\begin{align} \label{eqn:gk}
    \widehat{g}(\nu) := -\frac{i}{2} \tanh\left(-\frac{\beta\nu}{4}\right) \kappa(\nu)\,,
\end{align}
and then have 
\begin{equation}\label{eqn:g}
G = \sum_{a \in \mc{A}} \sum_{\nu \in B_H} \hat{g}(\nu) \left( L_a^\dag L_a\right)_\nu\,.
\end{equation}
Since $\widehat{g}(\nu)$ is an $L^1$-integrable function, its Fourier transform is well-defined: 
\begin{equation}\label{eqn:g_hat}
g(t)=\frac{1}{2\pi}\int^\infty_{-\infty}\widehat{g}(\nu)e^{-it\nu}\, \mathrm{d}\nu\,,
\end{equation}
and 
it is direct to compute
\begin{equation}\label{eqn:G_KMS}
G=\sum_{a \in \mc{A}} \sum_{\nu \in B_H} \widehat{g}(\nu)\left(  L_a^\dag L_a\right)_\nu=\int^\infty_{-\infty}g(t) H_L(t) \ud t\,,
\end{equation}
where 
\begin{equation} \label{def:hll}
    H_L(t) := \sum_{a\in\mathcal{A}} e^{i Ht}\left(L_a^\dag L_a\right)e^{-iHt}\,.
\end{equation}
We emphasize that the choices of $q^a$ in \eqref{eqn:f_a} and $\kappa$ in \eqref{eqn:gk} are key components in our algorithm and will be discussed in detail in \cref{sec:choice_q}.}


With $G$ and $\{L_a\}_{a\in\mathcal{A}}$ be constructed in \eqref{eqn:G_KMS} and \eqref{eqn:L_a_formula}, respectively, the Lindbladian 
\begin{align} \label{eq:lindbladkms}
\mc{L} (X) = i [G,X] + \sum_{a\in\mathcal{A}} \left( L_a^\dagger X L_a - \frac{1}{2}\left\{L_a^\dagger L_a, X  \right\} \right)\,,
\end{align}
satisfies $\si_{\beta}$-KMS DBC by \cref{prop:KMS_DBC}. Then the corresponding Lindblad master equation reads
\begin{equation}\label{eqn:Lindblad_master_equation}
\partial_t\rho=\mc{L}^{\dag} (\rho)= -i [G, \rho] + \sum_{a\in\mathcal{A}} \left( L_a \rho L^\dagger_a - \frac{1}{2}\left\{L_a^\dagger L_a, \rho  \right\} \right)\,.
\end{equation} 

We have discussed the case where coupling operators $\{A^a\}_{a \in \mc{A}}$ are self-adjoint. In fact, one can generally consider a set of couplings such that $\{A^a\}_{a\in \mathcal{A}}=\{(A^a)^\dagger\}_{a\in \mathcal{A}}$, and construct the corresponding jump operators $\{L_a\}_{a \in \mc{A}}$, the coherent term $G$, and the Lindbladian $\mc{L}$ as in 
\cref{eqn:L_a_formula,eqn:G_KMS,eq:lindbladkms}, respectively. It is easy to see that the Lindblad dynamics defined in this way still satisfies the KMS detailed balance. Indeed, let $L_{a}$ and $L_{a,{\rm adj}}$ be the jumps associated with some $A^a$ and $(A^a)^\dag$ by \eqref{eqn:L_a_formula} (without loss of generality, $A^a \neq (A^a)^\dag$). We then define self-adjoint operators
\begin{equation*}
    A^a_+ = \frac{A^a + (A^a)^\dag}{\sqrt{2}} \,,\q A^a_- = \frac{A^a - (A^a)^\dag}{\sqrt{2}i}\,, 
\end{equation*}
such that $\sqrt{2} A^a = A_+^a + i A_-^a$ and denote by $L_{a,+}$ and $L_{a,-}$ the associated jumps \eqref{eqn:L_a_formula}. A direct computation by using the time-domain representation \eqref{eqn:L_a_formula} gives 
\begin{align*}
    L_a \rho L_a^\dag + L_{a,{\rm adj}} \rho L_{a,{\rm adj}}^\dag & = \iint_{\R^2} f^a(t)\overline{f^a(t')} \left( A^a(t) \rho (A^a(t'))^\dag  +  (A^a(t))^\dag \rho A^a(t')   \right) \\
    & = \iint_{\R^2} f^a(t)\overline{f^a(t')} \left(A_+^a(t) \rho A_+^a(t') +  A_-^a(t)  \rho A_-^a(t') \right) \\
    & =  L_{a,+} \rho L_{a,+}^\dag + L_{a,-} \rho L_{a,-}^\dag \,,
\end{align*}
thanks to 
\begin{align*}
     & A^a(t) \rho (A^a(t'))^\dag  +  (A^a(t))^\dag \rho A^a(t') \\
    = & \frac{1}{2} \left(A_+^a(t) + i A_-^a(t)\right) \rho \left(A_+^a(t') - i A_-^a(t')\right)  + \frac{1}{2} \left(A_+^a(t) - i A_-^a(t)\right) \rho \left(A_+^a(t') + i A_-^a(t')\right) \\
    = &  A_+^a(t) \rho A_+^a(t') +  A_-^a(t)  \rho A_-^a(t')\,. 
\end{align*}
Here $f^a$ is defined via \eqref{eqn:f_a} with $q^a(\nu)$ satisfies \eqref{eqn:q_sym}. Similarly, one can check 
\begin{align*}
   \{L^\dagger_{a}L_{a}, \rho\} + \{L^\dagger_{a,{\rm adj}}L_{a,{\rm adj}}, \rho\}=\{L^\dagger_{a, +}L_{a, +}, \rho\} + \{L^\dagger_{a, -}L_{a, -}, \rho\}\,,
\end{align*}
and hereby our claim holds.  


\subsection{Choice of weighting function \texorpdfstring{$q(\nu)$ and $\kappa(\nu)$}{Lg}}\label{sec:choice_q}

In order to efficiently implement the jump operators $\{L_a\}_{a\in\mathcal{A}}$ in \eqref{eqn:L_a_formula} and the coherent term $G$ in \eqref{eqn:G_KMS}, we need to approximate the involved \re{time--domain integrals} by a numerical quadrature in a finite region. \re{Specifically, we fix a truncation time $T > 0$ and approximate the integral $\int^\infty_{-\infty}$ by $\int_{-T}^{T}$, which is then further approximated by numerical quadrature. To ensure efficiency, we need a small $T$, meaning that $f^a(t),g(t)$ should be smooth functions that decay rapidly as $\abs{t}\to \infty$. To this end, we shall assume that $q^a$ and $\kappa$ are compactly supported Gevrey functions. We first recall the definition of Gevrey functions below \cite{Adwan_2017}.}


\begin{defn}[Gevrey function] \label{def:gevrey}
Let $\Omega\subseteq \RR^d$ be a domain. A complex-valued $C^\infty$ function $h: \Omega\to \CC$ is a \emph{Gevrey function} of order $s\ge 0$, if there exist constants $C_1,C_2>0$ such that for every $d$-tuple of nonnegative integers $\alpha = (\alpha_1,\alpha_2,\ldots,\alpha_d)$, 
\begin{equation}
\left\|\partial^\alpha h\right\|_{L^\infty(\Omega)}\leq C_1C^{|\alpha|}_2|\alpha|^{|\alpha|s}\,,
\end{equation}
where $|\alpha|=\sum^d_{i=1} |\alpha_i|$. For fixed constants $C_1,C_2,s$, the set of Gevrey functions is denoted by $\mathcal{G}^{s}_{C_1,C_2}(\Omega)$. Furthermore, $\mathcal{G}^s=\bigcup_{C_1,C_2>0}\mathcal{G}^s_{C_1,C_2}$.
\end{defn}

Some useful properties of Gevrey functions are collected in \cref{sec:gevrey}. In particular, the product of two Gevrey functions is a Gevrey function (\cref{lem:product_gevrey}); certain compositions of Gevrey functions are Gevrey functions (\cref{prop:exp_gevrey}); the Fourier transform of compactly supported Gevrey functions satisfies Paley-Wiener type estimates (\cref{lem:hat_Gev_decay}).

\begin{assumption}[Weighting function]
\label{assumption:q}
For $\beta,S>0$, suppose that $q(\nu)$ is a weighting function of the form 
\re{$q(\nu)=u(\beta \nu)w(\nu/S)$} with the following conditions:
\begin{itemize}
    \item  \emph{(Symmetry)} For any $\nu\in\RR$, $u(\nu)=\overline{u(-\nu)},w(\nu)=\overline{w(-\nu)}$. 
    \item \emph{(Compact support)} \re{$w(\nu)=1$ when $|\nu|\leq 1/2$ and $w(\nu)=0$ when $|\nu|\geq 1$.} 
    \item \emph{(Gevrey)} There exists $\xi_q,\xi_u,\xi_w\geq 1$ and $s \ge 1$ such that 
    \begin{equation*}
         u(\nu)e^{-\nu/4}\in\mathcal{G}^{s}_{\xi_q,\xi_u}(\mathbb{R})\,,\q  w\in\mathcal{G}^{s}_{\xi_q,\xi_w}(\mathbb{R})\,.
    \end{equation*}
\end{itemize}
 In addition, we assume $\frac{\rd}{\rd \nu} (u(\nu)e^{-\nu/4}) \in L^1(\R)$ and denote
    \begin{equation*}
        C_{1,u} :=  \Big\| \frac{\rd}{\rd \nu} (u(\nu)e^{-\nu/4}) \Big\|_{L^1(\R)}\,.
    \end{equation*}
\end{assumption}
\begin{rem} \re{The above Gevrey functions with compact support are technical assumptions for our complexity analysis. Specifically, due to the properties of Gevrey functions discussed in \cref{sec:gevrey}, we can demonstrate the rapid decay of $f^a$ and $g$ (see \cref{lem:property_f_g}). Combining this with the compact support property of $\widehat{q}(\nu)$, we can apply the Poisson summation formula to control the quadrature error and normalization constant (see \cref{lem:discretization_L_G}). In practical applications, we believe it is sufficient to choose a weighting function $q(\nu)$ with certain smoothness and decay to ensure efficient simulation of the quantum Gibbs sampler.}
\end{rem}

\re{In the above assumption, $S$ is an adjustable parameter to control the support of $q$, and it is easy to see that $w(\nu/S)$ is supported on $[-S,S]$ and satisfies $w(\nu/S)\in \mathcal{G}^{s}_{\xi_q,\xi_w/S}$. Then, for a weighting function $q$ in \cref{assumption:q}, there holds (\cref{lem:product_gevrey}) 
\begin{equation} \label{eq:relaq1}
    q(\nu)=\overline{q(-\nu)}\,,\q \mathrm{supp}(q)\subset [-S,S]\,,
\end{equation}
and 
\begin{equation} \label{eq:relaq2}
    q(\nu)e^{-\beta \nu/4}\in \mathcal{G}^{s,}_{\xi_q^2,\beta \xi_u+\xi_w/S}(\mathbb{R})\,.
\end{equation}
Intuitively, the functions $u(\beta\nu)$ and $w(\nu/S)$ control the magnitude and support of $q$, respectively, which in turn controls the energy transition induced by jump operators $\{L_a\}_{a \in \mc{A}}$~\eqref{eqn:L_a_formula}.
The existence of a \emph{bump function} $w$ required in \cref{assumption:q} is guaranteed by \cite[Corollary 2.8]{Adwan_2017}: for any $s_w>1$, we can construct $w\in \mathcal{G}^{s_w}_{\xi_q,\xi_w}$ with $\mathrm{supp}(w)\subset [-1,1]$ and $w(\nu)=1$ for $|\nu|\leq 1/2$.}

\re{We next discuss the choice of the weighting function $\kappa$ for the coherent term. Let $\{L_a\}_{a \in \mc{A}}$ be the jump operators defined via \cref{eqn:L_a_formula} with $q^a$ satisfying \cref{assumption:q}. We recall the representation \eqref{eqn:g} of $G$ and compute, for $a \in \mc{A}$, 
\begin{align*}
    L_a^\dag L_a 
    & = \sum_{\nu ,\nu'\in B_H} e^{-\beta(\nu +\nu')/4} q^a(\nu)\overline{q^a(\nu')} \left(A^a_{\nu'}\right)^\dagger A^a_{\nu} \\
    & = \sum_{\lad_j, \lad_{j'} \in {\rm Spec}(H)}  \sum_{\lad_i \in {\rm Spec}(H)} e^{-\beta(2\lad_i - \lad_j - \lad_{j'})/4} q^a(\lad_i - \lad_j)\overline{q^a(\lad_{i}-\lad_{j'})}\, P_{j'} \left(A^a\right)^\dagger P_i A^a P_j\,,
\end{align*}
which, by \eqref{eqn:A_v}, implies that for $\nu \in B_H$, 
\begin{align*}
    (L_a^\dag L_a)_\nu 
    & = \sum_{\lad_{j'} - \lad_j = \nu}  P_{j'} \left( \sum_{\lad_i \in {\rm Spec}(H)} e^{-\beta(2\lad_i - \lad_j - \lad_{j'})/4} q^a(\lad_i - \lad_j)\overline{q^a(\lad_{i}-\lad_{j'})}\,  \left(A^a\right)^\dagger P_i A^a \right) P_j\,.
\end{align*}
By \cref{eq:relaq1} from \cref{assumption:q}, we have that when $|\lad_j - \lad_j'| > 2S$, for any $\ww \in \R$, 
\begin{equation*}
    q^a(\ww - \lad_j)\overline{q^a(\ww - \lad_{j'})} = 0\,.
\end{equation*}
It follows that $(L_a^\dag L_a)_\nu  = 0$ if $|\nu| > 2S$ and we can relax the constraint \eqref{eq:condkappa} to $\kappa(\nu) = 1$ for $|v| \le 2S$. Without loss of generality, we let $\kappa$ be
\begin{equation} \label{assp:kappa}
    \kappa(\nu) = w(\nu/4 S) \in \mathcal{G}^{s}_{\xi_q,\xi_w/4S}\,,
\end{equation}
with $w(\dd)$ given in \cref{assumption:q}.}

We first prove that the filtering functions in the time domain $f^a(t)$ and $g(t)$ associated with $q^a$ and $\kappa$ given above decay rapidly (\cref{lem:property_f_g}). This further allows us to 
show that a simple quadrature scheme (trapezoidal rule) can efficiently approximate $\{L_a\}_{a\in\mathcal{A}}$ and $G$ with high accuracy, \re{as summarized in \cref{lem:discretization_L_G}.} Specifically, given $M=2^{\mathfrak{m}-1}$ with $\mathfrak{m}\in\mathbb{N}_+$ and $\tau>0$, the quadrature points are given by 
\begin{equation}\label{eqn:quad_t}
t_m = -M\tau+m\tau\,, \quad 0\leq m< 2M\,.
\end{equation}
The quadrature error can be controlled as follows, and its proof 
is given in \cref{sec:discretization}. 
\rre{For simplicity, in the main text, we assume that the parameters $\xi_q$, $\xi_u$, $\xi_w$ and $C_{1,u}$ for Gevrey functions in \cref{assumption:q} are universal constants and include them in the asymptotic notations in related statements. We also assume $S \ge 1$ in the following discussion for simplicity. 
The explicit dependence on these parameters will be provided in \cref{sec:discretization} for completeness.}



\begin{prop}[Quadrature error]\label{lem:discretization_L_G} 
Under \cref{assumption:q}, we assume $\beta>0$, \rre{$S \ge 1$}, $\|A^a\|\leq 1$ for any $a\in\mc{A}$. When \re{$\tau\le \frac{\pi}{\norm{H} + 2S}$} and  
\begin{equation} \label{eq:condquad}
  \rre{(M-1)\tau = \Omega \left((\beta + 2) \log (\beta + 2) \right)\,,}
\end{equation} 
it holds that 
\begin{equation*}
\left\|L_a-\sum^{2 M - 1}_{m = 0}f^a(t_m) A^a(t_m) \tau\right\|\leq C_fS\exp\left(-\frac{s ((M-1)\tau)^{1/s}}{2(\beta \xi_u + \xi_w/S)^{1/s}e} \right)\,,
\end{equation*}
with
\begin{align*}
  \rre{C_f = \mathcal{O}\big((\beta + 2)^{1/s}\big)\,,}
\end{align*}
and 
\begin{equation*}
\begin{aligned}
&\left\|G-\sum^{2 M - 1}_{m = 0}g(t_m) H_L(t_m) \tau\right\| \le C_g S |\mathcal{A}|\exp\left(-\frac{s((M-1)\tau)^{1/s}}{2e(\beta+\xi_w/4S)^{1/s}}\right)\,,
\end{aligned}
\end{equation*}
with 
\begin{equation*} 
     \rre{C_g = {\mathcal{O}}\big((\beta + 2)^{1/s} \log^2((\beta + 2)  S)\big)\,.}
\end{equation*}
Here $\Omega(\cdot)$, $\mathcal{O}(\cdot)$ absorbs some constant depending \rre{on $s$, $\xi_q$, $\xi_u$, $\xi_w$ and $C_{1,u}$.} 
\end{prop}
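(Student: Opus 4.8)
The plan is to handle both quadratures in parallel, since $L_a = \int_{-\infty}^\infty f^a(t)\,A^a(t)\,\rd t$ (from \eqref{eqn:L_a_formula}) and $G = \int_{-\infty}^\infty g(t)\,H_L(t)\,\rd t$ (from \eqref{eqn:G_KMS}--\eqref{def:hll}) have the same structure: an operator-valued Heisenberg orbit integrated against a scalar filter whose inverse Fourier transform is a \emph{compactly supported} Gevrey function. Indeed, by \eqref{eq:relaq2} we have $\widehat{f^a}(\nu)=q^a(\nu)e^{-\beta\nu/4}\in\mathcal{G}^s_{\xi_q^2,\,\beta\xi_u+\xi_w/S}$ with $\mathrm{supp}(\widehat{f^a})\subset[-S,S]$, and by \eqref{assp:kappa} the function $\widehat{g}(\nu)=-\tfrac{i}{2}\tanh(-\beta\nu/4)\kappa(\nu)$ is supported in $[-4S,4S]$. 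I would decompose the quadrature error into an \emph{aliasing} part (integral minus trapezoidal rule over the full grid $\tau\ZZ$) and a \emph{truncation} part (the tail of the full-grid sum beyond the finite points $\{t_m\}_{m=0}^{2M-1}$, i.e.\ $\abs{j}\ge M$).

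The key observation is that the aliasing part vanishes identically. Expanding $A^a(t)=\sum_{\nu\in B_H}A^a_\nu e^{i\nu t}$ and applying the Poisson summation formula to $t\mapsto f^a(t)e^{i\nu t}$ gives
\[
\tau\sum_{j\in\ZZ} f^a(j\tau)\,A^a(j\tau)=\sum_{\nu\in B_H}A^a_\nu\sum_{k\in\ZZ}\widehat{f^a}\!\Big(\tfrac{2\pi k}{\tau}+\nu\Big).
\]
Since $\abs{\nu}\le 2\norm{H}$ for every Bohr frequency and $\mathrm{supp}(\widehat{f^a})\subset[-S,S]$, the step-size condition $\tau\le \pi/(\norm{H}+2S)$ pushes $\tfrac{2\pi k}{\tau}+\nu$ out of $[-S,S]$ for all $k\neq0$, so only $k=0$ survives and the right-hand side collapses to $\sum_{\nu}A^a_\nu\widehat{f^a}(\nu)=L_a$ by \eqref{defaf}. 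The same computation, now with $\mathrm{supp}(\widehat{g})\subset[-4S,4S]$ (and using that $(L_a^\dag L_a)_\nu$ is supported in $\abs{\nu}\le 2S$, as shown before \eqref{assp:kappa}), yields $\tau\sum_{j\in\ZZ}g(j\tau)\,H_L(j\tau)=G$ under the \emph{same} condition on $\tau$. Hence the infinite trapezoidal rule is exact and the entire error reduces to the truncation tail.

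It then remains to bound $\tau\sum_{\abs{j}\ge M}\abs{f^a(j\tau)}$ and $\tau\sum_{\abs{j}\ge M}\abs{g(j\tau)}$, using the norm identities $\norm{A^a(t)}=\norm{A^a}\le1$ and $\norm{H_L(t)}\le \sum_{a}\norm{L_a}^2\le\abs{\mathcal{A}}\max_a\norm{f^a}_{L^1}^2$ (the last step via $\norm{L_a}\le\norm{f^a}_{L^1}$). Here I would invoke the sub-exponential decay of the filters from \cref{lem:property_f_g}, itself a consequence of the Paley--Wiener estimate \cref{lem:hat_Gev_decay} for Fourier transforms of compactly supported Gevrey functions, giving $\abs{f^a(t)}\lesssim S\exp\!\big(-\tfrac{s\abs{t}^{1/s}}{e(\beta\xi_u+\xi_w/S)^{1/s}}\big)$ and an analogous bound for $g$. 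The tail is then controlled by the elementary splitting $e^{-c\,t^{1/s}}\le e^{-\frac{c}{2}((M-1)\tau)^{1/s}}e^{-\frac{c}{2}t^{1/s}}$ for $t\ge(M-1)\tau$, which factors out exactly the stated rate $\exp\!\big(-\tfrac{s((M-1)\tau)^{1/s}}{2e(\beta\xi_u+\xi_w/S)^{1/s}}\big)$ while leaving a convergent residual sum that is absorbed into the prefactor $C_fS$ (resp.\ $C_gS\abs{\mathcal{A}}$); the hypothesis \eqref{eq:condquad} ensures $(M-1)\tau$ is large enough to be in this asymptotic regime.

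The main obstacle is not the aliasing cancellation, which is clean once the compact-support and step-size bookkeeping are in place, but the sharp tracking of constants in the tail estimate, namely obtaining $C_f=\Or((\beta+2)^{1/s})$ and $C_g=\Or((\beta+2)^{1/s}\log^2((\beta+2)S))$. For $C_f$ this requires carefully propagating the Gevrey constant $C_2=\beta\xi_u+\xi_w/S$ through both the decay rate and the prefactor rather than bounding the residual sum crudely. For $C_g$ the extra difficulty is that $\widehat{g}$ carries the factor $\tanh(-\beta\nu/4)$, which is analytic but not globally Gevrey; multiplying by the compactly supported bump $\kappa$ restores the Gevrey property (via \cref{lem:product_gevrey} and the composition/product rules for Gevrey functions), but the resulting Gevrey constants inherit the $\beta$- and $S$-dependence of $\tanh$ on $[-4S,4S]$, which is precisely the source of the additional $\log^2((\beta+2)S)$ factor. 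These constant-tracking steps are what I would defer to the detailed computation in \cref{sec:discretization}.
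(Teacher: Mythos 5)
Your proposal follows essentially the same route as the paper's proof in \cref{sec:discretization}: split the error into an aliasing part, killed exactly by the Poisson summation formula together with the compact supports of $\widehat{f^a}$ and $\widehat{g}$ and the step-size condition, plus a truncation tail controlled by the Paley--Wiener decay of \cref{lem:property_f_g}, with $\norm{H_L(t)}\le \sum_a\norm{L_a}^2\le|\mc{A}|\max_a\norm{f^a}_{L^1}^2$ handling the coherent term. One small correction: the $\log^2((\beta+2)S)$ factor in $C_g$ does not come from the Gevrey constants of $\tanh(-\beta\nu/4)\kappa(\nu)$ (the paper shows $\tanh(-\beta\nu/4)\in\mathcal{G}^1_{1,\beta}(\RR)$ globally, so $\kappa$ is needed only for $L^1$-integrability and compact support), but precisely from the bound $\norm{f^a}_{L^1}^2=\Or(\log^2(\cdot))$ that you already wrote down; also note that the paper obtains the sharper prefactor $(\beta+2)^{1/s}$ by absorbing the Jacobian factor $u^{s-1}$ into half the exponential under condition \eqref{eq:condquad} before integrating, rather than by the cruder half-splitting of the exponential you sketch.
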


Thanks to \cref{lem:discretization_L_G}, to approximately block encode $\{L_a\}$ and $G$, it suffices to construct block-encodings for the discretized quantities 
\[
\sum^{2M-1}_{m = 0}f^a(t_m)e^{i H t_m}A^ae^{-i H t_m}\tau\,,
\]
and
\[\re{
\sum^{2M-1}_{n= 0} g(t_n) e^{iH t_n}\left(\sum_{a \in \mc{A}} L^\dagger_aL_a\right)e^{-iH t_n}\tau\,.}
\]
In our algorithm, we construct these two block encodings using LCU (see \cref{app:LCU}). This utilizes block encodings of $A^a$~\eqref{eqn:A_block_encoding}, controlled Hamiltonian simulation~\eqref{eqn:U_H}, and prepare oracles for $f$ (\cref{eqn:prep_f,eqn:prep_f_bar}) or $g$ (\cref{eqn:prep_g,eqn:prep_g_bar}).  The detailed constructions are presented in the next subsection (see \eqref{eqn:U_L} and \eqref{eqn:U_G}).

\begin{rem}
Bounding the approximation error for $\{L_a\}_{a\in\mathcal{A}}$ and $G$ in the operator norm is a nontrivial task. 
\cite{ChenKastoryanoBrandaoEtAl2023} introduces a ``rounding Hamiltonian'' technique to bound the quadrature error in the frequency domain. By choosing weighting functions in \cref{assumption:q}, we can use the Poisson summation formula to simplify the quadrature error analysis. In particular, we can bound the quadrature error in the time domain without using the ``rounding Hamiltonian'' technique. 
\end{rem}

Now we provide two specific examples of $q$ satisfying \cref{assumption:q}. \begin{itemize}
\item Metropolis-type:  
\begin{equation}\label{eqn:q_metroplis}
q(\nu)= e^{-\sqrt{1+\beta^2\nu^2}/4}w(\nu/S) \q \text{with} \q u(\nu) = e^{-\sqrt{1+ \nu^2}/4}\,,
\end{equation}
where $u(\nu)e^{-\frac{\nu}{4}} = e^{-\frac{\sqrt{1+\nu^2} + \nu}{4}} \in\mathcal{G}^1_{1,7/2}$ is a \re{Gevrey function of order $1$} and its derivative is $L^1$-integrable; see 
\cref{prop:exp_gevrey}. 
When $\beta\gg 1$, $q(\nu)$ in \cref{eqn:q_metroplis} gives a smoothed version of the Metropolis-type filter (similar to the Glauber-type filter):
\begin{equation}\label{eqn:f_metroplis}
\widehat{f}(\nu)= q(\nu) e^{-\beta\nu/4} \approx 
\min\left\{1,e^{-\beta\nu/2}\right\},\quad \nu\in [-S/2,S/2].
\end{equation}
\item Gaussian-type\footnote{In this case, the Gaussian functions already decay rapidly. The multiplication with a bump function is for purely technical reasons to ensure $\hat{f}(\nu)$ is compactly supported.}:
\begin{equation}\label{eqn:q_gaussian}
q(\nu)= e^{-(\beta\nu)^2/8}w(\nu/S) \q \text{with} \q u(\nu) = e^{-\nu^2/8}\,,
\end{equation}
Here $u(\nu)e^{-\frac{\nu}{4}} = e^{-\frac{(\nu+1)^2 - 1}{8}}$ is a Gevrey function of order $1$ 
by \cite[Proposition B.1]{Gus_2019} and the $L^1$-integrability of its derivative is straightforward.
Setting $S=\mathcal{O}(1/\beta)$,  there holds 
\begin{equation}
\widehat{f}(\nu)=q(\nu)e^{-\nu/4} \propto e^{-(\beta\nu+1)^2/8}\,,
\end{equation}
which is approximately a Gaussian function concentrated at $-\beta^{-1}$ with width $\Or(\beta^{-1})$.
\end{itemize}

A comparison of the shapes of the Metropolis-type and Gaussian-type filtering function $\widehat{f}(\nu)$ is shown in \cref{fig:f}. The support size for the Gaussian choice decreases as $\mathcal{O}(\beta^{-1})$, which can cause inefficiency because the magnitude of a local move in Monte Carlo simulations stays around order 1, regardless of the value of $\beta$. 

\begin{figure}[htbp]
     \subfloat{
         \centering
         \includegraphics[width=0.48\textwidth]{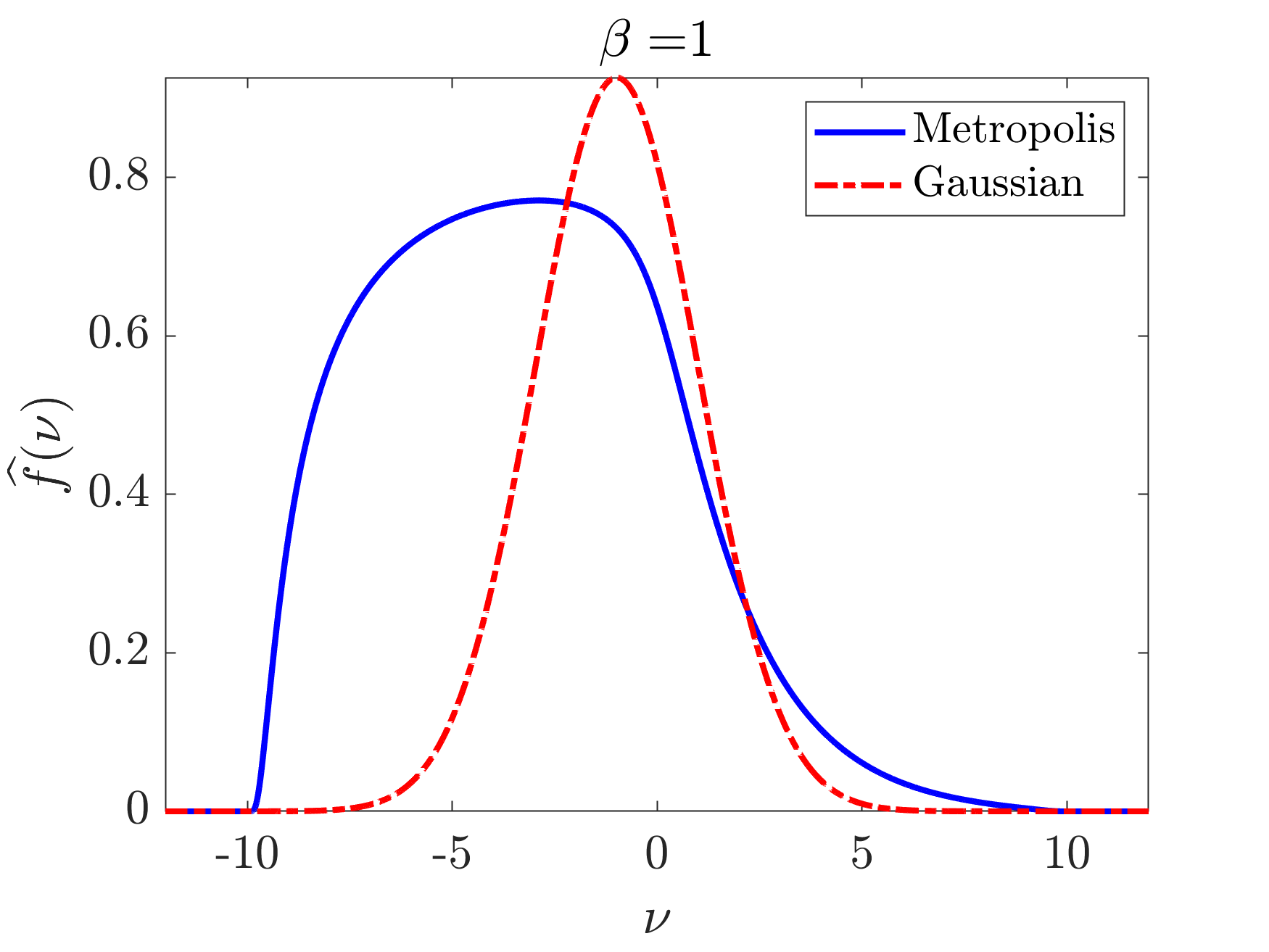}
         \includegraphics[width=0.48\textwidth]{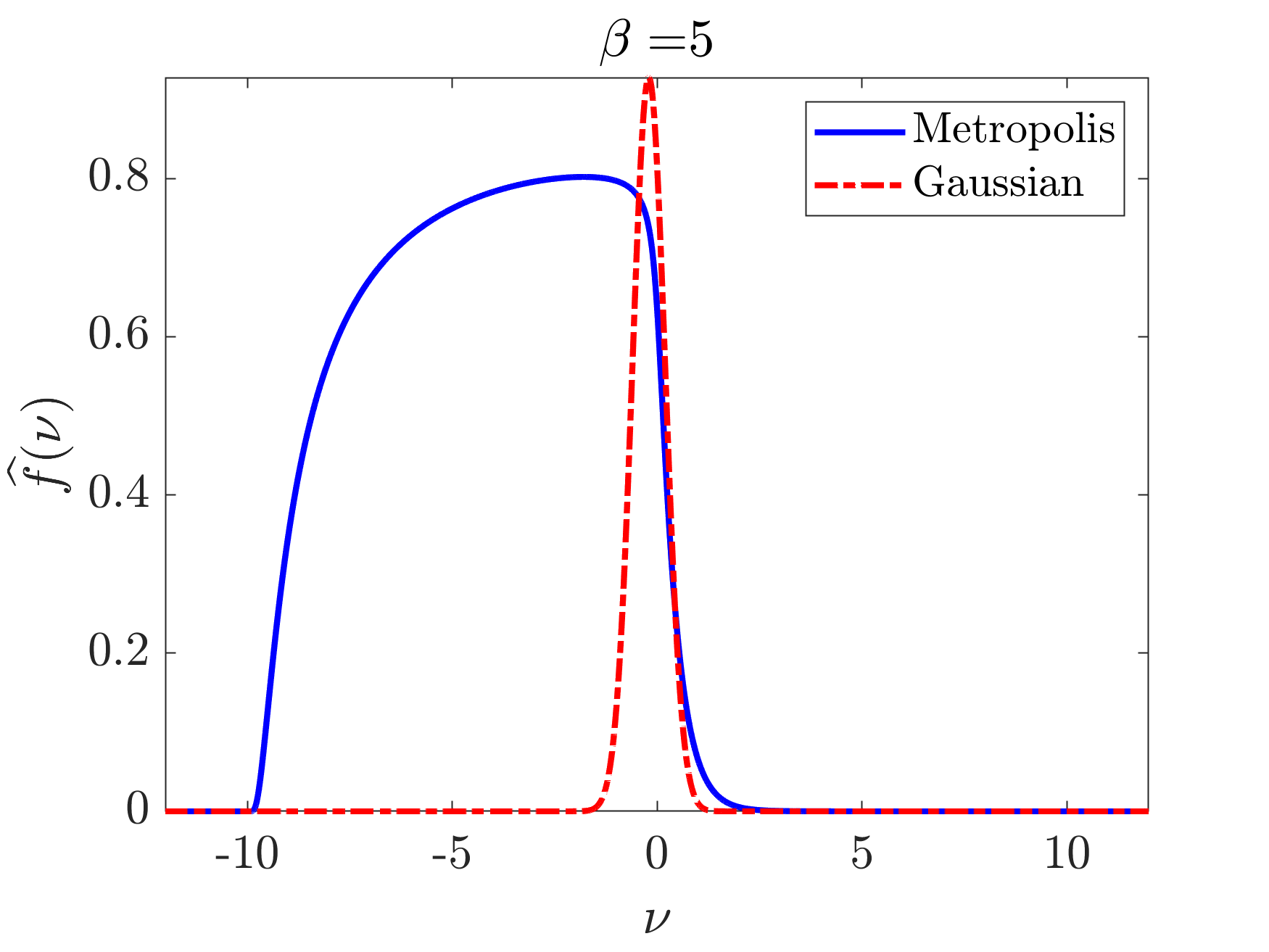}
     }
     \caption{
     \label{fig:f} Comparison between Metropolis-type and Gaussian-type filtering functions $\widehat{f}(\nu)$ in the frequency domain with $\beta=1$ (left) and $\beta=5$ (right). For simplicity, the bump function is chosen to be \magg{$w(\nu/S) = \exp(-20/(10^2-\min\{10, |\/S|\})^2)$ with $S=10$}.
     The approximate support size of $\widehat{f}(\nu)$ remains $\mathcal{O}(1)$ for the Metropolis-type filter as $\beta \to \infty$, while it narrows to $\mathcal{O}(\beta^{-1})$ for the Gaussian-type  filter.
}
\end{figure}

\subsection{Efficient simulation of the Lindblad master equation \texorpdfstring{\eqref{eqn:Lindblad_master_equation}}{Lg}}\label{sec:simulation}

In this section, we discuss the efficient simulation of the Lindblad equation in \eqref{eqn:Lindblad_master_equation}. For simplicity, we assume that \re{$q^a=q$ satisfies \cref{assumption:q} and then
$f^a=f$.} Our construction uses the block encoding input model and the linear combination of unitaries (see \cref{app:block_encoding,app:LCU}).  
We also assume $\max_{a\in\mathcal{A}}\|A^a\|\leq 1$ to ensure that efficient block encodings of $A^a$ are available (see \cref{eqn:A_block_encoding}).


Thanks to our algorithm's use of a discrete set of jump operators for
the Lindbladian, we can directly apply efficient Lindblad simulation
quantum algorithms, such as those in~\cite{CW17,LiWang2023,
ChenKastoryanoGilyen2023,DingLiLin2024}, to prepare the Gibbs state, once the efficient constructions of the block encodings of $\{L_a\}$ and $G$ are available, which will be the focus of the rest of this section.

According to \cref{lem:discretization_L_G}, in the following discussion, we set the integer $\mathfrak{m}$ large enough and 
consider the quadrature points $\{t_m\}^{2M-1}_{m=0}$ ($M=2^{\mathfrak{m}-1}$) as in \cref{eqn:quad_t} such that \cref{eq:condquad} holds. Without loss of generality, we assume $|\mathcal{A}|=2^{\mathfrak{a}}$ with $\mathfrak{a} \in \mathbb{N}_+$.



For the quantum simulation, we assume access to the following oracles:
\begin{itemize}
    \item Block encoding $U_{\mathcal{A}}$  of the coupling operators $\{A^a\}_{a\in\mathcal{A}}$:
    \begin{equation}\label{eqn:A_block_encoding}
     (\mathrm{I}_{\mathfrak{a}}\otimes \bra{0^{\mathfrak{b}}}\otimes \mathrm{I}_{n})\cdot U_\mathcal{A}\cdot (\mathrm{I}_{\mathfrak{a}}\otimes \ket{0^{\mathfrak{b}}}\otimes \mathrm{I}_{n})=\sum_a \ket{a}\bra{a}\otimes A^a/Z_{\mathcal{A}}\,.
    \end{equation}
     We assume that the block encoding factor  $Z_{\mathcal{A}}$ can be chosen to satisfy $\max_{a\in\mathcal{A}} \|A^a\|\leq Z_{\mathcal{A}} \leq 1$. Here $\ket{0^{\mathfrak{b}}}$ represents the ancilla qubits utilized in the block-encoding of $A^a$, $\mathrm{I}_{\mathfrak{a}}$ is the identity matrix acts on the index register. 
    
    \item Controlled Hamiltonian simulations\footnote{The circuit construction is similar to the controlled Hamiltonian simulations in the standard quantum phase estimation, and the total Hamiltonian simulation time required is $\Or(M\tau)$.} for $\{t_m\}^{2M-1}_{m=0}$:
    \begin{equation}\label{eqn:U_H}
    U_{H}=\sum^{2M-1}_{m=0} \ketbra{t_m}\otimes \exp(-i t_m H)\,.
    \end{equation}
    \item Prepare oracle for $a\in\mathcal{A}$, acting on the index register:
    
    
    \begin{equation}\label{eqn:A}
\textbf{Prep}_{\mathcal{A}}=\mathrm{H}^{\otimes \mathfrak{a}}\ket{0^{\mathfrak{a}}}=\frac{1}{\sqrt{|\mathcal{A}|}}\sum_{a\in\mathcal{A}}\ket{a}\,,
    \end{equation}
which is used to implement LCU for the sum over $a$ in $\mathcal{A}$ appearing in $G$. Here $\mathrm{H}^{\otimes \mathfrak{a}}$ are Hadamard gates acting the index register and are self-adjoint.

    \item Prepare oracles for the filtering functions $f$, acting on the time register:
    \begin{equation}\label{eqn:prep_f}
     \textbf{Prep}_{f}: \ket{0^{\mathfrak{m}}}=\frac{1}{\sqrt{Z_f}}\sum^{2M-1}_{m=0}\sqrt{f(t_m)\tau}\ket{t_m}\,,
    \end{equation}
    and
   \begin{equation}\label{eqn:prep_f_bar}
        \textbf{Prep}_{\overline{f}}: \ket{0^{\mathfrak{m}}}=\frac{1}{\sqrt{Z_f}}\sum^{2M-1}_{m=0}\sqrt{\overline{f(t_m)}\tau}\ket{t_m}\,.
    \end{equation}
Here the block encoding factor $ Z_f := \sum^{2M-1}_{m=0}|f(t_m)|\tau$ is bounded by \cref{lem:zfzg} with \cref{eq:inttf}:
    \begin{align} \label{eq:zf}
        Z_f =  \mathcal{O}\left( (\xi_q C_{1,u} + \xi^2_q \xi_w) \log ((\beta \xi_u + \xi_w/S)\max\{S,1\})\right)\,.
    \end{align} 
     \item \re{Prepare oracles for the $g$ function, acting on the frequency register:
     \begin{equation}\label{eqn:prep_g}
     \textbf{Prep}_{g}: \ket{0^\mathfrak{m}}=\frac{1}{\sqrt{Z_g}}\sum^{2M-1}_{m=0}\sqrt{g(t_m)\tau}\ket{t_m}\,,
     \end{equation}
    and
    \begin{equation}\label{eqn:prep_g_bar}
    \textbf{Prep}_{\overline{g}}: \ket{0^\mathfrak{m}}=\frac{1}{\sqrt{Z_g}}\sum^{2M-1}_{m=0}\sqrt{\overline{g(t_m)}\tau}\ket{t_m}\,,
     \end{equation}
 with the block encoding factor $Z_g=\sum^{2M-1}_{m=0}|g(t_m)|\tau$ bounded by \cref{lem:zfzg} with \cref{eq:inttg}}:
\begin{align} \label{eq:zg}
  \re{Z_g = \Or\left(\xi_q(1+\log((\beta +\xi_w/S)\max\{S,1\}))\right)\,.}
\end{align}
\end{itemize}

Using these oracles, according to \cref{lem:discretization_L_G}, we can apply LCU with \eqref{eqn:A_block_encoding}, \eqref{eqn:U_H}, \eqref{eqn:prep_f}, and \eqref{eqn:prep_f_bar} to construct a 
\re{$$\left(Z_fZ_{\mathcal{A}},\mathfrak{m}+\mathfrak{a}+\mathfrak{b}, C_fS|\mc{A}|\exp\left(-\frac{s ((M-1)\tau)^{1/s}}{2(\beta \xi_u + \xi_w/S)^{1/s}e} \right)\right)$$}-block encoding $U_{L}$ of $\sum_a\ket{a}\bra{a}\otimes L_a$:
\begin{equation}\label{eqn:U_L}
U_{L}=\underbrace{\left(\textbf{Prep}^\dagger_{\overline{f}}\otimes \mathrm{I}_{\mathfrak{a}+\mathfrak{b}}\otimes \mathrm{I}_{n}\right)}_{\text{LCU prepare oracle}}\cdot\underbrace{\mathrm{I}_{\mathfrak{a}+\mathfrak{b}}\otimes U_{H}^\dagger}_{e^{iHt_m}} \cdot\underbrace{(\mathrm{I}_{\mathfrak{m}}\otimes U_\mathcal{A})}_{\ket{a}\bra{a}\otimes A^a}\cdot \underbrace{\mathrm{I}_{\mathfrak{a}+\mathfrak{b}}\otimes U_{H}}_{e^{-iHt_m}}\cdot\underbrace{\left(\textbf{Prep}_{f}\otimes \mathrm{I}_{\mathfrak{a}+\mathfrak{b}}\otimes \mathrm{I}_{n}\right)}_{\text{LCU prepare oracle}}\,.
\end{equation}
Here $C_f$ is defined in \cref{lem:discretization_L_G}.
The circuit of $U_{L}$ can be found in \cref{fig:circuit}. In \eqref{eqn:U_L}, the total Hamiltonian simulation time required by one query to $U_{L}$ is $\mathcal{O}\left(M\tau\right)$.

 Next, applying two layers of LCU (see \cref{app:LCU}) with \eqref{eqn:A_block_encoding}, \eqref{eqn:U_H}, \eqref{eqn:A}, \eqref{eqn:prep_g}, and \eqref{eqn:prep_g_bar}, we construct the following
\re{\[\left(Z_gZ^2_{\mathcal{A}}|\mathcal{A}|,3\mathfrak{m}+\mathfrak{a}+2\mathfrak{b}, 
C_g S |\mathcal{A}|\exp\left(-\frac{s((M-1)\tau)^{1/s}}{2e(\beta+\xi_w/4S)^{1/s}}\right)\right)\]}-block encoding $U_G$ of $G$:
\begin{equation}\label{eqn:U_G}
\begin{aligned}
U_G=&\underbrace{\left(\mathrm{I}_{\mathfrak{m}}\otimes \textbf{Prep}_{\mathcal{A}}\otimes \mathrm{I}_{\mathfrak{m}+\mathfrak{b}}\otimes \mathrm{I}_{n}\otimes \mathrm{I}_{\mathfrak{m}+\mathfrak{b}}\right)^\dagger\cdot\left(\textbf{Prep}^\dagger_{\overline{g}}\otimes \mathrm{I}_{\mathfrak{a}+\mathfrak{m}+\mathfrak{b}}\otimes \mathrm{I}_{n}\otimes \mathrm{I}_{\mathfrak{m}+\mathfrak{b}}\right)\cdot}_{\text{Two layers of LCU prepare oracles}}\\
&\underbrace{\mathrm{I}_{\mathfrak{a}+2\mathfrak{m}+\mathfrak{b}}\otimes U_{H}^\dagger\otimes \mathrm{I}_{\mathfrak{m}+\mathfrak{b}}}_{\exp(iHt_n)}\cdot \underbrace{\left(\mathrm{I}_{\mathfrak{m}}\otimes \mathrm{I}_{\mathfrak{m}+\mathfrak{b}}\otimes U^\dagger_{L}\right)\left(\mathrm{I}_{\mathfrak{m}}\otimes U_{L}\otimes \mathrm{I}_{\mathfrak{m}+\mathfrak{b}}\right)}_{\ketbra{a}\otimes L^\dagger_aL_a}\cdot\underbrace{\mathrm{I}_{\mathfrak{a}+2\mathfrak{m}+\mathfrak{b}}\otimes U_{H}\otimes \mathrm{I}_{\mathfrak{m}+\mathfrak{b}}}_{\exp(-iHt_n)}\\
&\underbrace{\cdot\left(\textbf{Prep}_{g}\otimes \mathrm{I}_{\mathfrak{a}+\mathfrak{m}+\mathfrak{b}} \otimes \mathrm{I}_{n}\otimes\mathrm{I}_{\mathfrak{m}+\mathfrak{b}}\right)\cdot\left(\mathrm{I}_{\mathfrak{m}}\otimes \textbf{Prep}_{\mathcal{A}}\otimes \mathrm{I}_{\mathfrak{m}+\mathfrak{b}}\otimes \mathrm{I}_{n}\otimes\mathrm{I}_{\mathfrak{m}+\mathfrak{b}}\right)}_{\text{Two layers of LCU  prepare oracles}}\,.    
\end{aligned}
\end{equation}
Here $C_g$ is defined in \cref{lem:discretization_L_G}. We note that $U_H$ for $\exp(-iHt_m)$ and $\exp(-iHt_n)$ are acting on different time registers. The circuit of $U_G$ is given in \cref{fig:circuit}. In \eqref{eqn:U_G}, one query to $U_G$ still requires $\mathcal{O}\left(M\tau\right)$ total Hamiltonian simulation time.

\begin{figure}[htbp]
     \subfloat[Block encoding $U_{L}$ of jump operators $\sum_{a\in\mathcal{A}}\ket{a}\bra{a}\otimes L_a$.]{
         \centering
         \includegraphics[width=0.9\textwidth]{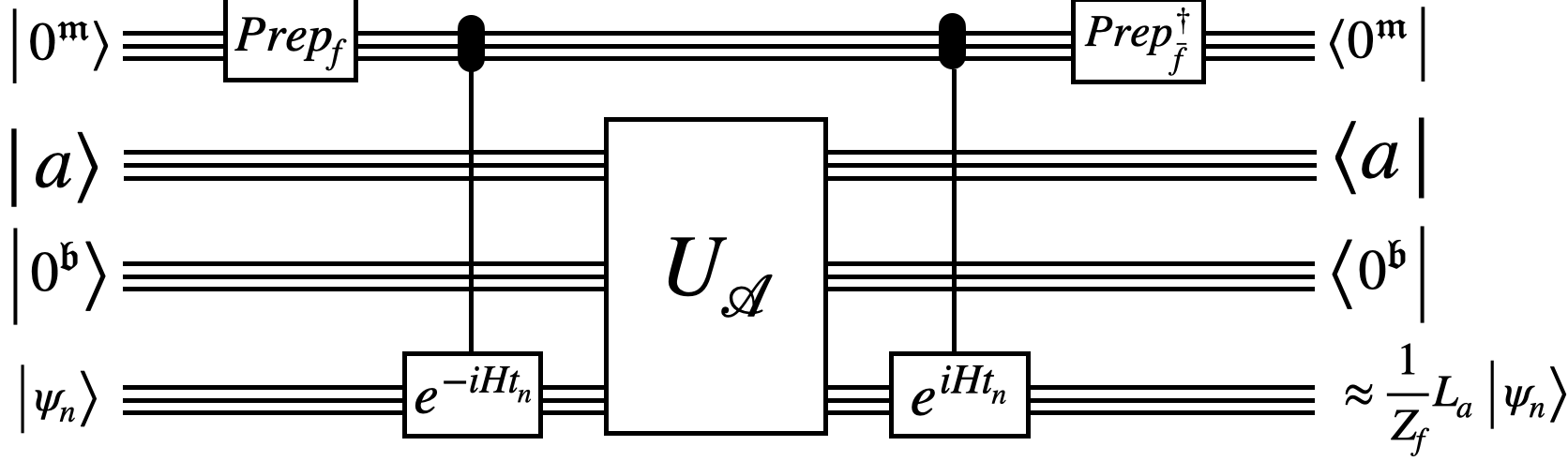}
     }\\
     \subfloat[Block encoding $U_G$ of the coherent term $G$.]{
         \centering
         \includegraphics[width=0.85\textwidth]{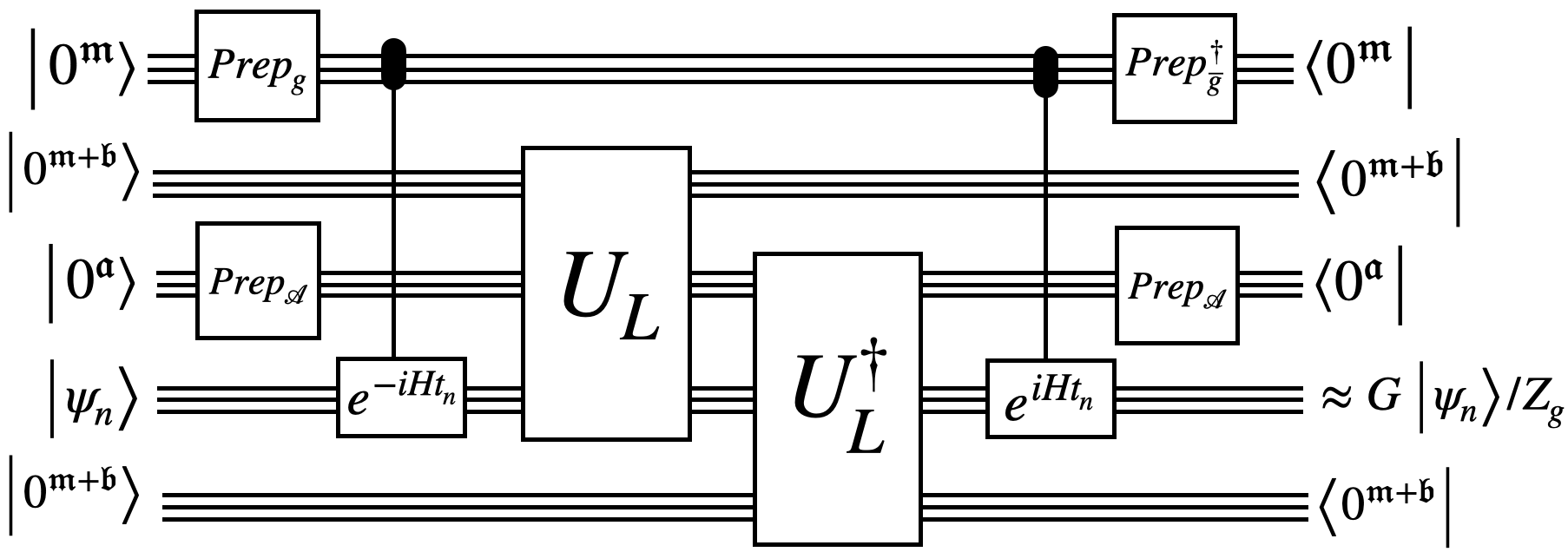}
     }
     \caption{ Quantum circuits for block encodings of $\{L_a\}_{a \in \mc{A}}$ (top) and $G$ (bottom). 
     } \label{fig:circuit}
\end{figure}

After acquiring the block encodings of $\{L_a\}_{a \in \mc{A}}$ and $G$, we can employ the algorithm proposed in \cite{LiWang2023} to simulate \eqref{eqn:Lindblad_master_equation}. The complexity of this algorithm is recalled below.


\begin{thm}[{\cite[Theorem 11]{LiWang2023}}]\label{thm:simulation_LW} Suppose that we are given an $(A_g,\mathfrak{g},\delta)$-block encoding $U_G$ of $G$, and $(A_f,\mathfrak{f},\delta)$-block encodings $\{U_a\}$ for the jumps $\{L_a\}$\footnote{In \cite{LiWang2023}, the authors assume separate access to the block encodings $U_a$ of $L_a$. This is slightly different from our setting assuming the block-encoding $U_L$ of $\ket{a}\bra{a}\otimes L_a$. }. Let $\|\mathcal{L}\|_{\rm be} := A_g + \frac{1}{2}A^2_f|\mathcal{A}|$. For all $t,\epsilon$ with $\delta\leq \epsilon/t\|\mathcal{L}\|_{\rm be}$, there exists a quantum algorithm for simulating \eqref{eqn:Lindblad_master_equation} up to time $t$ with an $\epsilon$-diamond distance using 
\begin{equation} \label{eq:querynumber}
    \mathcal{O}\left(t\|\mathcal{L}\|_{\rm be}\log\left(t\|\mathcal{L}\|_{\rm be}/\epsilon\right)\right)
\end{equation}
queries to $U_G$ and $\{U_{a}\}_{a\in\mathcal{A}}$ and 
\begin{equation*}
\mathcal{O}\left(\log\left(t\|\mathcal{L}\|_{\rm be}/\epsilon\right)\left(\log |\mathcal{A}|+\log\left(t\|\mathcal{L}\|_{\rm be}/\epsilon\right)\right)\right)
\end{equation*}
additional ancilla qubits.

\end{thm}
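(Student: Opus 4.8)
The plan is to realize the Schr\"odinger-picture propagator $e^{t\mathcal{L}^\dagger}$ of \eqref{eqn:Lindblad_master_equation} as a quantum channel assembled from the supplied block encodings through a truncated time-ordered series. First I would split the generator as $\mathcal{L}^\dagger = \mathcal{K} + \mathcal{J}$, where the \emph{drift} part $\mathcal{K}(\rho) = \tilde K \rho + \rho \tilde K^\dagger$ collects the coherent and decay contributions through $\tilde K := -iG - \tfrac12 \sum_{a} L_a^\dagger L_a$, and the \emph{jump} part $\mathcal{J}(\rho) = \sum_a L_a \rho L_a^\dagger$ is completely positive. Since $\mathrm{Re}(\tilde K) = -\tfrac12\sum_a L_a^\dagger L_a \le 0$, the drift semigroup $e^{s\mathcal{K}}(\rho) = e^{s\tilde K}\rho\,(e^{s\tilde K})^\dagger$ is a contraction, and a block encoding of the subnormalized contraction $e^{s\tilde K}$ can be built from $U_G$ together with the jump block encoding $\sum_a \ketbra{a}\otimes L_a$ (which produces $\sum_a L_a^\dagger L_a$) by Taylor-truncating $e^{s\tilde K}$ and combining powers of a block encoding of $\tilde K$ via LCU; the jump part $\mathcal{J}$ is the channel obtained by dilating $\sum_a\ketbra{a}\otimes L_a$ and discarding the index register. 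The quantity $\|\mathcal{L}\|_{\rm be} = A_g + \tfrac12 A_f^2|\mathcal{A}|$ is precisely the block-encoding norm of the generator induced by these inputs: it upper bounds $\|G\|$ and $\tfrac12\|\sum_a L_a^\dagger L_a\|$, and hence controls the overall dissipation rate and sets the natural time scale.

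Next I would expand the propagator in a Duhamel (Dyson) series in the jump insertions,
\begin{equation*}
e^{t\mathcal{L}^\dagger} = \sum_{k=0}^\infty \int_{0\le t_1\le\cdots\le t_k\le t} e^{(t-t_k)\mathcal{K}}\,\mathcal{J}\,e^{(t_k-t_{k-1})\mathcal{K}}\cdots\mathcal{J}\,e^{t_1\mathcal{K}}\,\mathrm{d}t_1\cdots\mathrm{d}t_k\,,
\end{equation*}
and partition $[0,t]$ into $r = \lceil t\,\|\mathcal{L}\|_{\rm be}\rceil$ segments of length $\Delta t = t/r$, so that on each segment the effective parameter $\|\mathcal{L}\|_{\rm be}\Delta t = \Or(1)$. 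Truncating the per-segment series at order $K$ leaves a tail of size roughly $(\|\mathcal{L}\|_{\rm be}\Delta t)^{K+1}/(K+1)!$, so the choice $K = \Or\big(\log(r/\epsilon)\big) = \Or\big(\log(t\|\mathcal{L}\|_{\rm be}/\epsilon)\big)$ makes the per-segment truncation error at most $\epsilon/r$. The nested simplex integrals are discretized by a quadrature (equivalently handled through the standard LCU clock/compression gadget), and the resulting finite sum of products of drift blocks $e^{s\mathcal{K}}$ and jump insertions $\mathcal{J}$ is implemented by linear combination of unitaries, reusing $U_G$, $\{U_a\}$, and the prepare oracles.

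Composing the $r$ segment maps and invoking submultiplicativity of the diamond norm together with $\|e^{s\mathcal{K}}\|_\Diamond\le 1$, the total error is bounded by the sum of the per-segment errors plus the accumulated block-encoding error; since each block encoding carries error $\delta$ and there are $\Or(r)$ uses, the latter contributes $\Or(r\delta) = \Or(t\|\mathcal{L}\|_{\rm be}\delta)$, which is $\le\epsilon$ exactly under the hypothesis $\delta\le\epsilon/(t\|\mathcal{L}\|_{\rm be})$, while the $r$ truncation errors sum to $\epsilon$. For the counts, each of the $r$ segments invokes $\Or(K)$ queries, giving the stated $\Or\big(t\|\mathcal{L}\|_{\rm be}\log(t\|\mathcal{L}\|_{\rm be}/\epsilon)\big)$ total queries to $U_G$ and $\{U_a\}$; the ancilla overhead is dominated by the $\Or(\log|\mathcal{A}|)$ index register and the $\Or(K) = \Or(\log(t\|\mathcal{L}\|_{\rm be}/\epsilon))$ registers labeling the Dyson order and quadrature nodes, yielding the claimed qubit bound.

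The main obstacle I anticipate is not the query counting but the diamond-norm control of the \emph{subnormalized, dilated} implementation. The truncated Dyson series is neither completely positive nor trace preserving, and the LCU block encoding realizes it only up to a normalization factor that must be tracked; one has to show that renormalizing and completing to a genuine CPTP map costs only $\Or(\epsilon/r)$ in diamond distance per segment, and crucially that these errors compose \emph{additively} rather than multiplicatively across the $r$ segments. Establishing this requires the contractivity of the drift semigroup combined with a stability estimate showing that small perturbations of trace-nonincreasing CP maps accumulate linearly under composition; this is the technical heart of the argument in \cite{LiWang2023}, with the remainder being the assembly of the block encodings and the bookkeeping of parameters.
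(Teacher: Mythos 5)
The paper does not prove this statement: it is imported verbatim as Theorem 11 of \cite{LiWang2023}, so there is no internal proof to compare against. Your sketch is a faithful outline of the strategy used in that reference (drift--jump splitting of $\mc{L}^\dag$, a segmented Dyson series truncated at order $\Or(\log(t\norm{\mc{L}}_{\rm be}/\epsilon))$, LCU assembly from the given block encodings, and completion of the subnormalized truncated map to a CPTP channel with additively composing errors), and you correctly identify that last completion/error-accumulation step as the technical core, which your sketch defers to the cited work rather than carrying out.
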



Now, we are ready to give the simulation cost of our method as follows, \rre{where we have assumed $S \ge 1$ for simplicity as in \cref{lem:discretization_L_G}.
The proof is postponed to \cref{sec:discretization}.}

\begin{thm}\label{thm:simulation} Assume access to weighting functions $\{q^a\}$ satisfying \cref{assumption:q} with $s>1$, block encodings $U_{\mathcal{A}}$ in \cref{eqn:A_block_encoding}, controlled Hamiltonian simulation $U_H$ in \cref{eqn:U_H}, and prepare oracles for filtering functions $\{f^a\}$ and $g$ in \eqref{eqn:prep_f}--\eqref{eqn:prep_g_bar}. The Lindbladian evolution \eqref{eqn:Lindblad_master_equation} \rre{with $\beta > 0$} can be simulated up to time $t_{\rm mix}$ within $\eps$-diamond distance with total Hamiltonian simulation time:
\begin{equation*}
  \widetilde{\mathcal{O}}\left(C_qt_{\rm mix} \rre{(\beta + 1)} |\mc{A}|^2  \log^{1+s}\left(1/\epsilon\right)\right)\,,
\end{equation*}
where the constant $C_q$ is defined as follows:
\begin{align*}
\rre{C_q:= \log^{2 + s}(S) + 1 \,.}
\end{align*}
In addition, the algorithm requires 
\begin{equation*}
\rre{\widetilde{\mathcal{O}}\left(\log(S^2 + S\norm{H}) + \log^2(t_{\rm mix} |\mc{A}|/\epsilon) + \log (\beta + 2) \right)\,.}
\end{equation*}
number of additional ancilla qubits for the prepare oracles and simulation. The $\widetilde{\mathcal{O}}$ absorbs \rre{a constant only depending on $s$, $\xi_q$, $\xi_u$, $\xi_w$, and $C_{1,u}$}, and subdominant polylogarithmic dependencies on parameters $t_{\rm mix}$, $\|H\|$, $|\mathcal{A}|$, $S$, $\xi_q$, $\xi_u$, $\xi_w$, and $\beta$. 
\end{thm}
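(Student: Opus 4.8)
The plan is to reduce the statement to the generic high-order Lindblad simulation result \cref{thm:simulation_LW}, so that the two remaining tasks are (i) bounding the block-encoding normalization factors entering $\norm{\mc{L}}_{\rm be} = A_g + \frac{1}{2} A_f^2 |\mc{A}|$, and (ii) choosing the quadrature parameters $M,\tau$ of \cref{eqn:quad_t} so that the block-encoding errors of the operators $U_L$ and $U_G$ constructed in \cref{eqn:U_L,eqn:U_G} meet the precision demand $\delta \le \epsilon/(t_{\rm mix}\norm{\mc{L}}_{\rm be})$ required by \cref{thm:simulation_LW}. The total Hamiltonian simulation time is then the product of the query count $\Or(t_{\rm mix}\norm{\mc{L}}_{\rm be}\log(t_{\rm mix}\norm{\mc{L}}_{\rm be}/\epsilon))$ from \cref{eq:querynumber} with the per-query Hamiltonian simulation time $\Or(M\tau)$ incurred by each call to $U_L$ or $U_G$, each of which invokes only a constant number of controlled evolutions $U_H$.

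For step (i), I would substitute $A_f = Z_f Z_{\mc{A}}$ and $A_g = Z_g Z_{\mc{A}}^2 |\mc{A}|$ together with $Z_{\mc{A}} \le 1$ into $\norm{\mc{L}}_{\rm be}$ and invoke the bounds \cref{eq:zf,eq:zg}. Since $S \ge 1$ and the Gevrey constants are treated as $\Or(1)$, both $Z_f$ and $Z_g$ are $\Or(\log((\beta+1)S))$ (using $\beta\xi_u+\xi_w/S = \Or(\beta+1)$), so that $\norm{\mc{L}}_{\rm be} = \wt{\Or}(|\mc{A}|\log^2((\beta+1)S))$. This carries a power of $|\mc{A}|$ and contributes the factor $\log^2 S$; the query count becomes $\wt{\Or}(t_{\rm mix}\norm{\mc{L}}_{\rm be}\log(1/\epsilon))$, where the extra $\log(1/\epsilon)$ is the logarithm in \cref{eq:querynumber}.

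For step (ii), I would take $\tau = \Theta(1/(\norm{H}+2S))$, the coarsest grid permitted by \cref{lem:discretization_L_G}, so that the per-query cost $\Or(M\tau)$ is governed by the truncation time $T := (M-1)\tau$. Setting the stretched-exponential quadrature errors of \cref{lem:discretization_L_G} for both $L_a$ and $G$ below $\delta = \epsilon/(t_{\rm mix}\norm{\mc{L}}_{\rm be})$ and inverting the bound $\exp(-\frac{s T^{1/s}}{2e(\beta+\Or(1))^{1/s}})$ yields $T = \wt{\Or}\big((\beta+1)\log^s(S|\mc{A}|t_{\rm mix}/\epsilon)\big)$, after checking that this dominates the unconditional lower bound \cref{eq:condquad}. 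The crucial observation is that $\delta$, and hence $\norm{\mc{L}}_{\rm be}$, enters $T$ only through a logarithm, so that the $|\mc{A}|,\beta,S$ dependence inside the $\log^s$ factor is subdominant; the leading contributions are the prefactor $(\beta+1)$ and the $\log^s S$ and $\log^s(1/\epsilon)$ coming from the $s$-th power of the logarithm.

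Multiplying the query count by the per-query cost $\Or(M\tau) = \wt{\Or}(T)$ and collecting factors then gives the claimed total: the $t_{\rm mix}$ and the polynomial $|\mc{A}|$-dependence from the query count, the prefactor $(\beta+1)$ from $T$, the combined $S$-dependence $\log^2 S\cdot\log^s S = \log^{2+s}S$ packaged into $C_q$, and the $\epsilon$-dependence $\log(1/\epsilon)\cdot\log^s(1/\epsilon)=\log^{1+s}(1/\epsilon)$, with all remaining polylogarithmic factors absorbed into $\wt{\Or}$. The ancilla count follows by adding the block-encoding registers of $U_L$ and $U_G$ (with $\mathfrak{m}=\Or(\log M)$ producing the $\log(S^2+S\norm{H})$ and $\log(\beta+2)$ terms after substituting $\tau$ and $T$) to the $\Or\!\big(\log(t_{\rm mix}\norm{\mc{L}}_{\rm be}/\epsilon)(\log|\mc{A}|+\log(t_{\rm mix}\norm{\mc{L}}_{\rm be}/\epsilon))\big)$ extra qubits of \cref{thm:simulation_LW}, which yields the $\log^2(t_{\rm mix}|\mc{A}|/\epsilon)$ contribution. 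The hard part will be the error-budget propagation in step (ii): one must verify that inverting the sub-exponential quadrature bound produces exactly a $\log^s$ factor, and that the precision requirement $\delta$ feeds back only logarithmically, so that neither the $(\beta+1)$ prefactor nor the $|\mc{A}|$ scaling is inflated beyond the stated orders.
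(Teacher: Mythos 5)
Your overall route is the same as the paper's: reduce to \cref{thm:simulation_LW}, bound $\|\mathcal{L}\|_{\rm be}$ through $Z_f$ and $Z_g$, pick the truncation time $T$ by inverting the stretched-exponential bound of \cref{lem:discretization_L_G} against the target $\delta=\epsilon/(t_{\rm mix}\|\mathcal{L}\|_{\rm be})$, and multiply the query count by the $\Or(M\tau)=\Or(T)$ per-query Hamiltonian simulation time. However, two steps do not go through as written. First, the step size: you take $\tau=\Theta(1/(\|H\|+2S))$, but the bounds \eqref{eq:zf} and \eqref{eq:zg} that you invoke for $Z_f$ and $Z_g$ are not unconditional --- they come from \cref{lem:zfzg}, which requires $\tau=\Theta(1/(\xi_q^2S^2T))$ so that the Riemann sums $\sum_m|f(t_m)|\tau$ and $\sum_m|g(t_m)|\tau$ track the corresponding $L^1$ norms up to an additive constant. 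With your coarser grid the error in these sums is $\Or(\xi_q^2S^2T\tau)$, which is not $\Or(1)$, so $Z_f$, $Z_g$, and hence $\|\mathcal{L}\|_{\rm be}$ could grow polynomially in $S$ and $\beta$ and the headline bound would not follow. The fix costs nothing in the total-time metric: take $\tau=\Theta(\min\{1/(\xi_q^2S^2T),\,1/(S+\|H\|)\})$ as the paper does; the per-query cost remains $\Or(T)$, and the only price is a larger $M$, which is precisely why the ancilla count carries $\log(S^2+S\|H\|)$ rather than $\log(S+\|H\|)$.

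Second, the factor $|\mathcal{A}|^2$. Your accounting produces only one power of $|\mathcal{A}|$: $\|\mathcal{L}\|_{\rm be}=\wt{\Or}(|\mathcal{A}|\log^2((\beta+1)S))$ enters the query count \eqref{eq:querynumber} linearly, and nothing else in your argument contributes another $|\mathcal{A}|$. The second power in the statement comes from converting the queries to the separate block encodings $\{U_a\}_{a\in\mathcal{A}}$ assumed in \cref{thm:simulation_LW} into queries to the single combined block encoding $U_L$ of $\sum_a\ket{a}\bra{a}\otimes L_a$: the paper charges an extra factor of $|\mathcal{A}|$ here, so that the number of queries to $U_L$ is $\Or\bigl(t_{\rm mix}\|\mathcal{L}\|_{\rm be}\log(t_{\rm mix}\|\mathcal{L}\|_{\rm be}/\epsilon)\,|\mathcal{A}|\bigr)$. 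Without this conversion you are proving a formally stronger bound that your argument does not justify. The remaining bookkeeping --- inverting the quadrature bound to get $T=\wt{\Or}((\beta+1)\log^s(\cdot))$ after checking it dominates \eqref{eq:condquad}, the $\log^{2+s}S$ and $\log^{1+s}(1/\epsilon)$ factors, and the ancilla count from \cref{thm:simulation_LW} plus $\mathfrak{m}=\Or(\log M)$ --- matches the paper.
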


\begin{rem}
In our simulation algorithm, thanks to the finite number of jump operators, we only need to construct the block encoding of $\sum_{a\in\mathcal{A}}\ket{a}\bra{a}\otimes L_a$ for the efficient simulation. 
When $|\mathcal{A}|\gg 1$, by assuming oracle access to a different form of blocking encodings of $A^a$, we can simultaneously construct block encodings of all jump operators. Specifically, assuming oracle access to the block encoding of all coupling operators in the form $\sum_{a\in\mathcal{A}}\ket{a}\bra{0^{\mathfrak{a}}}\otimes A^a$, we can initially utilize \cite[Appendix B.1 Lemma III.1]{ChenKastoryanoBrandaoEtAl2023} to construct a block encoding $U_{L,\rm all}$ for $\sum_{a\in\mathcal{A}}\ket{a}\bra{0^{\mathfrak{a}}}\otimes L_a$ and a block encoding $U_G$ for $G$. Leveraging $U_{L,\rm all}$, we can implement the weak-measurement scheme proposed in \cite[Section III.1]{ChenKastoryanoGilyen2023} to simulate \eqref{eqn:Lindblad_master_equation} to first-order accuracy. Finally, by applying ``compression'' techniques as outlined in \cite{CW17} to reduce the number of repetitions, the algorithm proposed in \cite[Appendix F]{ChenKastoryanoGilyen2023} achieves optimal scaling in the number of uses of $U_{L,\rm all}$ and $U_G$. 
\end{rem}

\subsection{Quasi-locality}

\rre{
It is worth noting that when $H$ is a local Hamiltonian and the coupling operators $\{A^a\}_{a\in\mc{A}}$ are local, then we can show that at \emph{any} temperature, our jump operators $L_a$ in \eqref{eqn:L_a_formula} are quasi-local, and the coherent term $G$ in \eqref{eqn:G_KMS} is a sum of quasi-local operators.
This can be demonstrated by applying the Lieb-Robinson bounds~\cite{Nach_2006,Hastings_2006,Lieb_1972,PhysRevB.69.104431,doi:10.1137/18M1231511}. Similar to \cite{rouz2024}, for a finite lattice $\Lambda \subset \mathbb{Z}^d$, we consider a $(k,l)$-local Hamiltonian  $H = \sum_{X \subset \Lambda} h_X$, where $h_X$ is a local potential with support on no more than $k$ sites, and each site appears in at most $l$ operators $h_X$. 

\begin{prop}[Quasi-locality of jump and coherent operators] \label{lem:quasi-local}
Let $H = \sum_{X \subset \Lambda} h_X$ be a $(k,l)$-local Hamiltonian defined on a finite lattice $\Lambda$, and let each coupling operator $A^a$ be a local operator. Then, the jump operator $L_a$ constructed via a Gaussian-type weighting function is quasi-local, and the associated coherent operator $G$ is a sum of quasi-local operators. 
\end{prop}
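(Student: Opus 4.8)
The plan is to combine the rapid decay of the filtering functions $f^a$ and $g$ established in \cref{lem:property_f_g} with the Lieb--Robinson bound for the $(k,l)$-local Hamiltonian $H$. Recall that for such $H$ there is a Lieb--Robinson velocity $v_{\rm LR} > 0$ and constants $C,\mu > 0$ such that, for any local operator $A^a$ with support $X_a := \operatorname{supp}(A^a)$ and any ball $B_r$ of radius $r$ around $X_a$, the Heisenberg evolution admits a local approximation $A^a(t)|_{B_r}$ (e.g.\ the normalized partial trace onto $B_r$) obeying
\begin{equation*}
  \norm{A^a(t) - A^a(t)|_{B_r}} \le C \norm{A^a}\, e^{-\mu (r - v_{\rm LR} \abs{t})}\,,
\end{equation*}
while the trivial bound $\norm{A^a(t) - A^a(t)|_{B_r}} \le 2\norm{A^a}$ always holds. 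First I would define the truncated jump $L_a^{(r)} := \int_{-\infty}^\infty f^a(t)\, A^a(t)|_{B_r}\,\mathrm{d}t$, which is supported in $B_r$, and estimate $\norm{L_a - L_a^{(r)}} \le \int_{-\infty}^\infty \abs{f^a(t)}\,\norm{A^a(t)-A^a(t)|_{B_r}}\,\mathrm{d}t$.

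The core of the argument is to balance the linearly growing light cone against the decay of $f^a$. Splitting the time integral at $t^\ast = \theta r/v_{\rm LR}$ for a fixed $\theta \in (0,1)$, on $\abs{t}\le t^\ast$ the factor $r - v_{\rm LR}\abs{t} \ge (1-\theta) r$ gives the uniform gain $e^{-\mu(1-\theta) r}$, while on $\abs{t} > t^\ast$ I would use the trivial bound together with the sub-exponential tail of $f^a$ from \cref{lem:property_f_g}. This yields
\begin{equation*}
  \norm{L_a - L_a^{(r)}} \le C\norm{A^a}\Big( \norm{f^a}_{L^1} e^{-\mu(1-\theta) r} + \int_{\abs{t} > \theta r/v_{\rm LR}} \abs{f^a(t)}\,\mathrm{d}t \Big) \le C' e^{-c' r^{1/s}}\,,
\end{equation*}
where for the Gaussian-type choice \eqref{eqn:q_gaussian} the tail integral decays (sub-)exponentially with $\beta$-dependent constants. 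Hence $L_a$ is quasi-local: it can be approximated to sub-exponential accuracy by operators supported on balls of radius $r$, with a characteristic radius scaling like $v_{\rm LR}\beta$ at inverse temperature $\beta$.

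For the coherent term $G = \sum_{a} \int_{-\infty}^\infty g(t)\, e^{iHt}(L_a^\dagger L_a) e^{-iHt}\,\mathrm{d}t$ in \eqref{eqn:G_KMS}, I would argue term by term. Since $L_a$ is quasi-local, its approximation by $L_a^{(r)}$ shows that $M_a := L_a^\dagger L_a$ is again quasi-local (a product of two operators each sub-exponentially localized near $X_a$ is sub-exponentially localized near $X_a$, with a mild enlargement of the radius). Applying the Lieb--Robinson bound in its form for quasi-local observables to $e^{iHt} M_a e^{-iHt}$, and then repeating the split-and-optimize estimate above with $g$ in place of $f^a$ --- using the rapid decay of $g$ from \cref{lem:property_f_g} --- shows that each $G_a := \int g(t)\, e^{iHt} M_a e^{-iHt}\,\mathrm{d}t$ is quasi-local, centered on $X_a$. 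Therefore $G = \sum_a G_a$ is a sum of quasi-local operators.

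The main obstacle is the tension between the linear spreading of the light cone, which contributes a factor $e^{\mu v_{\rm LR}\abs{t}}$, and the merely sub-exponential decay $e^{-c\abs{t}^{1/s}}$ of the Gevrey filtering functions when $s > 1$: for fixed $r$ the weighted integral $\int \abs{f^a(t)} e^{\mu v_{\rm LR}\abs{t}}\,\mathrm{d}t$ diverges, so one cannot bound the truncation error by a single exponential and must instead optimize the splitting point $t^\ast$, which is precisely what degrades the locality from exponential to sub-exponential (quasi-local). A secondary technical point is to make precise the composition of quasi-locality estimates under the product $L_a^\dagger L_a$ and under the second layer of Heisenberg evolution, which requires the quasi-local (rather than strictly local) version of the Lieb--Robinson bound and careful bookkeeping of how the localization radii and tail constants combine.
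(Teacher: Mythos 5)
Your proposal is correct in outline but follows a genuinely different route from the paper. You invoke the exponential light-cone form of the Lieb--Robinson bound, $\norm{A^a(t)-A^a(t)|_{B_r}}\le C e^{-\mu(r-v_{\rm LR}|t|)}$, and then split the time integral at $t^\ast\propto r$, trading the uniform gain $e^{-\mu(1-\theta)r}$ inside the light cone against the sub-exponential tail of the Gevrey filter outside it; this yields stretched-exponential locality $e^{-c r^{1/s}}$ and, notably, works for \emph{any} admissible weighting function satisfying \cref{assumption:q}, not only the Gaussian one. The paper instead uses the factorial form of the Lieb--Robinson bound, $\norm{A^a(t)-e^{iH_{B(r)}t}A^a e^{-iH_{B(r)}t}}\le \norm{A^a}\min\{(J|t|)^r/r!,2\}$, and -- crucially -- chooses $S\ge 4\norm{H}$ so that the bump $w(\nu/S)$ equals $1$ on all Bohr frequencies and can be dropped, making the time-domain filter an honest Gaussian $h(t)=e^{-\Theta((t/\beta)^2)}$; the truncation error is then bounded directly by the $r$-th Gaussian moment, $\frac{J^r}{r!}\int|h(t)||t|^r\,\mathrm{d}t=(\mathcal{O}(J\beta/\sqrt{r}))^r$, which decays faster than any exponential in $r$. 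This moment-based computation would \emph{not} close for a merely sub-exponentially decaying $f$ (the $r$-th moments grow too fast against $1/r!$), which is exactly why the proposition is stated for the Gaussian-type weighting and why your split-and-optimize step is the right fix if one wants generality -- at the price of weaker $e^{-cr^{1/s}}$ tails. For the coherent term, the paper works with the un-mollified kernel $\widehat{j}(\nu)=-\tfrac{i}{2}\tanh(-\beta\nu/4)$, whose transform $j(t)$ decays exponentially but has a $1/t$ singularity absorbed by the $|t|^r$ factor, whereas you use the bounded mollified $g$; both treatments share the same mild informality of applying a Lieb--Robinson-type estimate to the only quasi-local observable $L_a^\dagger L_a$, which you at least flag explicitly. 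No gap; your argument proves the claim, with different (weaker but more broadly applicable) quantitative constants.
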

\begin{proof}
For simplicity, assume that each coupling $A^a$ is supported on a single site of the lattice denoted by $u_a$. 
We consider the Gaussian weighting function $q(\nu)=\exp(-(\beta\nu)^2/8)\omega(\nu/S)$ as the Gaussian-type one in \eqref{eqn:q_gaussian} and set $S \ge 4\|H\|$. Define $\widehat{h}(\nu)=\exp(-(\beta\nu)^2/8)$. Recalling the formula for $L_a$ in \eqref{eqn:L_a_formula}, we have
\begin{equation}\label{auxeq1}
    L_a=  
\sum_{\nu\in B_H} \widehat{f}(\nu) A^a_\nu=\sum_{\nu\in B_H} \widehat{h}(\nu) A^a_\nu=\int^\infty_{-\infty} h(t) e^{iHt} A^a e^{-iHt}\, \mathrm{d}t\,,
\end{equation}
where we use $\omega(\nu/S)=1$ when $|\nu|\leq 2\|H\|$ in the second equality. Then, from the Lieb-Robinson bound~\cite[Lemma 5]{doi:10.1137/18M1231511}, there exists a constant $J$ that only depends on $k,l,\sup_{X
}\|h_X\|$ such that
\[
\left\|e^{iHt} A^a e^{-iHt} - e^{i H_{B_{u_a}(r)}  t}  A^a e^{- i H_{B_{u_a}(r)} t}\right\| \leq \|A^a\|\min\left\{\frac{(J|t|)^r}{r!},2\right\},
\]
for any $r > 0$ and $a\in\mc{A}$. Here $B_{u_a}(r) \subset \Lambda$ is a ball centered at $u_a$ with radius $r$, and $H_{B_{u_a}(r)} = \sum_{X \cap B_{u_a}(r) \neq \emptyset} h_X$. This, along with \eqref{auxeq1}, readily implies 
\begin{equation}\label{eqn:error_bound}
\begin{aligned}
&\bigg\| L_a - \underbrace{\int_{-\infty}^{\infty} h(t) e^{iH_{B_{u_a}(r)}t} A^a e^{-iH_{B_{u_a}(r)}t} \, \mathrm{d}t}_{\text{\rm supported on } B_{u_a}(r)} \bigg\| \\
\leq &\frac{J^r \|A^a\|}{r!} \int_{-\infty}^{\infty} |h(t)| |t|^r \, \mathrm{d}t =\|A^a\| \left(\mathcal{O}\left(J\beta/\sqrt{r}\right)\right)^r\,,
\end{aligned}
\end{equation}
for any $r > 0$ and $\beta > 0$, where the last equality is from $h(t)=\exp\left(-\Theta((t/\beta)^2)\right)$. This implies that $L_a$ is quasi-local, and so is the operator $L_a^\dagger L_a$. 

Define $\widehat{j}(\nu)=-\frac{i}{2}\tanh\left(-\beta\nu/4\right)$ with Fourier transform $j(t)$, defined in distribution sense, which decays as $\exp(-\Theta(|t|))$ as $|t| \to \infty$ and has singularity $\Theta(1/t)$ as $t \to 0$. We recall the definition \eqref{eqn:G_KMS} of $G$ and find 
\begin{align*}
    & \bigg\| G - \sum_{a \in \mc{A}} \underbrace{\int_{-\infty}^\infty j(t) e^{i H_{B_{u_a}(r)} t} (L_a^\dag L_a) e^{ - i H_{B_{u_a}(r)} t} \ud t}_{\text{\rm quasi local}}  \bigg\| \\
    \le & \left(\sum_{a\in\mc{A}} \|L_a\|^2\right) \int_{-\infty}^{\infty} |j(t)|\min\left\{\frac{J^r|t|^r}{r!} ,2\right\}\mathrm{d}t \\
    \le & \left(\sum_{a\in\mc{A}} \|L_a\|^2\right) \left(\int_{|t|\leq \frac{r}{2eJ}}\frac{J^r|t|^r}{r!}|j(t)| \mathrm{d}t+2\int_{|t|>\frac{r}{2eJ}} |j(t)|\mathrm{d}t\right) \\ 
    = & \left(\sum_{a\in\mc{A}} \|L_a\|^2\right)\mc{O} \left(\int_{|t|\leq \frac{r}{2eJ}}\frac{J^r|t|^{r-1}}{r!}\mathrm{d}t - \log \tanh \left(\frac{\pi r}{2 e J \beta}\right)\right) \\
    = &\left(\sum_{a\in\mc{A}} \|L_a\|^2\right) \left(\mathcal{O}\left(\frac{1}{2^r}\right)+ e^{-\Theta(\frac{r}{J \beta})}\right)\,, \q r \ge 1\,.
\end{align*}
This implies that $G$ is the sum of quasi-local operators. 
\end{proof}

The above computation can be easily extended to the case where $q(\nu)=\widehat{h}(\nu)\omega(\nu/S)$ with $S \ge  4\|H\|$ and $|h(t)|\sim \exp(-\Theta(|t|^p))$ for $p\geq 1$ as $t \to \infty$. The quasi-locality does not affect the simulation complexity of the dynamics; however, it is a crucial property when analyzing the mixing time of the dynamics~\cite{rouz2024}. We briefly address this in the following remark.

\begin{rem}
To ensure that the constructed Lindblad dynamics \eqref{eq:lindbladkms} eventually relaxes to the desired Gibbs state, by \cref{lem:converg}, we should carefully choose the coupling operators $A^a$ and weighting functions $q^a$ such that the resulting $\{L_a\}$ and $G$ satisfy \cref{eq:irredu}. This is always possible, due to the finite dimensionality of the system. See also \cite[Theorem 8]{gilyen2024quantumgeneralizations} for related discussions. 
Moreover, it is known \cite[Theorem 9 and Section V]{TemmeKastoryanoRuskaiEtAl2010} that for a primitive KMS detailed balanced QMS, the mixing time \eqref{eqn:mixtime} can be estimated via the spectral gap of the Lindbladian. 

The recent work \cite{rouz2024} estimated the spectral gap of the efficient quantum Gibbs sampler in \cite{ChenKastoryanoGilyen2023} in the high-temperature regime, by mapping the Lindbladian $\mc{L}^\dag_\beta$ with KMS DBC to a Hamiltonian $\wt{\mc{L}}_\beta^\dag:= \si_\beta^{-1/4}\mc{L}^\dag_\beta (\si_\beta^{1/4}X \si_\beta^{1/4}) \si_\beta^{-1/4}$ and then analyzing its spectral properties by the stability of gapped Hamiltonians \cite{michalakis2013stability}. A key fact for using the stability result \cite{michalakis2013stability} is that the jump operators and the coherent term constructed in \cite{ChenKastoryanoGilyen2023} are quasi-local in the high-temperature regime. We prove in \cref{lem:quasi-local} that our $L_a$ and $G$ operators are also quasi-local at \emph{any} temperatures, paving the way for future spectral gap analyses of the proposed Gibbs sampler.
\end{rem}
}

\section{Recovery of the Gibbs sampler in \texorpdfstring{\cite{ChenKastoryanoGilyen2023}}{Lg}} \label{sec:recover}

In this section, we discuss the connections between our proposed family of efficient quantum Gibbs samplers with KMS DBC and those constructed in \cite{ChenKastoryanoBrandaoEtAl2023,ChenKastoryanoGilyen2023} and show that our framework can recover the one in \cite{ChenKastoryanoGilyen2023}.  

We have seen Davies generator \eqref{eq:davies} without the coherent term (i.e., Lindbladian with $\si_\beta$-GNS DBC) formally corresponds to the algorithmic Lindbladian \eqref{eqn:lindblad_filter} with $f\equiv 1$\footnote{\re{Rigorously  speaking, Davies generator \eqref{eq:davies} is the limit of Lindbladian \eqref{eqn:lindblad_filter} in \cite{ChenKastoryanoBrandaoEtAl2023}, where $|\widehat{f}(\omega)|^2$ approaches $\delta(\omega)$ due to the $L^2$-normalization $\int |f(t)|^2 = 1$.}}.  As emphasized in \cref{rem:ineffi}, such a choice of Dirac delta filtering function for the frequency makes it hard to approximate GNS detailed balanced Lindblad dynamic. Chen et al. \cite[Theorem I.3]{ChenKastoryanoBrandaoEtAl2023} introduced a Gaussian smoothed version by taking $f$ as 
\begin{equation} \label{eq:gaussianfilter}
    f(t)\propto \sqrt{\si_E} \exp(-t^2 \sigma^2_E)\,,
\end{equation}
with $\si_E$ of order $\eps/\beta$, which guarantees that the Gibbs state is an approximate fixed point. It follows that the parameter $\si_E$ has to be small enough so that $\widehat{f}(\ww) \propto \exp(-\ww^2/4 \sigma^2_E)/\sqrt{\si_E} \approx \d(\ww)/2\pi$, to prepare the Gibbs state accurately. Then \cite{ChenKastoryanoGilyen2023} carefully constructed coherent term $i [G, \dd]$ such that the resulting dynamics is $\si_\beta$-KMS detailed balanced and $\si_E$ could be a moderate constant, which reduced the computational cost significantly; see \cref{tab:comparison}. 


We next prove that our construction in \cref{sec:QGS} can include the one in \cite{ChenKastoryanoGilyen2023} as a special case. Let us first recall the construction by Chen, Kastoryano, and Gily\'en. Suppose that $\{A^a\}_{a\in \mathcal{A}}$ is a given set of operators satisfying $\{A^a\}_{a\in \mathcal{A}}=\{(A^a)^\dagger\}_{a\in \mathcal{A}}$. \cite[Corollary II.2, Proposition II.4]{ChenKastoryanoGilyen2023} defined the Lindbladian of the form \eqref{eqn:lindblad_filter}:
\begin{equation}\label{eqn:Lindblad_anthony}
\begin{aligned}
 \mc{L}^{\dag}(\rho) &=-i[G,\rho]+\sum_{a\in\mathcal{A}}\int^\infty_{-\infty}\gamma(\omega)\left(\widehat{A}_f^a(\omega)\rho \left(\widehat{A}_f^a(\omega)\right)^\dagger-\frac{1}{2}\left\{\left(\widehat{A}_f^a(\omega)\right)^\dagger \widehat{A}_f^a(\omega),\rho\right\}\right)\ud \omega \\ 
&=-i[G,\rho]+\sum_{a\in \mathcal{A}}\sum_{\nu,\nu'\in B_H}\alpha_{\nu,\nu'}\left(A^a_{\nu} \rho \left(A^a_{\nu'}\right)^\dagger-\frac{1}{2}\left\{\left(A^a_{\nu'}\right)^\dagger A^a_{\nu},\rho \right\}\right)\,,
\end{aligned}
\end{equation}
with the Gaussian filtering function \eqref{eq:gaussianfilter} for $\widehat{A}_f^a(\omega)$, the Gaussian-type transition weight function:
\begin{equation} \label{eq:gaussianweihgt}
\gamma^{(g)}(\ww) = \exp\left(- \frac{(\beta \ww + 1)^2}{2}\right)\,,
\end{equation}
or the Metropolis-type one: 
\begin{equation} \label{eq:metroweihgt}
    \gamma^{(m)}(\ww) = 
    \exp\left(- \beta \max\left(\ww + \frac{1}{2\beta}, 0\right)\right)\,,
\end{equation}
and the Hamiltonian 
\begin{equation}\label{eqn:G_anthony}
G:=\sum_{a\in \mc{A}}\sum_{\nu,\nu'\in B_H}\frac{\tanh(-\beta(\nu-\nu')/4)}{2i}\alpha_{\nu,\nu'}\left(A^a_{\nu'}\right)^\dagger A^a_{\nu}\,.
\end{equation}
Here, the coefficients $\alpha_{\nu,\nu'} \in \C$ are given by 
\begin{equation} \label{eq:alphanunu}
    \alpha_{\nu,\nu'} := (2\pi)^2\int^\infty_{-\infty} \gamma(\omega)\widehat{f}(\omega-\nu) \overline{\widehat{f}(\omega-\nu')} 
 \ud \omega\,.
\end{equation}
We then define the so-called Kossakowski matrix ${\bf C} := (\alpha_{\nu,\nu'})_{\nu,\nu' \in B_H}$ \cite{GoriniKossakowskiSudarshan1976}, which is a real and positive semidefinite matrix by choosing $\hat{f}(\omega)$, $\gamma(\omega)$ to be non-negative functions. Then \cite{ChenKastoryanoGilyen2023} showed that if $\si_E = 1/\beta$, there holds 
\begin{equation}\label{eqn:detail_alpha_anthony}
\alpha_{\nu,\nu'} e^{\beta (\nu + \nu')/4} = \alpha_{-\nu',-\nu} e^{-\beta (\nu + \nu')/4}\,,
\end{equation}
which implies the KMS detailed balance of $\mc{L}$ constructed in \eqref{eqn:Lindblad_anthony} \cite[Theorem I.1]{ChenKastoryanoGilyen2023}. 

To proceed, for notational simplicity, we assume $|\mathcal{A}|=1$, which means that $\{A^a\}_{a \in \mc{A}}$ is a single self-adjoint operator $A=A^\dagger$. We introduce a new coefficient matrix: 
\begin{equation} \label{auxeqq:c}
    \wt{{\bf C}} \in \mathbb{R}^{|B_H|\times |B_H|} \q \text{with}\q \wt{{\bf C}}_{\nu,\nu'}: = \alpha_{\nu,\nu'} e^{\beta (\nu + \nu')/4}\,,
\end{equation}
which is real and positive semidefinite and satisfy the centrosymmetry $\wt{{\bf C}}_{\nu,\nu'} = \wt{{\bf C}}_{-\nu',-\nu}$ by \eqref{eqn:detail_alpha_anthony}. We then consider the eigendecomposition of $\wt{{\bf C}}$: 
\begin{align} \label{auxeqq:c2}
    \wt{{\bf C}} = Q D Q^\dagger\,,
\end{align}
where $Q$ is real orthogonal (hencd $Q^{\dag}=Q^{\top}$) and $D$ is real diagonal with elements also indexed by $(\nu, \nu') \in B_H \times B_H$. Moreover, by \cite[Theorem 2]{cantoni1976properties}, each eigenvector $Q_{\dd, \nu'}$ is either symmetric (namely, $Q_{\nu, \nu'} = Q_{-\nu, \nu'}$) or skew-symmetric (namely, $Q_{\nu, \nu'} = -Q_{-\nu, \nu'}$). For $\nu' \in B_H$, we define
\begin{equation} \label{eqn:L_A}
L_{\nu'} = \begin{dcases}
    \sqrt{D_{\nu', \nu'}} \sum_{\nu \in B_H} Q_{\nu, \nu'} A_{\nu}\ e^{-\beta \nu/4}\,, &  \text{if $Q_{\dd,\nu'}$ is symmetric}\,, \\
    i \sqrt{D_{\nu', \nu'}} \sum_{\nu \in B_H} Q_{\nu,\nu'}A_{\nu}e^{-\beta \nu/4}\,, &  \text{if $Q_{\dd,\nu'}$ is skew-symmetric}\,.
\end{dcases} 
\end{equation}
Then, \cref{eqn:Lindblad_anthony} can be reformulated as 
\begin{align*}
    \mc{L}^\dag =-i[G,\rho]+ \sum_{\nu \in B_H} L_\nu \rho L^\dagger_\nu - \frac{1}{2}\left\{L^\dagger_\nu L_\nu, \rho\right\}\,,
\end{align*}
and one can verify that $G$ and $\{L_\nu\}_{\nu \in B_H}$ satisfy the requirements in \cref{prop:KMS_DBC}. 

For this, let us first consider $G$ defined in \eqref{eqn:G_anthony}. Noting from the construction \eqref{eqn:L_A} that 
\begin{align*}
\log\left(\Delta^{1/4}_{\sigma_\beta}\right)\left(\sum_{\nu \in B_H} L^\dagger_\nu L_\nu \right)=\sum_{\nu,\nu'\in B_H}-\frac{\beta(\nu-\nu')}{4}\alpha_{\nu,\nu'}\left(A^a_{\nu'}\right)^\dagger A^a_{\nu}\,,
\end{align*}
we compute, according to \eqref{eqn:G_anthony}, 
\begin{equation}\label{eqn:B_standard_anthony}
G = \sum_{\nu,\nu'\in B_H}\frac{\tanh(\beta(\nu'-\nu)/4)}{2i}\alpha_{\nu,\nu'}\left(A^a_{\nu'}\right)^\dagger A^a_{\nu}= -\frac{i}{2}\tanh\circ \log\left(\Delta^{1/4}_\beta\right)\left(\sum_{\nu \in B_H} L^\dagger_\nu L_\nu \right)\,,
\end{equation}
which matches the general form \eqref{eqn:G_formula_KMS}. We next check the condition \eqref{eq:adjointlj}. We start with the case where $Q_{\dd,\nu'}$ is symmetric. By \eqref{eq:rela2} and \eqref{eqn:L_A}, it holds that 
\begin{align*}
     \Delta^{-1/2}_{\sigma_\beta}(L_{\nu'}) = &  \sqrt{D_{\nu', \nu'}} \sum_{\nu \in B_H} Q_{\nu, \nu'} A_{\nu}\ e^{-\beta \nu/4} e^{\beta \nu/2} \\
    = & \sqrt{D_{\nu', \nu'}} \sum_{\nu \in B_H} Q_{\nu, \nu'} (A_{\nu})^{\dag}\ e^{-\beta \nu/4} = L^\dagger_{\nu'}\,,
\end{align*}
where we use $(A_\nu)^{\dag} = A_{-\nu}$ and $Q_{\nu,\nu'} = Q_{-\nu,\nu'}$ in the second equality. For the case where $Q_{\dd,\nu'}$ is skew-symmetric, a similar computation gives 
\begin{align*}
     \Delta^{-1/2}_{\sigma_\beta}(L_{\nu'}) = &  i\sqrt{D_{\nu', \nu'}} \sum_{\nu \in B_H} Q_{\nu, \nu'} A_{\nu}\ e^{-\beta \nu/4} e^{\beta \nu/2} \\
    = & i \sqrt{D_{\nu', \nu'}} \sum_{\nu \in B_H} Q_{-\nu, \nu'} A_{-\nu}\ e^{-\beta \nu/4} \\
     = - & i \sqrt{D_{\nu', \nu'}} \sum_{\nu \in B_H} Q_{\nu, \nu'} (A_{\nu})^{\dag}\ e^{-\beta \nu/4} = L^\dagger_{\nu'}\,,
\end{align*}
where the third equality is by $(A_\nu)^{\dag} = A_{-\nu}$ and $Q_{\nu,\nu'} = - Q_{-\nu,\nu'}$. 

Finally, to see that our construction in \cref{sec:gibbsampler} recovers the quantum Gibbs sampler in \cite{ChenKastoryanoGilyen2023}, it suffices to define the weighting function:
\begin{equation}  \label{auxeqq:q}
q_{\nu'}(\nu) = \begin{dcases}
    \sqrt{D_{\nu', \nu'}}  Q_{\nu, \nu'} \,, &  \text{if $Q_{\dd,\nu'}$ is symmetric}\,, \\
    i \sqrt{D_{\nu', \nu'}}  Q_{\nu,\nu'} \,, &  \text{if $Q_{\dd,\nu'}$ is skew-symmetric}\,,
\end{dcases} 
\end{equation}
and find $q_{\nu'}(-\nu) = \overline{q_{\nu'}(\nu)}$ as required in \eqref{eqn:q_sym} and that the jumps in \eqref{eqn:L_A} are the same as those in \eqref{eqn:L_a_formula} with $A$ and $q_{\nu'}(\nu)$ given above. 

\begin{rem} \label{rem:gnsdbc}
Letting $\widehat{f}(\ww) \propto \exp(-\ww^2/4 \sigma^2_E)$ be given as above, choosing $\gamma(\nu)$ to satisfy the KMS condition \eqref{eq:rela1}, in the limit $\si_E \to 0$, the matrix $\{\alpha_{\nu,\nu'}\}$ in \eqref{eq:alphanunu} reduces to $\{\gamma(\nu)\d_{\nu,\nu'}\}$ (up to some constant) and the associated Lindblad dynamic becomes GNS detailed balanced. One can similarly define the matrix $\w{{\bf C}}$ as in \eqref{auxeqq:c} with decomposition \eqref{auxeqq:c2}. It holds that for each $\nu \in B_H$, there exists a symmetric eigenvector $Q_{\nu,\nu'}=\frac{1}{\sqrt{2}}(\d_{\nu'} + \d_{-\nu'})$ and a skew-symmetric one $Q_{\nu,\nu'}=\frac{1}{\sqrt{2}}(\d_{\nu'} - \d_{-\nu'})$, which corresponds to the eigenvalue $D_{\nu, \nu} = D_{-\nu,-\nu} \propto \gamma(\nu)e^{\beta \nu/2}$. In this case, $q_{\nu'}(\nu)$ in \eqref{auxeqq:q} is either $\sqrt{D_{\nu',\nu'}/2}(\d_{\nu'} + \d_{-\nu'})$ or $i \sqrt{D_{\nu',\nu'}/2}(\d_{\nu'} - \d_{-\nu'})$, and the jumps defined in \eqref{eqn:L_A} are consistent with those in \eqref{auxeqq:Lj}. 
\end{rem}

We now estimate the element distribution of the Kossakowski matrix $(\alpha_{\nu,\nu'})_{\nu,\nu' \in B_H}$,
which in principle determines the KMS detailed balanced Lindbladian, in view of  \cref{eqn:Lindblad_anthony,eqn:G_anthony}. 
This would help us better understand the effects of the choice of $q$ on the energy transition and how our proposed Gibbs sampler relates to those in \cite{ChenKastoryanoBrandaoEtAl2023,ChenKastoryanoGilyen2023}. 
Recall the definition of $\alpha_{\nu,\nu'}$ in \cref{eq:alphanunu} and note that $\widehat{f}(\nu)$ was chosen as a Gaussian $\widehat{f}(\nu) \propto \sqrt{\beta}\exp(- (\beta\nu)^2/4)$. It follows that for any $L^\infty$-bounded $\gamma(\ww)$, 
\begin{align*}
    |\alpha_{\nu,\nu'}| \le C \norm{\gamma}_{L^\infty(\R)} \beta\int^\infty_{-\infty} 
    e^{-\frac{\beta^2 ((\ww - \nu)^2 + (\ww - \nu')^2) }{4}}\mathrm{d}\ww \le C \norm{\gamma}_{L^\infty(\R)} e^{-\frac{{\beta^2(\nu-\nu')^2}}{8}}\,,
\end{align*}
where $C$ is a uniform constant. This means that for any fixed $\beta > 0$, 
\begin{align} \label{eq:entryestimate}
    |\alpha_{\nu,\nu'}| = \Omega(1) \q \text{only if} \q |\nu - \nu'| = \Or\left(\frac{1}{\beta}\right)\,,
\end{align}
that is, $\alpha_{\nu,\nu'}$ is concentrated around the diagonal part, which is the case for the Metropolis-type transition weight \eqref{eq:metroweihgt}, due to $\gamma^{(m)} \equiv 1$ for $\ww \le 1/(2 \beta)$. When $\beta \to \infty$, this narrow strip shrinks rapidly and the matrix $\alpha_{\nu,\nu'}$ approximately reduces to a diagonal one so that the sampler becomes GNS detailed balanced (\cref{rem:gnsdbc}). For the case of Gaussian transition weight \eqref{eq:gaussianweihgt}, we can see that the $\alpha_{\nu,\nu'}$ is actually concentrated around the origin:
\begin{equation*}
    \alpha_{\nu,\nu'} \propto e^{-\frac{(\beta \nu + \beta \nu' + 2)^2}{16}} e^{-\frac{\beta^2(\nu-\nu')^2}{8}} = \Omega(1) \q \text{if and only if}\q |\nu|, |\nu'| = \Or\left(\frac{1}{\beta}\right)\,.
\end{equation*}
by the explicit computation in \cite[Proposition II.3]{ChenKastoryanoGilyen2023}. 
In contrast, for our Gibbs sampler constructed in \cref{sec:gibbsampler}, we have
\begin{equation*}
    \alpha_{\nu,\nu'} = e^{-\beta(\nu +\nu')/4} q^a(\nu)\overline{q^a(\nu')}\,,
\end{equation*}
which is always supported on $[-S,S]^2$ independent of $\beta$ by \cref{assumption:q}. We refer the readers to \cref{fig:matr_dist} below for an illustration of the pattern of $\alpha_{\nu,\nu'}$ for various Gibbs samplers. 


\begin{rem}\label{rem:difference} 
The discussion above shows that when $|\nu-\nu'|=\Omega(1/\beta)$, $\alpha_{\nu,\nu'}\ll 1$ (see \cref{eq:entryestimate}). Consequently, in \eqref{eqn:Lindblad_anthony}, the energy transition $\nu$ in $A_{\nu}$ is always $\mathcal{O}(1/\beta)$ close to the energy transition $\nu'$ in $(A^a_{\nu'})^{\dagger}$. 
This differs from our case, where $\{\alpha_{\nu,\nu'}\}$ always includes a $\Omega(1)$-sized principal submatrix and the dynamics allows different energy transition terms $A_{\nu}\rho (A^a_{\nu'})^{\dagger}$ even when $|\nu-\nu'|\gg \Omega(1/\beta)$. 
\end{rem}

\begin{figure}[!htbp]
     \subfloat{
         \centering
         \includegraphics[width=0.7\textwidth]{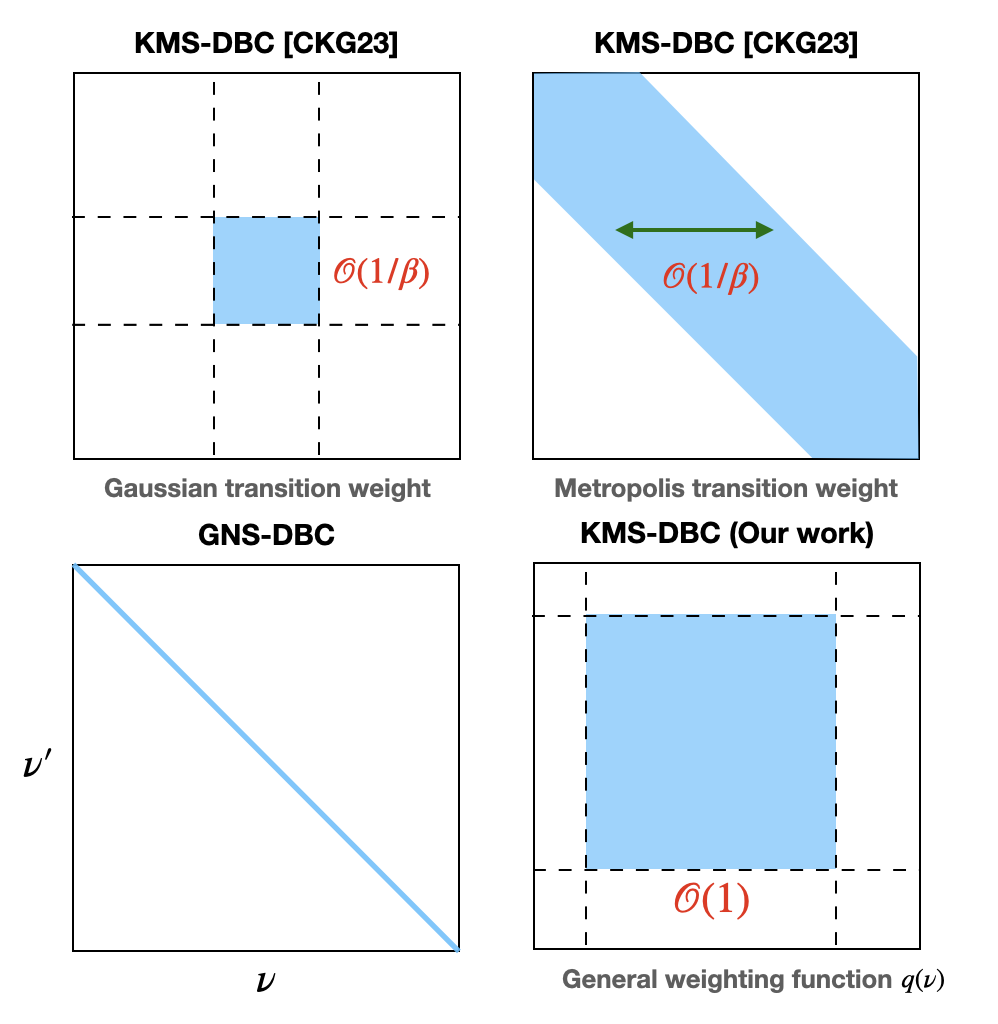}
     }
     \caption{The matrix element distribution of the Kossakowski matrix $(\alpha_{\nu,\nu'})_{\nu,\nu' \in B_H}$ associated with a coupling $A$ for various detailed balanced quantum Gibbs samplers. The blue shadow region indicates the dominant entries.}\label{fig:matr_dist}
\end{figure}

\subsection*{Conflict of interest statment}

Authors have no conflict of interest to declare.

\newcommand{\etalchar}[1]{$^{#1}$}

\appendix

\section{Block encoding}\label{app:block_encoding}

Block encoding (see \cite{LowChuang2019,GilyenSuLowEtAl2019}) provides a general framework for encoding a non-unitary matrix using unitary matrices, which can be implemented on quantum devices. 

\begin{defn}[Block encoding] Given a matrix $A\in\CC^{2^n\times 2^n}$, if we can find $\alpha, \epsilon \in \mathbb{R}_+$, and a unitary matrix $U_A\in\CC^{2^{n+m}\times 2^{n+m}}$ so that 
\begin{equation}
\Vert A - \alpha \left(\langle 0^m | \otimes I_n\right) U_A \left( | 0^m \rangle \otimes I_n \right) \Vert \leq \epsilon,
\end{equation}
then $U_A$ is called an $(\alpha, m, \epsilon)$-block-encoding of $A$. The parameter $\alpha$ is referred to as the block encoding factor, or the subnormalization factor. 
\label{def:blockencode}
\end{defn}

Intuitively, the block encoding matrix $U_A$ encodes the rescaled matrix $A/\alpha$ in its upper left block:
\begin{equation*}
    U_A \approx \left( \begin{array}{cc}
        A/\alpha & * \\
        * & *
    \end{array} \right). 
\end{equation*}
There have been substantial efforts on implementing the block encoding of certain structured matrices of practical interest~\cite{GilyenSuLowEtAl2019,NguyenKianiLloyd2022,CampsLinVanBeeumenEtAl2024,sunderhauf2023blockencoding}. 
In this work, we assume that the query access to the block encoding of relevant matrices is available. We also assume that there is no error in the block encodings of the input matrices $H$, $A^a$, etc. If such errors are present, their impact should be treated using perturbation theories. Meanwhile, we need to carefully keep track of the error of the block encodings derived from the input matrices, such as the jump operators $L_a$.

\section{Linear combination of unitaries}\label{app:LCU}

The linear combination of unitaries (LCU)~\cite{ChildsWiebe2012} is an important quantum primitive, which allows matrices expressed as a superposition of unitary matrices to be coherently implemented using block encoding. Here we follow~\cite{GilyenSuLowEtAl2019} and present a general version of the LCU that is applicable to possibly complex coefficients. 

LCU implements a block encoding of $\sum_{j=0}^{J-1} c_j U_j$, where $U_j$ are unitary operators and $c_j$ are complex numbers. 
For the coefficients, we assume access to a pair of \emph{state preparation oracles} (also called prepare oracles for short) $(\textbf{Prep}_{\overline{\gamma_l}}, \textbf{Prep}_{\gamma_r})$ acting as (assume $J=2^\ell$)
\begin{align*}
    \mathbf{Prep}_{\overline{\gamma_l}}&: \ket{0^{\ell}} \rightarrow \frac{1}{\norm{\gamma_l}_2} \sum_{j=0}^{J-1} \overline{\gamma_{l,j}} \ket{j}, \\
    \mathbf{Prep}_{\gamma_r}&: \ket{0^{\ell}} \rightarrow \frac{1}{\norm{\gamma_r}_2} \sum_{j=0}^{J-1} \gamma_{r,j} \ket{j}. 
\end{align*}
Here the coefficients should satisfy $\gamma_{l,j}\gamma_{r,j}=c_j,0\le j\le J-1$.
To minimize the block encoding factor, the optimal choice is $\abs{\gamma_{l,j}}=\abs{\gamma_{r,j}}=\abs{\sqrt{c_j}}$, and where $\sqrt{z}$ refers to the principal value of the square root of $z$. In this case, $\norm{\gamma_l}_2=\norm{\gamma_r}_2=\sqrt{\norm{c}_1}$, where $\norm{c}_1=\sum_{j} \abs{c_j}$ is the $1$-norm of the vector $c$.

The unitaries $U_j$ needs to be accessed via a \emph{select oracle} $\textbf{Select}$ as 
\begin{equation*}
    \mathbf{Select} = \sum_{j=0}^{J-1} \ket{j}\bra{j} \otimes U_j,
\end{equation*}
which can be constructed using controlled versions of the block encoding matrices $U_j$. 

\begin{lem}[LCU]
    Assume $J=2^\ell$, $\norm{\gamma_l}_2=\norm{\gamma_r}_2=\sqrt{\norm{c}_1}$, then the matrix $$W=(\mathbf{Prep}_{\overline{\gamma_l}}^{\dagger} \otimes I) \;\mathbf{Select}\; (\mathbf{Prep}_{\gamma_r}\otimes I)$$ is a $(\norm{c}_1,\ell,0)$-block-encoding of the linear combination of unitaries $\sum_{j=0}^{J-1} c_j U_j$. 
\end{lem}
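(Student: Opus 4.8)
The plan is to verify the block-encoding condition directly from \cref{def:blockencode}: I will check that $W$ is unitary and that its upper-left block equals $\frac{1}{\norm{c}_1}\sum_{j=0}^{J-1} c_j U_j$ \emph{exactly}, so that the error parameter is genuinely $0$. Unitarity is immediate, since $W$ is the product of the three unitaries $\mathbf{Prep}_{\overline{\gamma_l}}^{\dagger} \otimes I$, $\mathbf{Select}$, and $\mathbf{Prep}_{\gamma_r}\otimes I$. It thus remains to compute the compressed operator $(\bra{0^{\ell}}\otimes I)\, W\, (\ket{0^{\ell}}\otimes I)$.

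First I would fix an arbitrary system state $\ket{\psi}$ and trace through the action of $W$ on $\ket{0^{\ell}}\otimes\ket{\psi}$. Applying $\mathbf{Prep}_{\gamma_r}\otimes I$ produces $\frac{1}{\norm{\gamma_r}_2}\sum_j \gamma_{r,j}\ket{j}\otimes\ket{\psi}$, and then $\mathbf{Select}=\sum_j \ketbra{j}{j}\otimes U_j$ turns this into $\frac{1}{\norm{\gamma_r}_2}\sum_j \gamma_{r,j}\ket{j}\otimes U_j\ket{\psi}$. Finally I contract with the left ancilla state, using that the adjoint of the prepare oracle gives $\bra{0^{\ell}}\mathbf{Prep}_{\overline{\gamma_l}}^{\dagger} = \frac{1}{\norm{\gamma_l}_2}\sum_k \gamma_{l,k}\bra{k}$.

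The key step is then orthogonality: $\langle k | j\rangle = \delta_{kj}$ collapses the resulting double sum to $\frac{1}{\norm{\gamma_l}_2\norm{\gamma_r}_2}\sum_j \gamma_{l,j}\gamma_{r,j}U_j\ket{\psi}$. Substituting the factorization $\gamma_{l,j}\gamma_{r,j}=c_j$ and the normalization $\norm{\gamma_l}_2\norm{\gamma_r}_2=\norm{c}_1$ yields exactly $\frac{1}{\norm{c}_1}\sum_j c_j U_j\ket{\psi}$. Since $\ket{\psi}$ is arbitrary, this establishes the block identity and hence that $W$ is a $(\norm{c}_1,\ell,0)$-block-encoding of $\sum_j c_j U_j$.

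The only point requiring care, rather than a genuine obstacle, is the complex-conjugation bookkeeping in forming the adjoint of the left prepare oracle: because $\mathbf{Prep}_{\overline{\gamma_l}}$ stores the conjugated amplitudes $\overline{\gamma_{l,j}}$, its adjoint restores the unconjugated $\gamma_{l,j}$ in the bra, which is precisely what makes the product $\gamma_{l,j}\gamma_{r,j}=c_j$ appear instead of $\overline{\gamma_{l,j}}\gamma_{r,j}$. This explains why the left oracle is defined with $\overline{\gamma_l}$, and it is the feature that makes the construction work for possibly complex coefficients $c_j$.
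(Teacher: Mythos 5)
Your computation is correct and complete: the paper itself states this LCU lemma without proof (deferring to the standard references), and your direct verification --- tracing $\ket{0^{\ell}}\otimes\ket{\psi}$ through $\mathbf{Prep}_{\gamma_r}$, $\mathbf{Select}$, and the adjoint of $\mathbf{Prep}_{\overline{\gamma_l}}$, then using $\gamma_{l,j}\gamma_{r,j}=c_j$ and $\norm{\gamma_l}_2\norm{\gamma_r}_2=\norm{c}_1$ --- is exactly the standard argument the authors are implicitly invoking. Your remark on the conjugation bookkeeping (that daggering the oracle storing $\overline{\gamma_{l,j}}$ restores $\gamma_{l,j}$ in the bra, so the unconjugated product $\gamma_{l,j}\gamma_{r,j}$ appears) correctly identifies the one point where a sign/conjugate error could creep in for complex coefficients.
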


\section{Gevrey functions}\label{sec:gevrey}

In this section, we collect a few results for Gevrey functions that are useful for this
work. While similar findings have been previously demonstrated in the
literature, notably in \cite{Adwan_2017} and \cite{Gus_2019}, we
provide self-contained proofs here with explicit expressions for the constants involved. 


\begin{lem}[Product of Gevrey functions]\label{lem:product_gevrey} 
Given $h\in\mathcal{G}^{s}_{C_1,C_2}(\mathbb{R}^d)$ and $h'\in\mathcal{G}^{s'}_{C'_1,C'_2}(\mathbb{R}^d)$, then
\[
h\cdot h'\in \mathcal{G}^{\max\{s,s'\}}_{C_1C'_1,C_2+C'_2}(\mathbb{R}^d)\,.
\]
\end{lem}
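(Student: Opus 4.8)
The plan is to apply the multivariate Leibniz rule to $\partial^\alpha(h h')$ and then estimate each summand using the Gevrey bounds for $h$ and $h'$ separately. First I would write, for any multi-index $\alpha$,
\[
\partial^\alpha(h h') = \sum_{\beta \le \alpha} \binom{\alpha}{\beta}\, \partial^\beta h\, \partial^{\alpha - \beta} h'\,,
\]
where $\binom{\alpha}{\beta} = \prod_{i=1}^d \binom{\alpha_i}{\beta_i}$. Taking $L^\infty$-norms and invoking the defining inequalities of $\mathcal{G}^{s}_{C_1,C_2}$ and $\mathcal{G}^{s'}_{C'_1,C'_2}$ yields
\[
\left\|\partial^\alpha(h h')\right\|_{L^\infty(\mathbb{R}^d)} \le C_1 C'_1 \sum_{\beta \le \alpha} \binom{\alpha}{\beta}\, C_2^{|\beta|}(C'_2)^{|\alpha-\beta|}\, |\beta|^{|\beta|s}\, |\alpha-\beta|^{|\alpha-\beta|s'}\,.
\]

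The key step is to decouple the factorial-type growth from the binomial weights. Setting $s'' := \max\{s,s'\}$, I would establish the pointwise bound
\[
|\beta|^{|\beta|s}\, |\alpha-\beta|^{|\alpha-\beta|s'} \le |\alpha|^{|\alpha|s''} \qquad \text{for all } \beta \le \alpha\,.
\]
This follows because $|\beta| \le |\alpha|$ and $s \le s''$ give $|\beta|^{|\beta|s} \le |\alpha|^{|\beta|s''}$, and symmetrically $|\alpha-\beta|^{|\alpha-\beta|s'} \le |\alpha|^{|\alpha-\beta|s''}$, so the product is at most $|\alpha|^{(|\beta|+|\alpha-\beta|)s''} = |\alpha|^{|\alpha|s''}$. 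Here one uses the convention $0^0 = 1$ to handle the degenerate terms $|\beta| = 0$ and $|\beta| = |\alpha|$.

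Pulling this factor out of the sum, the remaining binomial sum factorizes over coordinates, and the elementary binomial theorem in each coordinate gives
\[
\sum_{\beta \le \alpha} \binom{\alpha}{\beta}\, C_2^{|\beta|}(C'_2)^{|\alpha-\beta|} = \prod_{i=1}^d (C_2 + C'_2)^{\alpha_i} = (C_2 + C'_2)^{|\alpha|}\,.
\]
Combining the last three displays produces $\left\|\partial^\alpha(h h')\right\|_{L^\infty(\mathbb{R}^d)} \le C_1 C'_1\, (C_2 + C'_2)^{|\alpha|}\, |\alpha|^{|\alpha|s''}$, which is exactly the defining estimate for $\mathcal{G}^{\max\{s,s'\}}_{C_1 C'_1,\, C_2 + C'_2}(\mathbb{R}^d)$. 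The only real point requiring care is the bookkeeping in the exponent inequality and its boundary cases; the rest is a direct application of Leibniz's rule and the binomial theorem, so I do not anticipate any genuine analytic obstacle.
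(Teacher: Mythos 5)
Your proposal is correct and follows essentially the same route as the paper: Leibniz's rule, the pointwise bound $|\beta|^{|\beta|s}|\alpha-\beta|^{|\alpha-\beta|s'}\le|\alpha|^{|\alpha|\max\{s,s'\}}$, and the binomial theorem to collapse the weighted sum to $(C_2+C_2')^{|\alpha|}$. The only cosmetic difference is that you factorize the binomial sum coordinate-wise while the paper first groups terms by $|\beta|=j$ via $\sum_{\beta\le\alpha,\,|\beta|=j}\binom{\alpha}{\beta}=\binom{|\alpha|}{j}$; both yield the same estimate.
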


\begin{proof} For a index vector $\alpha \in \mathbb{N}^d$, a direct application of Leibniz rule gives 
\begin{align*}
\left\|\pa^\alpha \left(h\cdot h'\right)\right\|_{L^\infty(\mathbb{R}^d)} &\le \sum^{|\alpha|}_{j=0}\sum_{\beta\leq \alpha,|\beta|=j}\binom{\alpha}{\beta}\left\|\pa^\beta h \right\|_{L^\infty(\mathbb{R}^d)}\left\|\pa^{\alpha- \beta} h'\right\|_{L^\infty(\mathbb{R}^d)} \\
& \leq C_1C'_1 \sum^{|\alpha|}_{j=0}\left( (C_2)^{j} j^{js}(C'_2)^{|\alpha|-j}(|\alpha|-j)^{(|\alpha|-j)s'}\left(\sum_{\beta\leq \alpha,|\beta|=j}\binom{\alpha}{\beta}\right)\right) \\
& \leq C_1C'_1 \sum^{|\alpha|}_{j=0}\left(\binom{|\alpha|}{j}(C_2)^{j} j^{js}(C'_2)^{|\alpha|-j}(|\alpha|-j)^{(|\alpha|-j)s'}\right)\\
& \leq C_1C'_1 \left(C_2+C'_2\right)^{|\alpha|}|\alpha|^{|\alpha|\max\{s,s'\}}\,,
\end{align*}
where the third inequality is by  $\sum_{\beta\leq \alpha,|\beta|=j}\binom{\alpha}{\beta} = \binom{|\alpha|}{j}$.  
\end{proof}

We next show that $\exp(-\frac{\sqrt{1+x^2}+x}{4})$
is a Gevrey function and its derivative is $L^1$-integrable. We first recall the Fa\`a di Bruno's formula and the partial Bell polynomial for the chain rule of high-order derivatives \cite[p.139]{comtet1974advanced}. 

\begin{lem}[Fa\`a di Bruno's formula and partial Bell polynomial]
\label{lem:fa}Let $h,g$ be smooth function from $\mathbb{C}$ to $\mathbb{C}$, and $f(x) := h(g(x))$. Then the $k$-th order derivative of $f$ is given by
\[
f^{(k)}(x)=\sum\frac{k!}{q_1!(1!)^{q_1}q_2!(2!)^{q_2}\cdots q_k!(k!)^{q_k}} h^{(\sum q_i)}(g(x))\prod^k_{i=1}\left(g^{(i)}(x)\right)^{q_i}
\]
where the sum is over all $k$-tuples of nonnegative integers $(q_1,q_2,\cdots, q_k)$ satisfying $\sum iq_i=k$. 
The above formula can also be rewritten as
\[
f^{(k)}(x)=\sum^k_{j=1}h^{(j)}(g(x))B_{k,j}\left(g^{(1)}(x),g^{(2)}(x),\dots,g^{(k-j+1)}(s)\right)\,,
\]
where $B_{k,j}$ is the partial Bell polynomial: 
\re{\[
B_{k,j}(z_1,z_2,\dots,z_{k-j+1})= \sum_{\substack{1\leq i\leq k\,,\, q_i\in\mathbb{N}\\\sum^k_{i=1}iq_i=k\\\sum^k_{i=1}q_i=j}} \frac{k!}{q_1!q_2!\cdots q_{k-j+1}!}\prod^{k-j+1}_{i=1}\left(\frac{z_i}{i!}\right)^{q_i}\,.
\]}
\end{lem}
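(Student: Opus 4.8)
The plan is to establish the first (multi-index) form by induction on the derivative order $k$, and then to obtain the second (Bell polynomial) form by a purely notational regrouping of the terms. Throughout, I would encode each summand by its \emph{type} $\mathbf q = (q_1, q_2, \dots)$, a sequence of nonnegative integers with only finitely many nonzero entries, writing $|\mathbf q| = \sum_i q_i$ and letting $e_j$ denote the unit coordinate vector in the $j$-th slot. The inductive claim is that
\[
f^{(k)}(x) = \sum_{\mathbf q:\, \sum_i i q_i = k} c_{\mathbf q}^{(k)}\, h^{(|\mathbf q|)}(g(x)) \prod_{i \ge 1}\bigl(g^{(i)}(x)\bigr)^{q_i}, \qquad c_{\mathbf q}^{(k)} := \frac{k!}{\prod_{i} q_i!\,(i!)^{q_i}}.
\]
The base case $k = 1$ is immediate from the chain rule: the only admissible type is $q_1 = 1$, with $c^{(1)} = 1$.

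For the inductive step I would differentiate the displayed expression once more and apply the product rule to each summand. Differentiating the factor $h^{(|\mathbf q|)}(g(x))$ yields $h^{(|\mathbf q|+1)}(g(x))\,g^{(1)}(x)$, which carries the type $\mathbf q$ to $\mathbf q + e_1$ (and raises $|\mathbf q|$ by one), while differentiating a factor $(g^{(j)})^{q_j}$ yields $q_j (g^{(j)})^{q_j - 1} g^{(j+1)}$, carrying $\mathbf q$ to $\mathbf q - e_j + e_{j+1}$ (preserving $|\mathbf q|$) with combinatorial weight $q_j$. Collecting all contributions to a fixed target type $\mathbf r$ satisfying $\sum_i i r_i = k+1$, I would establish the coefficient recursion
\[
c_{\mathbf r}^{(k+1)} = c_{\mathbf r - e_1}^{(k)} + \sum_{j \ge 1} (r_j + 1)\, c_{\mathbf r + e_j - e_{j+1}}^{(k)},
\]
where terms with a negative entry are understood to vanish (consistent with the vanishing weights below).

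Substituting the closed form for $c^{(k)}$ and simplifying the resulting ratios of factorials, the first contribution equals $\tfrac{r_1}{k+1}\,c_{\mathbf r}^{(k+1)}$, and the $j$-th contribution equals $\tfrac{(j+1) r_{j+1}}{k+1}\,c_{\mathbf r}^{(k+1)}$. Summing over $j$ and using the telescoping identity $r_1 + \sum_{j \ge 1}(j+1) r_{j+1} = \sum_{i \ge 1} i r_i = k+1$ then recovers exactly $c_{\mathbf r}^{(k+1)}$, which closes the induction.

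Finally, for the second form I would partition the sum over types $\mathbf q$ according to the value $j = |\mathbf q|$, factor the common $h^{(j)}(g(x))$ out of the inner sum, and recognize the remaining sum over types with $\sum_i i q_i = k$ and $\sum_i q_i = j$ as the partial Bell polynomial $B_{k,j}$ evaluated at $(g^{(1)}(x), \dots, g^{(k-j+1)}(x))$; here one uses $\tfrac{k!}{\prod_i q_i!}\prod_i (z_i/i!)^{q_i} = \tfrac{k!}{\prod_i q_i!(i!)^{q_i}}\prod_i z_i^{q_i}$. The truncation to arguments up to $z_{k-j+1}$ is justified by the elementary observation that if $q_m \ge 1$ then $k = \sum_i i q_i \ge (m-1)q_m + j \ge (m-1) + j$, forcing $m \le k-j+1$. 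I expect the only delicate point to be the coefficient bookkeeping in the inductive step — in particular, correctly identifying the combinatorial weight $q_j$ produced by the type-lowering derivative and checking that the weighted sum of incoming coefficients telescopes precisely to the factor $k+1$.
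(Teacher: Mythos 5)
Your proof is correct. Note, however, that the paper does not prove this lemma at all: it is quoted verbatim as a classical result with a citation to Comtet's \emph{Advanced Combinatorics} (p.~139), so there is no in-paper argument to compare against. Your self-contained induction on $k$ is a standard and sound way to establish it: the bookkeeping by type $\mathbf q$, the coefficient recursion $c_{\mathbf r}^{(k+1)} = c_{\mathbf r - e_1}^{(k)} + \sum_{j}(r_j+1)c_{\mathbf r + e_j - e_{j+1}}^{(k)}$, the reduction of each incoming contribution to $\tfrac{r_1}{k+1}c_{\mathbf r}^{(k+1)}$ and $\tfrac{(j+1)r_{j+1}}{k+1}c_{\mathbf r}^{(k+1)}$, and the telescoping via $\sum_i i r_i = k+1$ all check out, as does the regrouping by $j=|\mathbf q|$ into partial Bell polynomials and the justification that $q_m\geq 1$ forces $m\le k-j+1$. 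What your approach buys is a fully verifiable derivation in place of a literature pointer; the only (minor) point left implicit is that the formal product-rule computation is insensitive to whether one reads ``smooth from $\CC$ to $\CC$'' as holomorphic or as smooth in the real variable to which the lemma is actually applied later in the paper.
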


Using Fa\`a di Bruno's formula, we can calculate the high order derivatives of $\exp(-\frac{\sqrt{1+x^2}+x}{4})$ and verify that it belongs to a Gevrey class.

\begin{lem}\label{prop:exp_gevrey} 
It holds that 
\begin{equation*}
    e^{-\frac{\sqrt{1+x^2}+x}{4}} \in \mathcal{G}^1_{1,\frac{7}{2}}\,,\q   \left(e^{-\frac{\sqrt{1+x^2}+x}{4}}\right)^{(1)} \in L^1(\R)\,.
\end{equation*}
\end{lem}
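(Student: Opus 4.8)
The plan is to write $f=h\circ g$ with the entire function $h(y)=e^{y}$ and $g(x)=-\tfrac14\bigl(\sqrt{1+x^2}+x\bigr)$, and to control the Gevrey norm of $f$ through Fa\`a di Bruno's formula (\cref{lem:fa}) once suitable bounds on the derivatives of $g$ are available. Two structural observations make the exponential harmless. First, since $\sqrt{1+x^2}\ge|x|\ge -x$, we have $g(x)\le 0$ for all $x$, so every outer derivative satisfies $0<h^{(j)}(g(x))=e^{g(x)}\le 1$; thus the exponential contributes a factor bounded by $1$ and never enlarges the constant. Second, $g^{(1)}=-\tfrac14\bigl(\tfrac{x}{\sqrt{1+x^2}}+1\bigr)$ is bounded by $\tfrac12$, while for $i\ge 2$ one has $g^{(i)}=-\tfrac14\,\tfrac{d^{i}}{dx^{i}}\sqrt{1+x^2}$, so everything reduces to uniform-in-$x$ bounds on the derivatives of $\sqrt{1+x^2}$.

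The main obstacle is precisely these derivative bounds, because $\sqrt{1+x^2}$ itself is unbounded; the key is that its \emph{first} derivative $\psi(x):=\tfrac{x}{\sqrt{1+x^2}}$ is bounded and extends holomorphically. I would set $G_1:=g'=-\tfrac14(\psi+1)$ and exploit that $G_1$ is holomorphic on the tube $\{|\operatorname{Im}z|<1\}$, the branch points of $\sqrt{1+z^2}$ sitting at $z=\pm i$, and bounded there: indeed $z/\sqrt{1+z^2}\to\pm1$ as $\operatorname{Re}z\to\pm\infty$ within the tube, so $M_0:=\sup_{|\operatorname{Im}z|\le r_0}|G_1(z)|<\infty$ for any fixed $r_0<1$. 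Cauchy's estimate on circles of radius $r_0$ centered at real points then gives, uniformly in $x\in\mathbb{R}$,
\begin{equation*}
\bigl|g^{(i)}(x)\bigr|=\bigl|G_1^{(i-1)}(x)\bigr|\le\frac{(i-1)!\,M_0}{r_0^{\,i-1}}\,,\qquad i\ge 1\,,
\end{equation*}
which is a Gevrey-$1$ type bound on the coefficients; writing $a_i:=\lambda\,\rho^{\,i-1}(i-1)!$ with $\lambda,\rho$ absorbing $M_0,r_0$ and the factor $\tfrac14$ yields uniform majorants $|g^{(i)}|\le a_i$.

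With these majorants I would feed Fa\`a di Bruno into a complete-Bell-polynomial bound: since the Bell polynomials have nonnegative coefficients and are monotone in their arguments, $|f^{(k)}(x)|\le\sum_{j=1}^{k}B_{k,j}(a_1,\dots,a_{k-j+1})$. Evaluation on the factorial sequence is explicit: the exponential generating function identity $\sum_{i\ge1}(i-1)!\,t^{i}/i!=-\log(1-t)$ yields $B_{k,j}\bigl(0!,1!,2!,\dots\bigr)=|s(k,j)|$, the unsigned Stirling numbers of the first kind, whence by the weighted homogeneity of $B_{k,j}$ one gets $\sum_j B_{k,j}(a_i)=\rho^{k}\sum_j(\lambda/\rho)^{j}|s(k,j)|=\rho^{k}\,(\lambda/\rho)(\lambda/\rho+1)\cdots(\lambda/\rho+k-1)$. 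A Stirling estimate of this rising factorial is of the form $C_1C_2^{k}k^{k}$, so $f\in\mathcal{G}^{1}_{C_1,C_2}$; carrying the crude constants ($M_0,r_0$, the $\tfrac14$, and the rising-factorial bound) through gives the stated $C_1=1$, $C_2=\tfrac72$. Finally, for the $L^1$ claim I would analyze $f'=g'e^{g}=-\tfrac14\bigl(\tfrac{x}{\sqrt{1+x^2}}+1\bigr)e^{-\frac14(\sqrt{1+x^2}+x)}$ asymptotically: as $x\to+\infty$, $\sqrt{1+x^2}+x\sim 2x$ forces exponential decay $e^{-x/2}$ while the prefactor stays bounded; as $x\to-\infty$, $\sqrt{1+x^2}+x\sim\tfrac{1}{2|x|}\to0$ so the exponential tends to $1$, but the prefactor $\tfrac{x}{\sqrt{1+x^2}}+1=\tfrac{\sqrt{1+x^2}+x}{\sqrt{1+x^2}}\sim\tfrac{1}{2x^{2}}$ decays quadratically. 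Both tails are integrable and $f'$ is continuous, hence $f'\in L^1(\mathbb{R})$.
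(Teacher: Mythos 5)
Your proof is correct, but it reaches the derivative bounds for the inner function by a genuinely different route than the paper. The paper stays entirely combinatorial: it applies Fa\`a di Bruno (\cref{lem:fa}) a \emph{first} time to $\sqrt{1+x^2}=\sqrt{\,\cdot\,}\circ(1+x^2)$, where the inner polynomial has only two nonvanishing derivatives, so the partial Bell polynomials $B_{k,j}(2x,2,0,\dots,0)$ collapse to a single explicit binomial term; this yields $|(\sqrt{1+x^2})^{(k)}|\le (5/2)^k k!$ directly, and a second application of Fa\`a di Bruno to the exponential, together with the crude count that the number of admissible tuples $(q_1,\dots,q_k)$ is at most $4^k$, gives the constant $7/2$. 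You instead obtain the majorants $|g^{(i)}(x)|\le (i-1)!\,M_0/r_0^{\,i-1}$ from holomorphy and boundedness of $g'=-\tfrac14\bigl(z/\sqrt{1+z^2}+1\bigr)$ on a strip $|\operatorname{Im}z|\le r_0<1$ via Cauchy's estimates, and then evaluate the Bell-polynomial sum \emph{exactly} on the factorial sequence through $B_{k,j}(0!,1!,\dots)=|s(k,j)|$ and the rising-factorial identity, rather than over-counting tuples. Your route is less elementary (it invokes the holomorphic extension and the branch-point location) but more robust: it would apply verbatim to any $g$ whose first derivative extends boundedly to a complex strip, whereas the paper's computation leans on the special two-term structure of $(1+x^2)^{(j)}$. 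The one soft spot is your final sentence claiming that the constants come out to exactly $C_1=1$, $C_2=\tfrac72$: what your argument actually delivers is membership in $\mathcal{G}^1_{C_1,C_2}$ for constants determined by $M_0$ and $r_0$ (e.g.\ $r_0=\tfrac12$ gives $M_0\le 0.6$ and hence $C_2\le 2$), and you should appeal explicitly to the monotonicity of the class $\mathcal{G}^1_{C_1,C_2}$ in $C_2$ to land in the stated $\mathcal{G}^1_{1,7/2}$; this is cosmetic, not a gap. Your tail analysis for the $L^1$ claim is correct and in fact more detailed than the paper's, which dismisses it as an immediate consequence of the derivative formula; the quadratic decay $\tfrac{x}{\sqrt{1+x^2}}+1\sim\tfrac{1}{2x^2}$ as $x\to-\infty$ is exactly the point that needs checking there.
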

\begin{proof}

We first use the second formula of \cref{lem:fa} and some properties of the partial Bell polynomials $B_{n,k}$ to calculate $k$-th order derivative for $\sqrt{1+x^2}$. For $k\geq 2$, we have (define $\binom{j}{k-j}=0$ if $k>2j$)
\small
\[
\begin{aligned}
&\left(\sqrt{1+x^2}\right)^{(k)}\\
=&\sum^k_{j=1}\frac{(-1)^{j+1}(2j-3)!!}{2^{j}(1+x^2)^{j-1/2}}B_{k,j}(2x,2,0,\dots,0)=\sum^k_{j=1}\frac{(-1)^{j+1}2^{j}(2j-3)!!}{2^{j}(1+x^2)^{j-1/2}}B_{k,j}(x,1,0,\dots,0)\\
=&\sum^k_{j=1}\frac{(-1)^{j+1}2^{j}(2j-3)!!}{2^{j}(1+x^2)^{j-1/2}} \frac{1}{2^{k-j}}\frac{k!}{j!}\binom{j}{k-j}x^{2j-k}=\frac{k!}{2^k}\sum^k_{j=1}\frac{(-1)^{j+1}2^j(2j-3)!!}{j!}\binom{j}{k-j}\frac{x^{2j-k}}{(1+x^2)^{j-1/2}}\,.
\end{aligned}
\]
\normalsize
Using the fact that $\left|\frac{x^{2j-k}}{(1+x^2)^{j-1/2}}\right|\leq 1$, $(2j-3)!!\leq 2^j j!$, and $(1+2^2)^k=\sum^k_{j=0}\binom{k}{k-j} 2^{2j}$, we have 

\[
\left|\left(\sqrt{1+x^2}\right)^{(k)}\right|\leq \frac{k!}{2^k}\sum^k_{j=0}\frac{2^j(2j-3)!!}{j!}\binom{j}{k-j}\leq \frac{k!}{2^k}\sum^k_{j=0} 4^j\binom{k}{k-j}\leq \left(5/2\right)^kk!\,.
\]

Next, by Fa\`a di Bruno's formula for $\exp(-\frac{\sqrt{1+x^2}+x}{4})$, we obtain
\begin{equation}\label{eq:derivative}
\begin{aligned}
      &\left(e^{-\frac{\sqrt{1+x^2}+x}{4}}\right)^{(k)} \\
= &\sum\frac{k!}{q_1!(1!)^{q_1}q_2!(2!)^{q_2}\cdots q_k!(k!)^{q_k}}\left(-\frac{1}{4}\right)^ke^{-\frac{\sqrt{1+x^2}+x}{4}}\prod^k_{j=1}\left(\left(\sqrt{1+x^2}\right)^{(j)}+x^{(j)}\right)^{q_j}\,.
\end{aligned}
\end{equation}
 Plugging the upper bound $\left|\left(\sqrt{1+x^2}\right)^{(j)}\right|$ gives
\begin{align*}
    &\left|\left(e^{-\frac{\sqrt{1+x^2}+x}{4}}\right)^{(k)}\right| < \sum\frac{k!}{q_1!(1!)^{q_1}q_2!(2!)^{q_2}\cdots q_k!(k!)^{q_k}}\left(\frac{1}{4}\right)^k\prod^k_{j=1} (7/2)^{jq_j}(j!)^{q_j}\\
=&\sum\frac{k!}{q_1!q_2!\cdots q_k!}\left(\frac{1}{4}\right)^k\prod^k_{j=1}(7/2)^{jq_j}=\sum\frac{(7/8)^kk!}{q_1!q_2!\cdots q_k!}\leq (7/8)^kk!\sum 1\leq (7/2)^kk!\,,
\end{align*}
where we use the fact that the number of $k$-tuples of nonnegative integers $(q_1,q_2,\cdots, q_k)$ satisfying $\sum jq_j=k$ is less than $\binom{2k}{k}\leq 2^{2k}$. This concludes $\exp(-\frac{\sqrt{1+x^2}+x}{4}) \in \mc{G}_{1,\frac{7}{2}}^1$. The $L^1$-integrablity of $(\exp(-\frac{\sqrt{1+x^2}+x}{4}))^{(1)}$ is a simple consequence of \cref{eq:derivative}.  
\end{proof}

Finally, we show that the Fourier transform of the Gevrey class with compact support decays rapidly, by a Paley-Wiener type estimate.
\begin{lem}\label{lem:hat_Gev_decay}
Given $h\in\mathcal{G}^s_{C_1,C_2}(\mathbb{R}^d)$ with compact support $\Omega=\mathrm{supp}(h)$ and $s\geq 1$, define 
\begin{equation*}
    H(y)=\frac{1}{(2\pi)^d}\int_{\mathbb{R}^d}h(x)e^{-ix\cdot y} \ud x\,.
\end{equation*}
Then, for any $y\in\mathbb{R}^d$, there holds
\[
\left|H(y)\right|\leq \frac{C_1|\Omega|}{(2\pi)^d} \, e^{\frac{esd}{2}-\frac{s}{C^{1/s}_2e} \norm{y}_2^{1/s}}\,,
\]
where $|\Omega|=\int_{\Omega}1\,\mathrm{d}x$ is the volume of $\Omega$ and $\norm{y}_2$ is the $2$-norm of the vector $y$. 
\end{lem}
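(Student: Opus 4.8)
The plan is to prove the estimate by the classical Paley--Wiener strategy: trade each power of the frequency variable $y$ for a derivative of $h$ via integration by parts, exploit the compact support of $h$ to discard all boundary terms, and then optimize the resulting Gevrey bound over the order of differentiation. Concretely, for any multi-index $\alpha\in\mathbb{N}^d$ one has $\partial_x^\alpha e^{-ix\cdot y}=(-iy)^\alpha e^{-ix\cdot y}$, so that
\[
(-iy)^\alpha H(y)=\frac{1}{(2\pi)^d}\int_{\mathbb{R}^d} h(x)\,\partial_x^\alpha e^{-ix\cdot y}\ud x=\frac{(-1)^{|\alpha|}}{(2\pi)^d}\int_{\Omega}(\partial^\alpha h)(x)\,e^{-ix\cdot y}\ud x\,,
\]
where every boundary term vanishes because $h\in C^\infty$ is supported in the compact set $\Omega$. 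Taking absolute values and inserting the Gevrey bound $\norm{\partial^\alpha h}_{L^\infty}\le C_1 C_2^{|\alpha|}|\alpha|^{|\alpha|s}$ yields the master inequality
\[
\abs{y^\alpha}\,\abs{H(y)}\le \frac{C_1\abs{\Omega}}{(2\pi)^d}\,C_2^{|\alpha|}\,|\alpha|^{|\alpha|s}\qquad\text{for every }\alpha\in\mathbb{N}^d\,.
\]

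Next I would extract decay from this family of inequalities. Since for a fixed total order $|\alpha|=k$ the factor $\abs{y^\alpha}$ is largest when all derivatives fall on the coordinate $y_j$ with $\abs{y_j}=\norm{y}_\infty$, I would specialize to $\alpha=k e_j$, obtaining the scalar estimate $\abs{H(y)}\le \frac{C_1\abs{\Omega}}{(2\pi)^d}\big(C_2 k^s/\norm{y}_\infty\big)^k$ for every $k\ge 0$ (the case $k=0$ giving the trivial bound $\abs{H(y)}\le \frac{C_1\abs{\Omega}}{(2\pi)^d}$). An essentially equivalent route, which brings in $\norm{y}_2$ directly, is to restrict to even multi-indices and use the identity $\sum_{|\gamma|=m}\binom{m}{\gamma}y^{2\gamma}=\norm{y}_2^{2m}$ (equivalently, integrating $\Delta^m h$ by parts, since $\Delta_x^m e^{-ix\cdot y}=(-1)^m\norm{y}_2^{2m}e^{-ix\cdot y}$), at the cost of the combinatorial factor $\sum_{|\gamma|=m}\binom{m}{\gamma}=d^m$; this multinomial factor is the source of the dimensional dependence in the final constant.

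Finally I would optimize over the differentiation order. Writing the bound as $\exp\!\big(k\ln(C_2 k^s/\lambda)\big)$, with $\lambda$ the relevant magnitude of $y$, the continuous minimizer is $k^\ast=\lambda^{1/s}/(e\,C_2^{1/s})$, at which the exponent equals $-s k^\ast=-\tfrac{s}{e C_2^{1/s}}\lambda^{1/s}$, producing the advertised decay rate. Because $k$ must be a nonnegative integer, I would take $k=\lceil k^\ast\rceil$ and control the gap $k\ln(C_2 k^s/\lambda)-k^\ast\ln(C_2 (k^\ast)^s/\lambda)$ using the convexity of $k\mapsto k\ln(C_2 k^s/\lambda)$ together with $0\le k-k^\ast\le 1$; the accumulated rounding error across the $d$ coordinates is what produces the harmless prefactor $e^{esd/2}$. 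I expect the main obstacle to be precisely this last bookkeeping: carrying out the integer-rounding estimate so that the constants come out exactly as $\tfrac{s}{C_2^{1/s}e}$ in the exponent and $e^{esd/2}$ in the prefactor, while keeping the bound valid and nontrivial in the small-$\norm{y}_2$ regime, where $k^\ast<1$ forces the trivial choice $k=0$ and the prefactor $e^{esd/2}$ must absorb the shortfall.
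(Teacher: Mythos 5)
Your overall strategy --- integrate by parts to trade powers of $y$ for derivatives of $h$, insert the Gevrey bound, and optimize over the order of differentiation with an integer-rounding correction --- is exactly the paper's. The ``master inequality'' $\abs{y^\alpha}\abs{H(y)}\le \frac{C_1\abs{\Omega}}{(2\pi)^d}C_2^{\abs{\alpha}}\abs{\alpha}^{\abs{\alpha}s}$ is the first display of the paper's proof, and the optimization-plus-rounding step you flag as the main obstacle is precisely the inequality $\inf_{m\in\mathbb{Z}_{\ge0}}\bigl(\tfrac{s}{ae}\bigr)^{ms}m^{ms}/\abs{t}^{m}\le e^{es/2}e^{-a\abs{t}^{1/s}}$ that the paper cites from Adwan--Hoepfner--Raich and applies with $a=s/(C_2^{1/s}e)$; your convexity/rounding plan is a correct way to prove that one-dimensional inequality.

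The gap is in the multidimensional bookkeeping. Both of your routes for $d\ge 2$ degrade the exponential \emph{rate}, not just the prefactor. Putting all $k$ derivatives on the coordinate realizing $\norm{y}_\infty$ yields decay in $\norm{y}_\infty^{1/s}\ge \norm{y}_2^{1/s}/d^{1/(2s)}$, so the exponent becomes $-\frac{s}{C_2^{1/s}e\,d^{1/(2s)}}\norm{y}_2^{1/s}$ rather than the stated $-\frac{s}{C_2^{1/s}e}\norm{y}_2^{1/s}$. The Laplacian route gives $\norm{y}_2^{2m}\abs{H(y)}\le \frac{C_1\abs{\Omega}}{(2\pi)^d}\,d^{m}C_2^{2m}(2m)^{2ms}$, i.e.\ $\abs{H(y)}\le \frac{C_1\abs{\Omega}}{(2\pi)^d}\bigl(\sqrt{d}\,C_2(2m)^{s}/\norm{y}_2\bigr)^{2m}$; since $m$ is the variable being optimized (it grows like $\norm{y}_2^{1/s}$ at the optimum), the factor $d^{m}$ cannot be absorbed into a fixed prefactor such as $e^{esd/2}$ --- it effectively replaces $C_2$ by $\sqrt{d}\,C_2$ in the rate. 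So your claim that the multinomial factor is only ``the source of the dimensional dependence in the final constant'' is incorrect. The paper instead optimizes coordinate by coordinate, arriving at $e^{esd/2}e^{-a\sum_i\abs{y_i}^{1/s}}$ and then using $\sum_i\abs{y_i}^{1/s}\ge\norm{y}_2^{1/s}$ (valid because $s\ge1$); the prefactor $e^{esd/2}$ is $d$ copies of the one-dimensional rounding loss $e^{es/2}$, not a rounding error accumulated along a single optimization as you describe. For $d=1$ --- the only case actually invoked elsewhere in the paper --- the dimensional issue disappears, and your argument coincides with the paper's and is correct.
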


\begin{proof} 
By \cref{def:gevrey}, we have, for every $d$-tuple of nonnegative integers $\alpha$ with $|\alpha|=\sum_i \abs{\alpha_i}$,
\[
\left\| \pa^\alpha h \right\|_{L^\infty(\R^d)}\leq C_1C^{|\alpha|}_2|\alpha|^{|\alpha|s}\,.
\]
It follows that 
\[
\left|y^{\alpha}\right|\left|H(y)\right|=\left|y^{\alpha}H(y)\right|=\left|\frac{1}{(2\pi)^d}\int_{\Omega}\pa^\alpha h(x) e^{-ix\cdot y} \ud x\right|\leq \frac{C_1|\Omega|}{(2\pi)^d}C^{|\alpha|}_2|\alpha|^{|\alpha|s}\,,
\]
where $y^{\alpha} := \Pi^d_{i=1}y^{\alpha_i}_i$.  Recall from \cite[Proposition 3.1]{Adwan_2017} that 
\[
\inf_{m\in \mathbb{Z}_{\geq 0}}\left\{\left(\frac{s}{ae}\right)^{ms}\frac{m^{ms}}{|t|^m}\right\}\leq e^{es/2}e^{-a|t|^{1/s}}\,, \q \text{for any $a, s, t > 0$}\,.
\]
Letting $a := s/(C^{1/s}_2e)$ and using the above inequality, we obtain
\begin{align*}
    \left|H(y)\right|\leq \frac{C_1|\Omega|}{(2\pi)^d}\inf_{\alpha\in \mathbb{N}^d}\frac{C^{|\alpha|}_2|\alpha|^{|\alpha|s}}{|y^\alpha|}\leq &\frac{C_1|\Omega|}{(2\pi)^d}\prod_{i}\inf_{\alpha_i\in\mathbb{Z}_{\geq0}}\left|\frac{C^{\alpha_i}_2|\alpha_i|^{\alpha_is}}{|y_i|^{\alpha_i}}\right|\\
\leq &\frac{C_1|\Omega|}{(2\pi)^d}e^{esd/2}e^{-a\sum^d_{i=1}|y_i|^{1/s}} \\
\leq &\frac{C_1|\Omega|}{(2\pi)^d}e^{esd/2}e^{-a \norm{y}_2^{1/s}}\,. \qedhere 
\end{align*}
\end{proof}

\section{Error analysis and complexity}\label{sec:discretization}

In this section, for notational simplicity, we often absorb the generic constant $C_s$ depending on the parameter $s$ of the weighing function $q(\nu)$
in \cref{eq:relaq2} into $\Or$. 
We first study the decaying and integrable properties of functions \re{$f$ and $g$ involved in the jumps and the coherent term, respectively.} 


\begin{lem}\label{lem:property_f_g} 
Let \re{$f(t)$ and $g(t)$} be the functions defined in \eqref{eqn:f_a} and \eqref{eqn:gk}-\eqref{eqn:g_hat} with \re{$q(\nu)$ and $\kappa(\nu)$ satisfying \cref{assumption:q}} and \cref{assp:kappa}. Then, it holds that
\begin{equation} \label{eq:decayf}
    |f(t)|\leq \frac{\xi^2_qS}{\pi}\exp\left(\frac{es}{2}-\frac{s}{e}\left(\frac{|t|}{\beta \xi_u+\xi_w/S}\right)^{1/s}\right)\,,
\end{equation}
and
\begin{equation} \label{eq:decayg}
   \left|g(t)\right|\leq \frac{2 \xi_q S}{\pi}\exp\left(\frac{es}{2}-\frac{s}{e}\left(\frac{|t|}{\beta+\xi_w/4S}\right)^{1/s}\right)\,.
\end{equation}
The integrals of $f$ and $g$ have the following asymptotics: 
\begin{equation} \label{eq:inttf}
\int_\R |f(t)| \ud t  = \mathcal{O}\left( (\xi_q C_{1,u} + \xi^2_q \xi_w) \log ((\beta \xi_u  + \xi_w/S) \max\{S,1\})\right) \,,
\end{equation}
and
\begin{equation} \label{eq:inttg}
\int_\R |g(t)| \ud t = \Or\left(\xi_q(1+\log((\beta +\xi_w/S)\max\{S,1\}))\right)\,. 
\end{equation}
\end{lem}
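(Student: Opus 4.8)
The plan is to read off the two pointwise tail bounds \eqref{eq:decayf}--\eqref{eq:decayg} from the Paley--Wiener estimate \cref{lem:hat_Gev_decay}, and then to combine these tails with a crude bound near $t=0$ and an integration-by-parts bound at intermediate scales in order to extract the $L^1$ asymptotics \eqref{eq:inttf}--\eqref{eq:inttg}.

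For \eqref{eq:decayf}, observe that $f$ is precisely the Fourier transform (in the sign convention of \cref{lem:hat_Gev_decay}) of $\widehat f(\nu)=q(\nu)e^{-\beta\nu/4}$, which by \eqref{eq:relaq2} lies in $\mathcal{G}^s_{\xi_q^2,\,\beta\xi_u+\xi_w/S}(\R)$ with support in $[-S,S]$, so $|\Omega|\le 2S$. Applying \cref{lem:hat_Gev_decay} with $d=1$, $C_1=\xi_q^2$, $C_2=\beta\xi_u+\xi_w/S$ reproduces \eqref{eq:decayf} verbatim, including the prefactor $\xi_q^2 S/\pi=C_1|\Omega|/(2\pi)$. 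For \eqref{eq:decayg} the only additional task is to verify that $\widehat g(\nu)=-\tfrac{i}{2}\tanh(-\beta\nu/4)\kappa(\nu)$ is again a compactly supported Gevrey function with the advertised constants. Since $\tanh(z)$ is analytic with nearest singularities at distance $\pi/2$, the rescaled function $\tanh(-\beta\nu/4)$ has its nearest poles at distance $2\pi/\beta$, so Cauchy's estimates (checking $k=1$ directly via $|\tanh'|\le1$ and $k\ge2$ on a disc of radius $<2\pi/\beta$) give $\tanh(-\beta\nu/4)\in\mathcal{G}^1_{1,\beta}$; one may alternatively invoke the derivative-polynomial bounds of \cite{tanh_2007}. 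Multiplying by $\kappa=w(\cdot/4S)\in\mathcal{G}^s_{\xi_q,\xi_w/4S}$ via \cref{lem:product_gevrey} and absorbing the factor $\tfrac12$ yields $\widehat g\in\mathcal{G}^s_{\xi_q/2,\,\beta+\xi_w/4S}$ supported in $[-4S,4S]$, whence \cref{lem:hat_Gev_decay} gives \eqref{eq:decayg}, again with matching prefactor $2\xi_q S/\pi=(\xi_q/2)(8S)/(2\pi)$.

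For the $L^1$ bounds I would split $\R$ into three regions. Near $t=0$ use the trivial bound $|f(t)|\le\frac{1}{2\pi}\|\widehat f\|_{L^1}=\mathcal{O}(\xi_q^2 S)$ (resp.\ $|g(t)|\le\frac{1}{2\pi}\|\widehat g\|_{L^1}=\mathcal{O}(\xi_q S)$, using $|\tanh|\le1$ and $|\operatorname{supp}\kappa|\le8S$). At intermediate scales, integrate by parts once; the compact support of $\widehat f,\widehat g$ kills the boundary terms, giving $|f(t)|\le\frac{1}{2\pi|t|}\|\widehat f'\|_{L^1}$ and $|g(t)|\le\frac{1}{2\pi|t|}\|\widehat g'\|_{L^1}$, and integrating $1/|t|$ across this band is exactly what generates the logarithm. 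For large $t$ use the exponential tails just proved. Choosing the two crossover radii optimally --- the constant-versus-$1/t$ crossover at $|t|\sim\|\widehat f'\|_{L^1}/(\xi_q^2 S)=\mathcal{O}(1/S)$ and the $1/t$-versus-exponential crossover at $|t|\sim C_2=\beta\xi_u+\xi_w/S$ (up to polylogarithmic factors) --- makes the logarithmic band have width $\log(C_2\max\{S,1\})$, which yields \eqref{eq:inttf}; the identical scheme yields \eqref{eq:inttg}.

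The stated constants are pinned down by the two derivative $L^1$-norms. Writing $\phi(\nu)=u(\nu)e^{-\nu/4}$ and $\widehat f(\nu)=\phi(\beta\nu)w(\nu/S)$, the product rule together with the substitution that cancels the Jacobian factor $\beta$ gives $\|\widehat f'\|_{L^1}\le\|w\|_\infty\|\phi'\|_{L^1}+\|\phi\|_\infty\|w'\|_{L^1}=\mathcal{O}(\xi_q C_{1,u}+\xi_q^2\xi_w)$, matching the prefactor of the logarithm in \eqref{eq:inttf}. The crucial and least obvious point is the analogue for $g$: differentiating $\tanh(-\beta\nu/4)$ produces $-\tfrac{\beta}{4}\operatorname{sech}^2(-\beta\nu/4)$, which carries a factor $\beta$, yet $\int_\R\operatorname{sech}^2(-\beta\nu/4)\,\mathrm{d}\nu=8/\beta$, so the $\beta$ cancels and $\|\widehat g'\|_{L^1}=\mathcal{O}(\xi_q)$ with no spurious $\beta S$ factor. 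This cancellation is what keeps the prefactor of \eqref{eq:inttg} at $\mathcal{O}(\xi_q)$ rather than $\mathcal{O}(\xi_q\beta S)$; I expect verifying it, together with tracking the Gevrey constant $\beta$ for $\tanh(-\beta\nu/4)$, to be the main obstacle, the remaining steps being routine substitutions and the optimization of the crossover radii.
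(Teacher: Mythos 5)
Your proposal is correct and follows essentially the same route as the paper: the Paley--Wiener estimate of \cref{lem:hat_Gev_decay} applied to $\widehat f\in\mathcal{G}^s_{\xi_q^2,\,\beta\xi_u+\xi_w/S}$ and $\widehat g\in\mathcal{G}^s_{\xi_q/2,\,\beta+\xi_w/4S}$ for the pointwise decay, then a three-region split (trivial bound near $0$, the $|t\,f(t)|\lesssim\|\widehat f'\|_{L^1}$ bound in the logarithmic band, exponential tail beyond $T=\Theta(C_2\log^s(\cdot))$) for the $L^1$ asymptotics, including the key $\beta$-cancellation in $\|\tfrac{\rd}{\rd\nu}\tanh(-\beta\nu/4)\|_{L^1}=\mathcal{O}(1)$. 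The only minor caveat is that a bare Cauchy estimate would give $\tanh(-\beta\nu/4)\in\mathcal{G}^1_{C,c\beta}$ with $C$ possibly exceeding $1$, so to pin down the exact prefactor $2\xi_qS/\pi$ one should use the derivative-polynomial bound of \cite{tanh_2007}, which you already offer as the alternative and which is what the paper does.
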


\begin{proof}  We recall the definitions of \re{$f(t)$ and $g(t)$}:
\begin{equation} \label{def:ft}
f(t)= \frac{1}{2\pi} \int_\R  \re{u(\beta \nu)w(\nu/S)e^{-\beta \nu/4}} e^{-it\nu}\, \mathrm{d} \nu\,,    
\end{equation}
and
\begin{equation}\label{def:gtt}
  \re{g(t)
= \frac{1}{2\pi}\int_\R -\frac{i}{2} \tanh\left(- \beta\nu/4\right) w(\nu/4S) e^{-i\nu t}\, \mathrm{d}\nu \, \mathrm{d}\nu'\,.}
\end{equation}

\medskip 

\noindent {\bf Exponential decay of $f$.}
Thanks to \cref{lem:product_gevrey} with \cref{assumption:q}, we have 
\begin{equation*}
\widehat{f}(\nu) = u(\beta\nu)w(\nu/S)e^{-\beta\nu/4}\in \re{\mathcal{G}^s_{\xi^2_q,\beta \xi_u+\xi_w/S}(\mathbb{R})} \q \text{with}\q \mathrm{supp}\big(\widehat{f}\big) \in [-  S, S ]\,.   
\end{equation*}
Then, \cref{lem:hat_Gev_decay} with $d=1$ and $|\Omega|=2 S$ yields the estimate for $f$: 
\begin{equation}\label{eq:expf}
\left|f(t)\right| 
\leq \frac{\xi^2_qS}{\pi}\exp\left(\frac{es}{2}-\frac{s}{e}\left(\frac{|t|}{\re{\beta \xi_u+\xi_w/S}}\right)^{1/s}\right)\,.     
\end{equation}


\medskip 

\noindent {\bf Exponential decay of $g$.} 
By using  \cite[Eq.\,(3.3)]{tanh_2007}, we first have
\begin{equation*}
\left\| \tanh^{(N)}(x)\right\|_{L^\infty(\mathbb{R})}\leq 2^N\sum^N_{k=0}k!\binom{N}{k}\leq 4^{N}N!\,,
\end{equation*}
which implies 
\begin{equation} \label{auxeq:tanh}
\re{\tanh\left(-\nu\beta/4\right)\in \mathcal{G}^1_{1,\beta}(\mathbb{R})\,.} 
\end{equation}
It follows from \cref{lem:product_gevrey} that
\begin{align} \label{auxeqq1}
\re{(2i)\widehat{g}(\nu) := \tanh\left(-\nu\beta/4\right)w\left(\nu/4S\right)\in \mathcal{G}^{s}_{\xi_q,\beta+\xi_w/4 S}\left(\mathbb{R}\right),\quad \mathrm{supp}(\widehat{g})=[-4S , 4S]\,.}
\end{align}
Again, applying \cref{lem:hat_Gev_decay} with $d=1$ and $|\Omega|=8S$ gives
\begin{equation}\label{eq:expg}
\re{\left|g(t)\right|\leq \frac{2 \xi_q S}{\pi}\exp\left(\frac{es}{2}-\frac{s}{e}\left(\frac{|t|}{\beta+\xi_w/4S}\right)^{1/s}\right)\,.}
\end{equation}
which concludes the proof for the decay of $g$. 


\medskip 

\noindent {\bf Estimate the integral of $f$.} 
We note from \cref{assumption:q} that the derivative of $u(\beta \nu)w(\nu/S)e^{-\beta \nu/4}$ is $L^1$-integrable. It follows that
\begin{equation*} 
itf(t)= \frac{1}{2\pi} \int_\R \frac{\rd}{\rd \nu} \left(u(\beta \nu)w(\nu/S)e^{-\beta \nu/4}\right) e^{-it\nu} \ud \nu\,,    
\end{equation*}
and then
\begin{equation}\label{eqn:tft_bound}
\begin{aligned}
   2 \pi \norm{t f(t)}_{L^\infty(\R)} & \le  \|(u(\beta \nu)e^{-\beta \nu/4})^{(1)}\|_{L^1(\R)} \|w(\nu/S)\|_{L^\infty(\R)} \\ &\qquad \qquad \qquad   + \|u(\beta \nu)e^{-\beta \nu/4}|_{L^\infty(\R)} \|(w(\nu/S))^{(1)}\|_{L^1(\R)} \\
             & \le \xi_q C_{1,u} + 2 \xi^2_q \xi_w\,,
\end{aligned}
\end{equation}
by the invariance of $\|(u(\beta \nu)e^{-\beta \nu/4})^{(1)}\|_{L^1(\R)}$ in $\beta$. Here, we use $w\in \mathcal{G}^s_{\xi_q,\xi_w}$ and $\mathrm{supp}(w)=[-1,1]$ to obtain $\|(w(\nu/S))^{(1)}\|_{L^1(\R)}\leq 2\xi_q\xi_w$. In addition, because $\|u(\beta \nu)e^{-\beta \nu/4}\|_{L^\infty(\R)}\leq \xi_q$, there holds
\begin{align*}
    \norm{f}_{L^\infty(\R)} \le \frac{1}{2\pi} \int_{-S}^S  \left|u(\beta \nu)w(\nu/S)e^{-\beta \nu/4}\right|\, \mathrm{d} \nu \leq \frac{1}{\pi} \xi_q^2 S\,. 
\end{align*}
Thus, we readily have, \re{for any $T > \min\{1, 1/S\}$},
\begin{equation} \label{eq:estint}
 \re{\begin{aligned}
          \int_{-T}^T |f(t)| \ud t & \le \int_{|t|\leq \min\{1, 1/S\}} |f(t)|\ud t + \int_{\min\{1, 1/S\}\leq |t|\leq T} |f(t)| \ud t \\
    & \le \frac{2}{\pi} \xi_q^2 \min\{S, 1\} + 2 \frac{\xi_q C_{1,u} + 2 \xi^2_q \xi_w}{2 \pi} \log(T \max\{S,1\}) \\ &=\mathcal{O}\left((\xi_q C_{1,u} + \xi^2_q \xi_w) \log(T \max\{S,1\})\right).
    \end{aligned}}
\end{equation}

Next, by \cref{eq:decayf}, 
a direct computation via change of variable $t^{1/s} = u$ gives 
\begin{equation} \label{auxeq:intf}
  \int_{T}^\infty \left|f(t)\right| \ud t 
=\mathcal{O}\left(\xi^2_q S  \int_{T^{1/s}}^\infty u^{s-1} \exp\left(- \frac{s}{e(\re{\beta \xi_u+\xi_w/S)^{1/s}}}u\right) \ud u\right)\,, \q T > 0\,.
\end{equation}
We define the constant $T_f$ by 
\begin{align} \label{eq:tauaw}
    T_f: = \inf \left \{T > 0\,;\ u^{s-1} \le \exp\left(\frac{s}{\re{2e(\beta \xi_u + \xi_w/S)^{1/s}}} u \right)\ \text{for any $u \ge T^{1/s}$} \right\},
\end{align}
which satisfies the following asymptotics: 
\begin{equation*} 
     T_f/\log T_f = \Theta\left(\beta \xi_u + \xi_w/S\right)\,.
\end{equation*}
Note that $x/\log x$ is decreasing on $(1,e]$ and increasing on $[e,+\infty)$ with global minimum $e$ at $x = e$. For any $y \ge e$, the equation $x/\log x = y$ has a unique solution $y \le x \le y^2$, which readily gives $ y \log y \le x = y \log x \le 2 y \log y$ and thus, by \eqref{eq:tauaw},  
\begin{align} \label{auxeq:tfest}
   \re{T_f = \Theta\left((\beta \xi_u + \xi_w/S) \log\left(\beta \xi_u + \xi_w/S\right) \right)}\,.
\end{align} 
We can then estimate the integral by \cref{eq:estint,auxeq:intf}, as well as \cref{auxeq:tfest}:\re{
\small
\begin{align} \label{auxeq:inttf}
          \int_\R \left|f(t)\right| \ud t & = \int_{-T}^{T} |f(t)| \ud t + \mathcal{O}\left(\xi^2_q S   \int_{T^{1/s}}^\infty  \exp\left(- \frac{s}{2 e (\beta \xi_u+\xi_w/S)^{1/s}}u\right) \ud u\right)\notag \\
     & =\mathcal{O}\left((\xi_q C_{1,u} + \xi^2_q \xi_w) \log(T \max\{S,1\}) + \xi^2_q \underbrace{S (\beta \xi_u+\xi_w/S)^{1/s} \exp\left(- \frac{s T^{1/s}}{2e(\beta \xi_u+\xi_w/S)^{1/s}}\right)}_{=:{\bf II}}\right) \\
     & = \mathcal{O}\left( (\xi_q C_{1,u} + \xi^2_q \xi_w) \log ((\beta \xi_u  + \xi_w/S) \max\{S,1\})\right), \notag
    \end{align}
\normalsize
where $T =\Theta\left((\beta \xi_u + \xi_w/S)\log^s\left(\left(\beta \xi_u + \xi_w/S\right)\max\{S,1\}\right) \right) > T_f$ is chosen such that 
\begin{align*}
    {\bf II} & =  S(\beta \xi_u+\xi_w/S)^{1/s} \exp\left(- \Theta(\log(\left(\beta \xi_u + \xi_w/S\right)\max\{S,1\})) \right) \\
    & = \frac{1}{(\beta \xi_u+\xi_w/S)^{\Theta(1)-1/s}} \frac{S}{\max\{S,1\}^{\Theta(1)}} \to 0 \q \text{as}\q \beta \xi_u + \xi_w/S \to \infty\,.
\end{align*}
We have proved \cref{eq:inttf}.}

\medskip

\noindent {\bf Estimate the integral of $g$.} Similarly, by \cref{def:gtt} with \cref{assumption:q} and $|\tanh(\nu/4)|\leq 1$, we have \re{
$
    \norm{g(t)}_{L^\infty(\R^2)} \le \frac{2 \xi_q S}{\pi}
$.}
A direct computation gives
\begin{equation*}
    \frac{\rd}{\rd \nu} \tanh(\nu/4) = \frac{1}{4} (\tanh^2(\nu/4) - 1) \in L^1(\R) \bigcap L^\infty(\R)\,.
\end{equation*}
Similar to \eqref{eqn:tft_bound}, this gives
\begin{align*}
    2\pi|tg(t)| &\leq \int_\R \left|\widehat{g}'(\nu)\right|\mathrm{d}\nu\\
    &\leq \frac{1}{2} \underbrace{\|\tanh(-\beta \nu/4)^{(1)}\|_{L^1(\R)}}_{=\mc{O}(1)} \underbrace{\|w(\nu/4S)\|_{L^\infty(\R)}}_{=\mc{O}(\xi_q)} + \frac{1}{2} \underbrace{\|\tanh(-\beta \nu/4)\|_{L^\infty(\R)}}_{=\mc{O}(1)} \underbrace{\|w(\nu/4S)^{(1)}\|_{L^1(\R)}}_{= \mc{O}(1)}\,,
\end{align*}
namely,
\begin{equation*}
    |g(t)| \le \mc{O}(\xi_q)/t\,.
\end{equation*}
In the same manner as \cref{eq:estint}, we have, for $T > \min\{1,1/S\}$ 
\begin{equation} \re{\label{auxeq:finiteg}
    \begin{aligned}
        \int_{-T}^T |g(t)| \ud t & =\int_{|t|\leq \min\{1,1/S\}} |g(t)| \ud t+\int_{\min\{1,1/S\} \leq |t|\leq T} |g(t)| \ud t \\ &=  \Or(\xi_q(1+\log(T \max\{S,1\}))\,,
    \end{aligned}}
\end{equation}

We now consider the truncation time as in \eqref{eq:tauaw}:
\begin{align}  \label{eq:tautgg}
    T_g: = \inf \left \{T > 0\,;\ u^{s-1} \le \exp\left(\frac{s}{2e(\beta + \xi_w/4S)^{1/s}} u \right)\ \text{for any $u \ge T^{1/s}$} \right\},
\end{align} 
which satisfy 
\begin{align*}
    T_g = \Theta\left((\beta+\xi_w/4S)\log\left(\beta+\xi_w/4S\right)\right).
\end{align*}
\re{Let $T = \Theta\left((\beta+\xi_w/4S)\log^s(\left(\beta+\xi_w/4S\right)\max\{S,1\})\right) > T_g$. It follows from \cref{auxeq:finiteg,eq:decayg} and some similar estimates as in \cref{auxeq:inttf} that 
\begin{align*}
    \int_\R |g(t)| \ud t &= \int_{-T}^{T}|g(t)| \ud t+ \mathcal{O}\left(\int^\infty_{T^{1/s}}\xi_q S \exp\left(-\frac{s}{e\left(\beta+\xi_w/4S\right)^{1/s}}u\right)u^{s-1}\rd u\right)\\
    & =\int_{-T}^{T}|g(t)| \ud t+\mathcal{O}\left(\int^\infty_{T^{1/s}}\xi_q S \exp\left(-\frac{s}{2e\left(\beta + \xi_w/4S\right)^{1/s}}u\right)\rd u\right)\\
    & = \Or\left(\xi_q(1+\log((\beta +\xi_w/S)\max\{S,1\}))\right)\,.
\end{align*}
The proof is complete.}
\end{proof}

We next prove \cref{lem:discretization_L_G}. \rre{We first restate it with more explicit parameter dependence.} 

\begin{prop}[restatement of \cref{lem:discretization_L_G}]\label{lem:discretization_L_Gre}
Under \cref{assumption:q}, we assume $\beta>0$, $\|A^a\|\leq 1$ for any $a\in\mc{A}$. When \re{$\tau\le \frac{\pi}{\norm{H} + 2S}$} and  
\begin{equation*} 
  (M-1)\tau = \Omega \left((\beta \xi_u + \xi_w/S) \log\left(\beta \xi_u + \xi_w/S\right) \right)\,,
\end{equation*}
it holds that 
\begin{equation}\label{eqn:approx_L}
\left\|L_a-\sum^{2 M - 1}_{m = 0}f^a(t_m) A^a(t_m) \tau\right\|\leq C_fS\exp\left(-\frac{s ((M-1)\tau)^{1/s}}{2(\beta \xi_u + \xi_w/S)^{1/s}e} \right)\,,
\end{equation}
with
\begin{align*} 
   C_f = \mathcal{O}\left(\xi_q^2(\beta \xi_u + \xi_w/S)^{1/s}\right)\,,
\end{align*}
and 
\begin{equation*}
\begin{aligned}
&\left\|G-\sum^{2 M - 1}_{m = 0}g(t_m) H_L(t_m) \tau\right\| \le C_g S |\mathcal{A}|\exp\left(-\frac{s((M-1)\tau)^{1/s}}{2e(\beta+\xi_w/4S)^{1/s}}\right)\,,
\end{aligned}
\end{equation*}
with 
\begin{equation*} 
      C_g = {\mathcal{O}}\left(\xi_q(\xi_q C_{1,u} + \xi^2_q \xi_w)^2\left(\beta+\xi_w/4S\right)^{1/s} \log^2((\beta \xi_u + \xi_w/S)\max\{S,1\})\right)\,.
\end{equation*}
Here $\Omega(\cdot)$, $\mathcal{O}(\cdot)$ absorbs some constant depending on $s$. 
\end{prop}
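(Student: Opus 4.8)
The plan is to split the quadrature error into an \emph{aliasing} contribution and a \emph{truncation} contribution, to show that the aliasing contribution vanishes \emph{identically} because the integrands are band-limited (this is where \cref{assumption:q}'s compact support and the Nyquist condition $\tau\le\pi/(\|H\|+2S)$ enter), and then to control the remaining truncation tail by the sub-exponential decay estimates \eqref{eq:decayf}--\eqref{eq:decayg} of \cref{lem:property_f_g}. The constants $C_f,C_g$ are then read off by integrating these tails, using the lower bound on $(M-1)\tau$ to absorb the polynomial prefactor into half of the decay rate.

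First I would rewrite the full (untruncated) rectangle rule frequency by frequency. Using $A^a(t)=\sum_{\nu\in B_H}A^a_\nu e^{i\nu t}$ from \eqref{eq:Heisenbergevo}, we have
\begin{equation*}
\tau\sum_{j\in\ZZ}f^a(j\tau)A^a(j\tau)=\sum_{\nu\in B_H}A^a_\nu\Big(\tau\sum_{j\in\ZZ}f^a(j\tau)e^{i\nu j\tau}\Big),
\end{equation*}
and applying the Poisson summation formula to $t\mapsto f^a(t)e^{i\nu t}$, whose Fourier transform in the convention of \eqref{defaf} is $\xi\mapsto\widehat{f^a}(\nu+\xi)$, gives
\begin{equation*}
\tau\sum_{j\in\ZZ}f^a(j\tau)e^{i\nu j\tau}=\sum_{k\in\ZZ}\widehat{f^a}\!\Big(\nu+\tfrac{2\pi k}{\tau}\Big).
\end{equation*}
The $k=0$ term reproduces $\widehat{f^a}(\nu)$, so summing against $A^a_\nu$ recovers exactly $L_a$ by \eqref{eqn:L_a_formula}; the aliasing terms are those with $k\neq0$. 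The crucial (and most delicate) step is to verify these all vanish: since every Bohr frequency obeys $|\nu|\le2\|H\|$ and $\mathrm{supp}(\widehat{f^a})\subset[-S,S]$ by \eqref{eq:relaq1}, the hypothesis $\tau\le\pi/(\|H\|+2S)$ yields $2\pi/\tau\ge2\|H\|+4S$, whence for $k\neq0$
\begin{equation*}
\Big|\nu+\tfrac{2\pi k}{\tau}\Big|\ge\tfrac{2\pi}{\tau}-|\nu|\ge2\|H\|+4S-2\|H\|=4S>S,
\end{equation*}
placing the argument outside $\mathrm{supp}(\widehat{f^a})$. Hence $L_a=\tau\sum_{j\in\ZZ}f^a(j\tau)A^a(j\tau)$ \emph{exactly}, and the only remaining error is the restriction of the index set from $\ZZ$ to $\{-M,\dots,M-1\}$, i.e. to the nodes $\{t_m\}$ of \eqref{eqn:quad_t}. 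The same reasoning applies verbatim to $G$: by \eqref{auxeqq1}, $\widehat{g}$ is supported on $[-4S,4S]$ and vanishes at the endpoints $\pm4S$ (as $w(\pm1)=0$), so the identical bound $|\nu+2\pi k/\tau|\ge4S$ for $k\neq0$ forces every replica to vanish, giving $G=\tau\sum_{j\in\ZZ}g(j\tau)H_L(j\tau)$ with $H_L$ as in \eqref{def:hll}.

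It then remains to estimate the truncation tails. Using $\|A^a(t)\|=\|A^a\|\le1$ together with the monotone envelope \eqref{eq:decayf}, I would bound
\begin{equation*}
\Big\|L_a-\sum_{m=0}^{2M-1}f^a(t_m)A^a(t_m)\tau\Big\|\le\tau\!\!\sum_{|j|\ge M}\!\!|f^a(j\tau)|\le2\int_{(M-1)\tau}^{\infty}\frac{\xi_q^2S}{\pi}\exp\Big(\tfrac{es}{2}-\tfrac{s}{e}\big(\tfrac{t}{\beta\xi_u+\xi_w/S}\big)^{1/s}\Big)\,\mathrm{d}t,
\end{equation*}
and evaluate the integral exactly as in the derivation of \eqref{auxeq:intf}: the substitution $u=t^{1/s}$ and the bound $u^{s-1}\le\exp\big(\tfrac{s}{2e}(\beta\xi_u+\xi_w/S)^{-1/s}u\big)$, valid once $(M-1)\tau\ge T_f$ with $T_f$ from \eqref{eq:tauaw} (this is exactly the stated condition $(M-1)\tau=\Omega((\beta\xi_u+\xi_w/S)\log(\beta\xi_u+\xi_w/S))$ by \eqref{auxeq:tfest}), halves the rate and produces the prefactor $\mathcal{O}(\xi_q^2(\beta\xi_u+\xi_w/S)^{1/s})$ multiplying $S\exp\big(-\tfrac{s((M-1)\tau)^{1/s}}{2(\beta\xi_u+\xi_w/S)^{1/s}e}\big)$, giving the claimed bound on $L_a$ and $C_f$.

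For $G$ I would proceed identically with \eqref{eq:decayg} in place of \eqref{eq:decayf}, now estimating $\|H_L(t)\|\le\sum_{a\in\mc{A}}\|L_a\|^2\le|\mc{A}|\max_a\|L_a\|^2$ and $\|L_a\|\le\|f^a\|_{L^1(\R)}=\mathcal{O}\big((\xi_qC_{1,u}+\xi_q^2\xi_w)\log((\beta\xi_u+\xi_w/S)\max\{S,1\})\big)$ from \eqref{eq:inttf}. The squared $L^1$-norm supplies the $\log^2$ factor and the factor $|\mc{A}|$, while the tail integration of $g$ (with truncation time $T_g$ from \eqref{eq:tautgg}) supplies the half-rate exponent and the prefactor $(\beta+\xi_w/4S)^{1/s}$, yielding precisely the stated $C_g$. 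I expect the only genuine obstacle to be Step 1 — fixing the Fourier-transform convention so that Poisson summation lines up, and verifying that the Nyquist condition together with the Bohr-frequency bound $|\nu|\le2\|H\|$ and the endpoint-vanishing of the bump function push every nonzero replica out of $\mathrm{supp}(\widehat{f^a})$ and $\mathrm{supp}(\widehat{g})$; once exactness of the infinite sum is established, the rest is a routine tail integration reusing the estimates of \cref{lem:property_f_g}.
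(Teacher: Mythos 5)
Your proposal is correct and follows essentially the same route as the paper's proof: the Poisson summation formula kills the aliasing error exactly because $\mathrm{supp}(\widehat{f^a})\subset[-S,S]$, $\mathrm{supp}(\widehat{g})\subset[-4S,4S]$, the Bohr frequencies satisfy $|\nu|\le 2\|H\|$, and the Nyquist condition on $\tau$ pushes every nonzero replica outside these supports; the remaining truncation tail is then integrated using the Gevrey decay bounds of \cref{lem:property_f_g}, with the $(M-1)\tau$ lower bound absorbing the polynomial factor $u^{s-1}$ into half of the exponential rate, and $\|H_L(t)\|\le\sum_a\|L_a\|^2$ with $\|L_a\|\le\|f^a\|_{L^1}$ supplying the $|\mc{A}|\log^2$ factor in $C_g$. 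Your explicit check that $\widehat{g}(\pm 4S)=0$ via $w(\pm1)=0$ even handles the boundary case of the stated non-strict inequality $\tau\le\pi/(\|H\|+2S)$, which the paper's proof sidesteps by using a strict inequality.
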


We shall employ the Poisson summation formula in the following lemma~\cite[Theorem 4.4.2]{pinsky2008introduction}. 

\begin{lem}[Poisson summation formula]\label{lem:poisson}
Given any $h\in L^1(\mathbb{R}^d)$ with inverse Fourier transform: 
\begin{equation*}
    \widehat{h}(y) = \int_{\mathbb{R}^d}h(x)e^{i x\cdot y}\, \mathrm{d}x\,,
\end{equation*}
for any $y\in\mathbb{R}^d$ and $\tau>0$, there holds
\begin{align*}
    \sum_{\textbf{n}\in\mathbb{Z}^d}\widehat{h}\left(y+\frac{2\pi\textbf{n}}{\tau}\right)=\sum_{\textbf{n}\in\mathbb{Z}^d}\tau^d h\left(\textbf{n}\tau\right)e^{iy\cdot \textbf{n}\tau}\,.
\end{align*}
\end{lem}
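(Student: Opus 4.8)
The plan is to use the classical periodization argument: I would form the $\tfrac{2\pi}{\tau}$-periodic function obtained by summing the translates of $\widehat{h}$ on the left-hand side, and then identify its Fourier coefficients with the sampled values $\tau^d h(\mathbf{n}\tau)$ appearing on the right-hand side.

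First, set $\Phi(y):=\sum_{\mathbf{n}\in\mathbb{Z}^d}\widehat{h}(y+\tfrac{2\pi\mathbf{n}}{\tau})$. Shifting $y$ by $\tfrac{2\pi}{\tau}$ in any coordinate merely reindexes the lattice sum, so $\Phi$ is periodic with period $\tfrac{2\pi}{\tau}$ in each variable; equivalently, it descends to the torus $(\mathbb{R}/\tfrac{2\pi}{\tau}\mathbb{Z})^d$. The frequencies compatible with this period are exactly $\tau\mathbf{m}$ with $\mathbf{m}\in\mathbb{Z}^d$, so I would expand $\Phi$ into its Fourier series:
\[
\Phi(y)=\sum_{\mathbf{m}\in\mathbb{Z}^d}c_{\mathbf{m}}\,e^{i\tau\mathbf{m}\cdot y}\,,\qquad
c_{\mathbf{m}}=\Big(\tfrac{\tau}{2\pi}\Big)^d\int_{[0,2\pi/\tau]^d}\Phi(y)\,e^{-i\tau\mathbf{m}\cdot y}\,\mathrm{d}y\,.
\]

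Next I would evaluate $c_{\mathbf{m}}$ by unfolding the periodization. Inserting the definition of $\Phi$ and using $e^{-i\tau\mathbf{m}\cdot(y+2\pi\mathbf{n}/\tau)}=e^{-i\tau\mathbf{m}\cdot y}$ (because $\mathbf{m}\cdot\mathbf{n}\in\mathbb{Z}$), the sum over $\mathbf{n}$ and the integral over one fundamental cell combine into a single integral over all of $\mathbb{R}^d$, giving $c_{\mathbf{m}}=(\tfrac{\tau}{2\pi})^d\int_{\mathbb{R}^d}\widehat{h}(y)e^{-i\tau\mathbf{m}\cdot y}\,\mathrm{d}y$. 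By the Fourier inversion formula in the convention fixed in the statement, namely $h(x)=\tfrac{1}{(2\pi)^d}\int_{\mathbb{R}^d}\widehat{h}(y)e^{-ix\cdot y}\,\mathrm{d}y$, this integral equals $(2\pi)^d h(\tau\mathbf{m})$, whence $c_{\mathbf{m}}=\tau^d h(\mathbf{m}\tau)$. Substituting back into the Fourier series of $\Phi$ reproduces precisely the right-hand side of the claimed identity, completing the argument.

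The main obstacle is analytic rather than algebraic: justifying the two interchanges — folding the lattice sum into the integral when computing $c_{\mathbf{m}}$, and the pointwise application of Fourier inversion — which fail for a generic $h\in L^1(\mathbb{R}^d)$, since then $\widehat{h}$ is only bounded and continuous and neither series need converge absolutely. I would therefore carry out the argument under the classical sufficient hypotheses of \cite[Theorem 4.4.2]{pinsky2008introduction} (mild decay of $h$ together with absolute summability of the translates of $\widehat{h}$), which make $\Phi$ a well-defined integrable function on the torus and legitimize Fubini's theorem and Fourier inversion. This is harmless for our use of the lemma: in every application the relevant functions are compactly supported Gevrey functions whose Fourier transforms decay sub-exponentially by \cref{lem:hat_Gev_decay}, so both sides converge absolutely and uniformly and all interchanges are trivially valid.
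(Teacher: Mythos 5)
The paper itself gives no proof of \cref{lem:poisson}: the lemma is quoted verbatim from \cite[Theorem 4.4.2]{pinsky2008introduction}. Your periodization argument is precisely the standard proof of that textbook result — periodize $\widehat{h}$, identify the Fourier coefficients of the periodization by unfolding the lattice sum into an integral over $\mathbb{R}^d$, and apply Fourier inversion (with the sign conventions correctly matched to the paper's, so that $c_{\mathbf{m}}=\tau^d h(\mathbf{m}\tau)$) — so there is nothing to fault in the algebra. Your analytic caveat is also correct and is in fact a point the paper glosses over: as literally stated, with only $h\in L^1(\mathbb{R}^d)$ and pointwise equality claimed for every $y$, the lemma is false (for generic $L^1$ functions neither side need converge, and Fourier inversion need not hold pointwise); the identity requires supplementary hypotheses such as decay of $h$ and absolute, locally uniform summability of the translates of $\widehat{h}$, which is how the cited theorem is actually formulated. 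As you note, this is harmless where the lemma is used — in the proof of \cref{lem:discretization_L_G} the function playing the role of $\widehat{h}$ is a compactly supported Gevrey function (so the lattice sum on the left has finitely many nonzero terms locally) and $h$ decays sub-exponentially by \cref{lem:hat_Gev_decay} and \cref{lem:property_f_g}, so both sides converge absolutely and all interchanges are justified. Your proposal is therefore correct, and arguably more careful than the paper's bare citation.
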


We are now ready to show \cref{lem:discretization_L_Gre}.

\begin{proof}[Proof of \cref{lem:discretization_L_Gre}] 
In the proof, we shall consider an infinite time grid defined by $\{m\tau\}_{m \in \mathbb{Z}}$. The grid $\{t_m\}_{m = 0}^{2M-1}$ is given by the subset $\{m\tau\}_{m = - M}^{M-1}$ as in the statement of \cref{lem:discretization_L_Gre}.  
We omit the upper index $a$ in $f^a$. 

\medskip
\noindent {\bf Estimate for the jump $L_a$.}
Thanks to \cref{lem:property_f_g}, there holds 
\begin{equation} \label{auxeq:La1}
\begin{aligned}
     &\left\|\sum^\infty_{m=-\infty}f(m\tau) A^a(m \tau) \tau - \sum^{2M-1}_{m=0}f(t_m) A^a(t_m) \tau \right\| \le \|A^a\|\sum_{|m|\ge M}|f(m\tau)|\tau \\
   \le & \re{\frac{\xi^2_qS}{\pi}\exp\left(\frac{es}{2}-\frac{s}{e}\left(\frac{|m \tau|}{\beta \xi_u+\xi_w/S}\right)^{1/s}\right) \tau}\,.
\end{aligned}
\end{equation}
By the monotonicity of the exponential function, we have 
\begin{equation} \label{auxeq:La2}
    \begin{aligned}
 & \sum_{|m|\ge M} \exp\left(-\frac{s}{(\beta \xi_u+\xi_w/S)^{1/s}e}\left| m\tau \right|^{1/s}\right)\tau \\   
\le & 2 \int^\infty_{(M-1)\tau}\exp\left(-\frac{s}{(\beta \xi_u+\xi_w/S)^{1/s}e}\left|t\right|^{1/s}\right)\mathrm{d}t \\
\le & 2 s \int^\infty_{((M-1)\tau)^{1/s}} u^{s-1}\exp\left(-\frac{s}{(\beta \xi_u + \xi_w/S)^{1/s}e} u \right)\mathrm{d}u\,.
    \end{aligned}
\end{equation}
Let $T_f$ be defined as in \cref{eq:tauaw}. 
Then, when $(M-1)\tau \ge T_f$, 
we can compute, by \eqref{auxeq:La2}, 
\begin{align*}
    \sum_{|m|\ge M} \exp\left(-\frac{s}{(\beta \xi_u+\xi_w/S)^{1/s}e}\left|m\tau\right|^{1/s}\right)\tau &\le 2 s  \int^\infty_{((M-1)\tau)^{1/s}} \exp\left(-\frac{s}{2(\beta \xi_u + \xi_w/S)^{1/s}e} u \right)\mathrm{d}u \\
    & \le 4 s \frac{(\beta \xi_u + \xi_w/S)^{1/s}e}{s} \exp\left(-\frac{s ((M-1)\tau)^{1/s}}{2(\beta \xi_u + \xi_w/S)^{1/s}e} \right)\,.
\end{align*}
Combining this with \cref{auxeq:La1}, we find 
\begin{equation} \label{auxeq:La}
\begin{aligned}
     &\left\|\sum^\infty_{m=-\infty}f(m\tau) A^a(m \tau) \tau - \sum^{2M-1}_{m=0}f(t_m) A^a(t_m) \tau \right\|  \\
   \le & \sum_{|m|\ge M}\frac{\xi^2_qS}{\pi}\exp\left(\frac{es}{2}-\frac{s}{(\beta \xi_u+\xi_w/S)^{1/s}e}\left|m\tau\right|^{1/s}\right)\tau \\
   \le & C_s \xi_q^2 S (\beta \xi_u + \xi_w/S)^{1/s} \exp\left(-\frac{s ((M-1)\tau)^{1/s}}{2(\beta \xi_u + \xi_w/S)^{1/s}e} \right)\,.
\end{aligned}
\end{equation} 
Then, by the triangle inequality, we readily have 
\small
\begin{equation}\label{eqn:f_simulation_error}
\begin{aligned}
&\left\|L_a-\sum^{2M-1}_{m=0}f(t_m) A^a(t_m) \tau \right\|\\
\leq &\left\|\int_{-\infty}^\infty f(t) A^a(t) \ud t-\sum^\infty_{m=-\infty}f(m\tau) A^a(m \tau) \tau \right\| + \left\|\sum^\infty_{m=-\infty}f(m\tau) A^a(m \tau) \tau  -\sum^{2M-1}_{m=0}f(t_m) A^a(t_m) \tau \right\|\\
\leq&\left\|\int_{-\infty}^\infty f(t) A^a(t) \ud t-\sum^\infty_{m=-\infty}f(m\tau) A^a(m \tau) \tau \right\| 
+ \re{C_f S\exp\left(-\frac{s ((M-1)\tau)^{1/s}}{2(\beta \xi_u + \xi_w/S)^{1/s}e} \right)}\,,
\end{aligned}
\end{equation}
\normalsize
with constant 
\begin{align*}
  \re{C_f = \mathcal{O}\left(\xi_q^2(\beta \xi_u + \xi_w/S)^{1/s}\right)\,.}
\end{align*}
Note from \eqref{eqn:L_a_formula} that $\int^\infty_{-\infty}f(t) A^a(t) \ud t=\sum_{i,j}\widehat{f}(\lambda_i-\lambda_j) P_i A^a P_j$, and from \cref{assumption:q} that 
\begin{equation*}
    \text{$|\lambda_i-\lambda_j|\leq 2\|H\|$ for any $\lad_i, \lad_j \in {\rm Spec}(H)$} \q \text{and} \q \text{ $\mathrm{supp}(\widehat{f})\subset [-S,S]$}\,.
\end{equation*}
For $\tau<\frac{2\pi}{2\norm{H} + S}$, by Poisson summation formula in \cref{lem:poisson}, we have 
\[
\widehat{f}(\lambda_i-\lambda_j)=\sum^\infty_{m=-\infty}\widehat{f}\left(\lambda_i-\lambda_j+\frac{2\pi m}{\tau}\right)=\sum^\infty_{m=-\infty}f(m\tau)e^{i(\lambda_i-\lambda_j)m\tau}\tau\,.
\]
Plugging the above formula into the first term of \eqref{eqn:f_simulation_error}, there holds 
\[
\begin{aligned}
&\int^\infty_{-\infty}f(t) A^a(t) \ud t-\sum^\infty_{m=-\infty}f(m\tau)A^a(m\tau) \tau\\
=&\sum_{i,j}\left(\widehat{f}(\lambda_i-\lambda_j)-\sum^\infty_{m=-\infty}f(m\tau)e^{i(\lambda_i-\lambda_j)m\tau}\tau\right)P_i A^a P_j = 0\,,
\end{aligned}
\]
which then implies 
\[
\begin{aligned}
&\left\|L_a-\sum^{2 M -1}_{m = 0}f(t_m) A^a(t_m) \tau\right\|
\le \re{C_fS\exp\left(-\frac{s ((M-1)\tau)^{1/s}}{2(\beta \xi_u + \xi_w/S)^{1/s}e} \right)}\,.
\end{aligned} 
\]
and concludes the proof of \eqref{eqn:approx_L}.

\medskip
\noindent {\bf Estimate for the coherent term $G$.} 
\re{By the inequality \eqref{eq:inttf} in \cref{lem:property_f_g}, we obtain
\begin{equation*}
\|L_a\|\leq \int_{\mathbb{R}}|f(t)|\|A^a\|\ud t = \mathcal{O}\left( (\xi_q C_{1,u} + \xi^2_q \xi_w) \log ((\beta \xi_u  + \xi_w/S) \max\{S,1\})\right)\,,    
\end{equation*}
which implies, recalling \cref{def:hll},
\begin{align*}
    \sup_{t\in\mathbb{R}}\|H_L(t)\| \le \sum_{a\in\mathcal{A}}\|L_a\|^2 = \mathcal{O}\left(|\mc{A}| (\xi_q C_{1,u} + \xi^2_q \xi_w)^2 \log^2((\beta \xi_u  + \xi_w/S) \max\{S,1\})\right)\,.
\end{align*}
Again by \cref{lem:property_f_g} with similar estimates as in \eqref{auxeq:La} for $L_a$, when $(M-1)\tau \ge T_g$ with $T_g$ defined in \eqref{eq:tautgg}, there holds 
\begin{equation*}
    \begin{aligned}
&\left\|\sum^\infty_{m=-\infty}g(m\tau) H_L(m \tau) \tau - \sum^{2M-1}_{m=0}g(t_m) H_L(t_m) \tau \right\|\\
\leq & \sup_{t\in\mathbb{R}}\|H_L(t)\|\sum_{
|m|\ge M}\left|g(m\tau)\right|\tau 
\le C_g S |\mathcal{A}|\exp\left(-\frac{s((M-1)\tau)^{1/s}}{2e(\beta+\xi_w/4S)^{1/s}}\right)\,,
\end{aligned} 
\end{equation*}
where 
\begin{equation*}
    C_g = {\mathcal{O}}\left(\xi_q(\xi_q C_{1,u} + \xi^2_q \xi_w)^2\left(\beta+\xi_w/4S\right)^{1/s} \log^2((\beta \xi_u  + \xi_w/S) \max\{S,1\})\right)\,.
\end{equation*}
It follows that}
\small
\begin{equation}\label{eqn:g_simulation_error}
\begin{aligned}
&\left\|G-\sum^{2M-1}_{m=0}g(t_m) H_L(t_m) \tau \right\|\\
\leq &\left\|\int_{-\infty}^\infty g(t) H_L(t) \ud t-\sum^\infty_{m=-\infty}g(t_m) H_L(t_m) \tau  \right\| + \left\|\sum^\infty_{m=-\infty}g(t_m) H_L(t_m) \tau   -\sum^{2M-1}_{m=0}g(t_m) H_L(t_m) \tau  \right\|\\
\leq&\left\|\int_{-\infty}^\infty g(t) H_L(t) \ud t-\sum^\infty_{m=-\infty}g(t_m) H_L(t_m) \tau  \right\| 
+ \re{C_g S |\mathcal{A}|\exp\left(-\frac{s((M-1)\tau)^{1/s}}{2e(\beta+\xi_w/4S)^{1/s}}\right)}\,.
\end{aligned}
\end{equation}
\normalsize
For  $\tau<\frac{2\pi}{2\norm{H} + 4S}$, again by Poisson summation formula in \cref{lem:poisson}, the first term of \eqref{eqn:g_simulation_error} is zero. By definition \eqref{eqn:G_KMS} of $G$ and above estimates, it holds that 
\begin{align*}
\left\|G-\sum^{2M-1}_{m=0}g(t_m) H_L(t_m) \tau \right\|\leq \re{C_g S |\mathcal{A}|\exp\left(-\frac{s((M-1)\tau)^{1/s}}{2e(\beta+\xi_w/4S)^{1/s}}\right)}\,.
\end{align*}
The proof is complete.      
\end{proof}

Finally, the simulation of the algorithm requires the preparation of oracles~\eqref{eqn:prep_f}-\eqref{eqn:prep_g_bar}, where the normalization factors $Z_f$ and $Z_g$ affect the algorithm complexity. In the following theorem, we demonstrate that the discretization normalization constant can be bounded by the $L^1$ norm of $f$ and $g$ when the discretization step $\tau$ is sufficiently small.
\begin{lem} \label{lem:zfzg}
Under \cref{assumption:q}, for any given $T > 0$, there exists small \re{$\tau = \Theta\left(1/\xi_q^2 S^2 T\right)$}  such that for any integer $M$ with $M \tau \le T$, 
\begin{equation} \label{eq:estzf}
    \sum^M_{m=-M}|f(m\tau)| \, \tau\leq \|f\|_{L^1(\R)}+1\,,
\end{equation}
and
\begin{equation} \label{eq:estzg}
\sum^M_{m=-M}|g(m\tau)|\,\tau\leq \|g\|_{L^1(\R^2)}+1\,.   
\end{equation}
\end{lem}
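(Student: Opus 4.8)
The plan is to view both \eqref{eq:estzf} and \eqref{eq:estzg} as instances of a single quadrature-vs-integral comparison, namely that the uniform left-endpoint Riemann sum of $|f|$ (resp.\ $|g|$) differs from $\int|f|$ (resp.\ $\int|g|$) by an amount controlled by the grid spacing times a Lipschitz constant. The only quantitative inputs needed are uniform Lipschitz bounds on $f$ and $g$, which I read off directly from their Fourier representations \eqref{eqn:f_a} and \eqref{eqn:g_hat} together with the compact support of $\widehat{f}$ and $\widehat{g}$ from \cref{assumption:q} and \cref{assp:kappa}.

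First I would record the elementary per-cell estimate for a $C^1$ function $\phi$. For $t\in[m\tau,(m+1)\tau]$ the reverse triangle inequality gives $\bigl||\phi(m\tau)|-|\phi(t)|\bigr|\le|\phi(m\tau)-\phi(t)|\le\tau\,\|\phi'\|_{L^\infty(\R)}$, whence
\[
\Bigl|\,|\phi(m\tau)|\,\tau-\int_{m\tau}^{(m+1)\tau}|\phi(t)|\ud t\,\Bigr|\le\tau^2\,\|\phi'\|_{L^\infty(\R)}\,.
\]
The key point, and the only place a careful reader might worry, is that $|\phi|$ is not differentiable at the zeros of $\phi$; this is sidestepped because the absolute value appears only inside a Lipschitz estimate for the smooth function $\phi$ itself. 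Summing over the $2M+1$ cells $m=-M,\dots,M$, whose union tiles $[-M\tau,(M+1)\tau]$, and using $M\tau\le T$ (so that $\tau\le M\tau\le T$ when $M\ge1$, the case $M=0$ being trivial, and hence $(2M+1)\tau\le 3T$), I obtain the master bound
\[
\sum_{m=-M}^{M}|\phi(m\tau)|\,\tau\le\int_{-M\tau}^{(M+1)\tau}|\phi(t)|\ud t+3T\tau\,\|\phi'\|_{L^\infty(\R)}\le\|\phi\|_{L^1(\R)}+3T\tau\,\|\phi'\|_{L^\infty(\R)}\,.
\]

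Next I would bound the two Lipschitz constants. Differentiating under the integral sign in \eqref{eqn:f_a} gives $f'(t)=\tfrac{1}{2\pi}\int\widehat{f}(\nu)(-i\nu)e^{-it\nu}\ud\nu$, and similarly for $g'$ from \eqref{eqn:g_hat}. Since $\widehat{f}=u(\beta\nu)w(\nu/S)e^{-\beta\nu/4}$ is supported in $[-S,S]$ with $\|\widehat{f}\|_{L^\infty(\R)}\le\xi_q^2$ (both factors bounded by $\xi_q$, exactly as in the proof of \cref{lem:property_f_g}), integrating $|\nu|$ over the support yields $\|f'\|_{L^\infty(\R)}\le\tfrac{1}{2\pi}\xi_q^2 S^2$. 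Likewise $\widehat{g}$ is supported in $[-4S,4S]$ with $\|\widehat{g}\|_{L^\infty(\R)}\le\tfrac12\xi_q$ (using $|\tanh|\le1$ and $\|w\|_{L^\infty(\R)}\le\xi_q$), giving $\|g'\|_{L^\infty(\R)}\le\tfrac{4}{\pi}\xi_q S^2$.

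Finally I would fix the step size. Choosing the concrete value $\tau=\tfrac{\pi}{12\,\xi_q^2 S^2 T}=\Theta\!\bigl(1/(\xi_q^2 S^2 T)\bigr)$, the master bound yields an additive error $3T\tau\,\|f'\|_{L^\infty(\R)}\le\tfrac18\le1$ for $f$ and, using $\xi_q\ge1$, an additive error $3T\tau\,\|g'\|_{L^\infty(\R)}\le\tfrac{12}{\pi\xi_q}\cdot\tfrac{\pi}{12}\le1$ for $g$; this establishes \eqref{eq:estzf} and \eqref{eq:estzg} (with $\|g\|_{L^1(\R)}$ denoting the one-dimensional $L^1$ norm of $g$ from \eqref{eqn:g_hat}). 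I do not anticipate any genuine obstacle: beyond the absolute-value non-smoothness handled above, the argument is pure bookkeeping, and the slightly more restrictive order coming from the extra factor $\xi_q$ in $\|f'\|_{L^\infty(\R)}$ versus $\|g'\|_{L^\infty(\R)}$ is precisely what forces the stated scaling $\tau=\Theta(1/(\xi_q^2 S^2 T))$.
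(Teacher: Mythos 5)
Your proof is correct and follows essentially the same route as the paper: bound $\|f'\|_{L^\infty}$ and $\|g'\|_{L^\infty}$ from the compactly supported Fourier representations ($\mathcal{O}(\xi_q^2 S^2)$ and $\mathcal{O}(\xi_q S^2)$ respectively), then compare the Riemann sum to the integral via a mean-value/Lipschitz estimate to get an error of order $\xi_q^2 S^2 M\tau^2 \le \xi_q^2 S^2 T\tau$, which the choice $\tau=\Theta(1/(\xi_q^2 S^2 T))$ makes at most $1$. Your explicit handling of the non-differentiability of $|\phi|$ and your observation that the $L^1(\R^2)$ in \eqref{eq:estzg} should read $L^1(\R)$ are both correct refinements of the paper's terser argument.
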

\begin{proof}
We note
\begin{equation*}
    \left|f'(t)\right|=\left|\frac{1}{2\pi} \int_\R  u(\beta \nu)w(\nu/S)e^{-\beta \nu/4}\nu e^{-it\nu}\, \mathrm{d} \nu\right|\leq \frac{\xi_q}{2\pi}\int_{\mathbb{R}}|\nu w(\nu/S)| \ud \nu =\mathcal{O}(\xi_q^2 S^2)\,.
\end{equation*}
Similarly, one can obtain 
\begin{align*}
    \re{\left|g'(t)\right|=\mathcal{O}( \xi_q S^2)}\,.
\end{align*}
Then, it follows from the mean-value theorem that 
\begin{equation*}
    \left|\sum^M_{m=-M}|f(m\tau)|\,\tau-\int^{(M+1)\tau}_{-M\tau}\left|f(t)\right|\,\mathrm{d}t\right|=\mathcal{O}\left(  \xi_q^2 S^2 M \tau^2  \right)\,,
\end{equation*}
which implies
\begin{align*}
    \sum^M_{m=-M}|f(m\tau)|\, \tau\leq \|f\|_{L^1(\R)} + \mathcal{O}\left(  \xi_q^2 S^2 M \tau^2  \right)\,.
\end{align*}
Thus \cref{eq:estzf} follows. For the estimate of $g$, similarly, by the mean-value theorem, we find
\begin{equation*}
    \left|\sum^M_{m=-M}|g(m\tau)|\,\tau-\int^{(M+1)\tau}_{-M\tau}\left|g(t)\right|\,\mathrm{d}t\right| = \re{\mathcal{O}\left(\xi_q S^2 M\tau^2 \right)}\,,
\end{equation*}
which means 
\begin{equation*}
    \re{\sum^M_{m=-M}|g(m\tau)|\,\tau \leq \|g\|_{L^1(\R^2)}+ \mathcal{O}\left(\xi_q S^2 M\tau^2\right)\,.}
\end{equation*}
This concludes the proof of \cref{eq:estzg}. 
\end{proof}

\rre{We finally prove our main result \cref{thm:simulation} restated below with explicit parameter dependence.} 

\begin{thm}[restatement of \cref{thm:simulation}] Assume access to weighting functions $\{q^a\}$ satisfying \cref{assumption:q} with any $s>1$, block encodings $U_{\mathcal{A}}$ in \cref{eqn:A_block_encoding}, controlled Hamiltonian simulation $U_H$ in \cref{eqn:U_H}, and prepare oracles for filtering functions $\{f^a\}$ and $g$ in \eqref{eqn:prep_f}--\eqref{eqn:prep_g_bar}. The Lindbladian evolution \eqref{eqn:Lindblad_master_equation} \rre{with $\beta > 0$} can be simulated up to time $t_{\rm mix}$ with an $\eps$-diamond distance, and the total Hamiltonian simulation time is
\begin{equation*}
  \widetilde{\mathcal{O}}\left(C_qt_{\rm mix} \rre{(\beta + 1)}|\mc{A}|^2  \log^{1+s}\left(1/\epsilon\right)\right)\,,
\end{equation*}
where the constant $C_q$ is defined as follows:
\begin{align*}
\re{C_q:= {\rm C}_{u,w} (\xi_u+ \xi_w/S) (\log^{2 + s}(\max\{S,1\}) + 1) \q \text{with}\q {\rm C}_{u,w} =(\xi_qC_{1,u}+\xi^2_q\xi_w)^2\,.}
\end{align*}
In addition, the algorithm requires 
\begin{equation*}
\widetilde{\mathcal{O}}\left(\log(\xi_q \max\{S,1\} (S + \norm{H})) + \log^2(t_{\rm mix} |\mc{A}|/\epsilon) + \log^2 {\rm C}_{u,w}   + \log(\beta \xi_u + \xi_w/S)\right)\,.
\end{equation*}
number of additional ancilla qubits for the prepare oracles and simulation. The $\widetilde{\mathcal{O}}$ absorbs a constant only depending on $s$ and subdominant polylogarithmic dependencies on parameters $t_{\rm mix}$, $\|H\|$, $|\mathcal{A}|$, $S$, $\xi_q$, $\xi_u$, $\xi_w$, and $\beta$. 
\end{thm}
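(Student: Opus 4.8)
The plan is to feed the block-encoding error bounds of \cref{lem:discretization_L_Gre} into the Lindblad simulation result \cref{thm:simulation_LW}, choosing the truncation parameters $M$ and $\tau$ just large enough that the block-encoding errors of $\{L_a\}$ and $G$ sit below the tolerance demanded by the target diamond distance $\epsilon$. First I would invoke \cref{thm:simulation_LW} with $t=t_{\rm mix}$: simulating \eqref{eqn:Lindblad_master_equation} within $\epsilon$ requires $(\cdot,\cdot,\delta)$-block encodings $U_L$ and $U_G$ of the exact $L_a$ and $G$ with $\delta\le \epsilon/(t_{\rm mix}\norm{\mathcal{L}}_{\rm be})$, and then consumes $\mathcal{O}(t_{\rm mix}\norm{\mathcal{L}}_{\rm be}\log(t_{\rm mix}\norm{\mathcal{L}}_{\rm be}/\epsilon))$ queries to these two oracles, where $\norm{\mathcal{L}}_{\rm be}=A_g+\tfrac12 A_f^2|\mathcal{A}|$.

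Next I would pin down $\norm{\mathcal{L}}_{\rm be}$. Reading off the subnormalizations from the constructions \eqref{eqn:U_L} and \eqref{eqn:U_G} gives $A_f=Z_fZ_{\mathcal{A}}$ and $A_g=Z_gZ_{\mathcal{A}}^2|\mathcal{A}|$ with $Z_{\mathcal{A}}\le 1$, so it remains to control the prepare-oracle constants $Z_f,Z_g$. These are bounded by \cref{lem:zfzg} (valid once $\tau$ is small) together with the $L^1$ estimates \eqref{eq:inttf}--\eqref{eq:inttg} of \cref{lem:property_f_g}, yielding the polylogarithmic bounds \eqref{eq:zf}--\eqref{eq:zg}. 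Consequently $\norm{\mathcal{L}}_{\rm be}=\widetilde{\mathcal{O}}(|\mathcal{A}|\,{\rm C}_{u,w})$ with ${\rm C}_{u,w}=(\xi_qC_{1,u}+\xi_q^2\xi_w)^2$, and in particular $\log(1/\delta)=\log(t_{\rm mix}\norm{\mathcal{L}}_{\rm be}/\epsilon)=\widetilde{\mathcal{O}}(\log(t_{\rm mix}/\epsilon))$. A key point is that $Z_f,Z_g$ (hence $\norm{\mathcal{L}}_{\rm be}$) are independent of $M,\tau$ once $\tau$ is small, so the apparent circularity---$\delta$ depends on $\norm{\mathcal{L}}_{\rm be}$, which is itself built from the same discretization---is benign.

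Then I would realize the error $\delta$ via \cref{lem:discretization_L_Gre}. Fixing $\tau=\Theta(1/(\norm{H}+2S))$ to meet both the Poisson-summation hypothesis $\tau\le \pi/(\norm{H}+2S)$ and \cref{lem:zfzg}, the quadrature error decays like $\exp(-c((M-1)\tau)^{1/s})$; setting this $\le\delta$ and inverting gives $(M-1)\tau=\widetilde{\mathcal{O}}((\beta\xi_u+\xi_w/S)\log^s(1/\delta))$, whence $M\tau=\widetilde{\mathcal{O}}((\beta+1)(\xi_u+\xi_w/S)\log^s(1/\epsilon))$ after using $\beta\xi_u+\xi_w/S\le(\beta+1)(\xi_u+\xi_w/S)$ and absorbing the $S$- and $t_{\rm mix}$-logarithms. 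Since each query to $U_L$ or $U_G$ costs $\mathcal{O}(M\tau)$ Hamiltonian simulation time, multiplying the query count by $M\tau$ and collecting the factors of $|\mathcal{A}|$ (one from the number of jumps entering $\norm{\mathcal{L}}_{\rm be}$ and one from the subnormalization of $A_g$), the ${\rm C}_{u,w}$ from $\norm{\mathcal{L}}_{\rm be}$, the $(\xi_u+\xi_w/S)$ and the $\log^s$ from $M\tau$, and the remaining $\log$ from the query-count logarithm, produces the claimed $\widetilde{\mathcal{O}}(C_qt_{\rm mix}(\beta+1)|\mathcal{A}|^2\log^{1+s}(1/\epsilon))$. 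The ancilla count then follows by adding the $\mathfrak{m}+\mathfrak{a}+\mathfrak{b}$ ancillas of $U_L$ and the $3\mathfrak{m}+\mathfrak{a}+2\mathfrak{b}$ of $U_G$ (with $\mathfrak{m}=\log_2 M$, $\mathfrak{a}=\log_2|\mathcal{A}|$, and $\mathfrak{b}$ the block-encoding width of $A^a$) to the logarithmic simulation overhead of \cref{thm:simulation_LW}.

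The main obstacle I anticipate is the self-consistent error budget in the first three steps: one must verify that the tolerance $\delta$, which is itself expressed through $\norm{\mathcal{L}}_{\rm be}$ (and hence through the same $Z_f,Z_g$ produced by discretization), can be achieved with only a polylogarithmic blow-up of $M\tau$. This is precisely where the \emph{exactness} of the KMS detailed balance matters: because $\sigma_\beta$ is an exact fixed point of the discretized-jump Lindbladian, there is no accumulated fixed-point deviation to track, so it suffices to drive the operator-norm block-encoding error below $\epsilon/(t_{\rm mix}\norm{\mathcal{L}}_{\rm be})$, and the sub-exponential (Gevrey) decay of $f$ and $g$ turns this $\log(1/\delta)$ requirement into the stated $\log^{1+s}(1/\epsilon)$ scaling rather than a polynomial one.
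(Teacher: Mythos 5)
Your proposal follows the paper's proof essentially step for step: invoke \cref{thm:simulation_LW}, bound $\|\mathcal{L}\|_{\rm be}$ through the prepare-oracle constants $Z_f,Z_g$ via \cref{lem:zfzg} and the $L^1$ estimates, choose the truncation window $M\tau$ so that the quadrature error of \cref{lem:discretization_L_Gre} falls below $\epsilon/(t_{\rm mix}\|\mathcal{L}\|_{\rm be})$, multiply the query count by the per-query Hamiltonian simulation cost $\mathcal{O}(M\tau)$, and count ancillas. Your observation that the apparent circularity in the error budget is benign because $Z_f,Z_g$ are bounded independently of $M,\tau$ is correct and matches what the paper does implicitly when it defines $T$ in terms of $\|\mathcal{L}\|_{\rm be}$.

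Two bookkeeping points need repair. First, your accounting of the $|\mathcal{A}|^2$ factor does not close: you attribute one factor to ``the number of jumps entering $\|\mathcal{L}\|_{\rm be}$'' and one to ``the subnormalization of $A_g$,'' but both of these sit inside $\|\mathcal{L}\|_{\rm be}=A_g+\tfrac12A_f^2|\mathcal{A}|$, which is a \emph{sum} and therefore contributes only a single factor of $|\mathcal{A}|$. The second factor actually comes from converting the query count of \cref{thm:simulation_LW} --- stated in terms of the individual oracles $\{U_a\}_{a\in\mathcal{A}}$ --- into queries to the single combined block encoding $U_L$ of $\sum_a\ket{a}\bra{a}\otimes L_a$: each round queries all $|\mathcal{A}|$ jumps and each such query costs one use of $U_L$, so the $U_L$ query count is $|\mathcal{A}|$ times the count in \eqref{eq:querynumber}. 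As written, your reasoning would only yield $|\mathcal{A}|$ up to polylogarithms. Second, your step size $\tau=\Theta(1/(\|H\|+2S))$ satisfies the Poisson-summation hypothesis but not, in general, the hypothesis of \cref{lem:zfzg}, which requires $\tau=\mathcal{O}(1/(\xi_q^2S^2T))$ so that the Riemann sums defining $Z_f,Z_g$ stay within $\mathcal{O}(1)$ of the corresponding $L^1$ norms; the paper takes $\tau=\Theta(\min\{1/(\xi_q^2S^2T),\,1/(S+\|H\|)\})$. Since $M\tau=T$ is unchanged this does not alter the total simulation time, but it is needed for the stated bounds on $Z_f,Z_g$ (hence on $\|\mathcal{L}\|_{\rm be}$) to hold, and it changes $M$ and therefore the ancilla count $\mathfrak{m}=\log_2(2M)$.
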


\begin{proof} 
By \cref{thm:simulation_LW} with estimates in \cref{eq:zf,eq:zg}, we have 
\begin{equation} \label{eq:normbe}
\re{\begin{aligned}
   \|\mathcal{L}\|_{\rm be} &= \frac{1}{2}Z_g |\mathcal{A}| + \frac{1}{2}Z^2_f|\mathcal{A}| \\
   & =  \mathcal{O}\left({\rm C}_{u,w}\log^2((\beta \xi_u + \xi_w/S)\max\{S,1\})|\mathcal{A}|\right)\,.
\end{aligned}}
\end{equation} 
Recalling \cref{lem:discretization_L_G} and \cref{lem:zfzg}, we set a truncation time 
\begin{equation}  \label{eq:estmtau}
   \re{T = \Theta \left((\beta \xi_u + \xi_w/S)\log^s\left( (C_f + C_g) \max\{S,1\} |\mc{A}|  \|\mathcal{L}\|_{\rm be}t_{\rm mix} \eps^{-1} \right)\right)\,,}
\end{equation}
and the step size
\begin{equation} \label{eq:steptime}
   \re{\tau= \Theta \left(\min\left\{ \frac{1}{\xi_q^2S^2 T},\frac{1}{S+\|H\|}\right\}\right)\,,}
\end{equation}
and then choose $M = 2^{\mathfrak{m} - 1}$ such that $(M-1) \tau \le T \le M \tau$.
This allows us to control the block-encoding error as follows:
\begin{equation*}
\re{(C_f + C_g) S |\mc{A}| \exp\left(-\frac{s((M-1)\tau)^{1/s}}{2e(\beta \xi_u+\xi_w/S)^{1/s}}\right) \le \epsilon/({t_{\rm mix}}\|\mathcal{L}\|_{\rm be})\,.}
\end{equation*}
Note that one query to the block encoding of $L_a$ requires one query to $U_L$, by \cref{thm:simulation_LW} and \cref{eq:normbe}, the simulation \re{requires
\begin{align*}
   & \mathcal{O}\left(t_{\rm mix}\|\mathcal{L}\|_{\rm be}\log\left(t_{\rm mix}\|\mathcal{L}\|_{\rm be}/\epsilon\right) |\mc{A}|\right) = \wt{\Or}\left(t_{\rm mix}\|\mathcal{L}\|_{\rm be}\log\left(1/\epsilon\right)|\mc{A}|\right) \\ &= \widetilde{\mathcal{O}}\left(t_{\rm mix}{\rm C}_{u,w} \log^2((\beta \xi_u + \xi_w/S)\max\{S,1\}) |\mathcal{A}|^2 
   \log\left(1/\epsilon\right)\right)
\end{align*}
queries to $U_{L}$. Similarly, we need 
\begin{equation*}
  \widetilde{\mathcal{O}}\left(t_{\rm mix} 
    {\rm C}_{u,w} \log^2((\beta \xi_u + \xi_w/S)\max\{S,1\})
    |\mc{A}|  \log\left(1/\epsilon\right)\right)
\end{equation*}
queries to $U_G$.} Combining this with the cost $\mathcal{O}(M\tau) = \mathcal{O}(T)$ of one query to $U_G$ and $U_{L}$, we can estimate the  total Hamiltonian simulation time as follows: 
\begin{equation*}
\re{\begin{aligned}
     &\widetilde{\mathcal{O}}\left(t_{\rm mix} {\rm C}_{u,w} (\beta \xi_u+ \xi_w/S) (\log^{2 + s}(\max\{S,1\}) + 1)|\mc{A}|^2  \log^{1+s}\left(1/\epsilon\right)\right) \\ =\, & \widetilde{\mathcal{O}}\left(C_q t_{\rm mix}\rre{(\beta + 1)} |\mc{A}|^2  \log^{1+s}\left(1/\epsilon\right)\right)\,.    
    \end{aligned}}
\end{equation*}

Next, we consider the number of extra ancilla qubits, again by \cref{thm:simulation_LW}. We first note 
\begin{equation*}
 \log(t_{\rm mix}\norm{\mc{L}_{\rm be}}/\epsilon) =   \widetilde{\mathcal{O}}\left(\log(t_{\rm mix} |\mc{A}|/\epsilon) + \log {\rm C}_{u,w} + \log \log((\beta \xi_u + \xi_w/S)\max\{S,1\}) \right),
\end{equation*}
and hence there holds 
\begin{align*}
    & \Or\left(\log\left(t_{\rm mix}\|\mathcal{L}\|_{\rm be}/\epsilon\right)\left(\log |\mathcal{A}|+\log\left(t_{\rm mix}\|\mathcal{L}\|_{\rm be}/\epsilon\right)\right)\right) \\
    =\, & \widetilde{\mathcal{O}}\left(  \log^2(t_{\rm mix} |\mc{A}|/\epsilon) + (\log {\rm C}_{u,w} + \log \log((\beta \xi_u + \xi_w/S)\max\{S,1\}))^2 \right).
\end{align*}
For the preparation oracles, by \cref{eq:steptime,eq:estmtau}, we find 
\begin{align*}
    M & = \Theta(\max\{\xi_q^2 S^2 T^2, T(S + \norm{H})\} = \Or(\xi_q^2 \max\{S,1\} (S + \norm{H}) T^2) \\
    & = \Or(\xi_q^2 \max\{S,1\} (S + \norm{H}) (\beta \xi_u + \xi_w/S)^2\log^{2s}\left((C_f + C_g) \max\{S,1\} |\mc{A}|  \|\mathcal{L}\|_{\rm be}t_{\rm mix} \eps^{-1} \right))\,.
\end{align*}
This implies that we need 
\begin{multline*}
      \mathfrak{m} =  \Or \big(\log(\xi_q \max\{S,1\} (S + \norm{H})) + \log(\beta \xi_u + \xi_w/S)  \\ +  \log\log\left((C_f + C_g) \max\{S,1\} |\mc{A}|  \|\mathcal{L}\|_{\rm be}t_{\rm mix} \eps^{-1} \right)\big).
\end{multline*}
additional ancilla qubits for the preparation of $\textbf{Prep}_{f}$, $\textbf{Prep}_{\overline{f}}$, $\textbf{Prep}_{g}$, and $\textbf{Prep}_{\overline{g}}$. Finally, $\textbf{Prep}_{\mathcal{A}}$ requires $\log(|\mathcal{A}|)$ ancilla qubits. Adding these quantities together concludes the proof.
\end{proof}


\begin{thebibliography}{VAGGdW17}

\bibitem[AC21]{AmorimCarlen2021}
{\'E}rik Amorim and Eric~A. Carlen.
\newblock Complete positivity and self-adjointness.
\newblock {\em Linear Algebra Appl.}, 611:389--439, 2021.

\bibitem[ACL23]{an2023quantum}
Dong An, Andrew~M Childs, and Lin Lin.
\newblock Quantum algorithm for linear non-unitary dynamics with near-optimal dependence on all parameters.
\newblock {\em arXiv preprint arXiv:2312.03916}, 2023.

\bibitem[AGIK09]{AharonovGottesmanEtAl2009}
Dorit Aharonov, Daniel Gottesman, Sandy Irani, and Julia Kempe.
\newblock The power of quantum systems on a line.
\newblock {\em Comm. Math. Phys.}, 287(1):41--65, 2009.

\bibitem[AHR17]{Adwan_2017}
Ziad Adwan, Gustavo Hoepfner, and Andrew Raich.
\newblock Global {$L^q$}-gevrey functions and their applications.
\newblock {\em J. Geom. Anal.}, 27:1874--1913, 2017.

\bibitem[Ali76]{Alicki1976}
Robert Alicki.
\newblock On the detailed balance condition for non-hamiltonian systems.
\newblock {\em Rep. Math. Phys.}, 10(2):249--258, 1976.

\bibitem[Bar82]{Barahona1982}
Francisco Barahona.
\newblock On the computational complexity of ising spin glass models.
\newblock {\em Journal of Physics A: Mathematical and General}, 15(10):3241, 1982.

\bibitem[BCC{\etalchar{+}}14]{BerryChildsCleveEtAl2014}
Dominic~W Berry, Andrew~M Childs, Richard Cleve, Robin Kothari, and Rolando~D Somma.
\newblock Exponential improvement in precision for simulating sparse hamiltonians.
\newblock In {\em STOC 2014}, pages 283--292, 2014.

\bibitem[BCG{\etalchar{+}}23]{BardetCapelGaoEtAl2023}
Ivan Bardet, {\'A}ngela Capel, Li~Gao, Angelo Lucia, David P{\'e}rez-Garc{\'\i}a, and Cambyse Rouz{\'e}.
\newblock Rapid thermalization of spin chain commuting {H}amiltonians.
\newblock {\em Phys. Rev. Lett.}, 130(6):060401, 2023.

\bibitem[BCG{\etalchar{+}}24]{bravyi2024quantum}
Sergey Bravyi, Anirban Chowdhury, David Gosset, Vojt{\v{e}}ch Havl{\'\i}{\v{c}}ek, and Guanyu Zhu.
\newblock Quantum complexity of the {K}ronecker coefficients.
\newblock {\em PRX Quantum}, 5(1):010329, 2024.

\bibitem[BCGW21]{bravyi2021complexity}
Sergey Bravyi, Anirban Chowdhury, David Gosset, and Pawel Wocjan.
\newblock On the complexity of quantum partition functions.
\newblock {\em arXiv preprint arXiv:2110.15466}, 2021.

\bibitem[BCL24]{bergamaschi2024quantum}
Thiago Bergamaschi, Chi-Fang Chen, and Yunchao Liu.
\newblock Quantum computational advantage with constant-temperature gibbs sampling.
\newblock {\em arXiv preprint arXiv:2404.14639}, 2024.

\bibitem[BLMT24]{BakshiLiuMoitraEtAl2024}
Ainesh Bakshi, Allen Liu, Ankur Moitra, and Ewin Tang.
\newblock High-{T}emperature {G}ibbs {S}tates are {U}nentangled and {E}fficiently {P}reparable.
\newblock {\em arXiv preprint arXiv:2403.16850}, 2024.

\bibitem[Boy07]{tanh_2007}
Khristo~N. Boyadzhiev.
\newblock Derivative polynomials for tanh, tan, sech and sec in explicit form.
\newblock {\em Fibonacci Quart.}, page 291–303, 2007.

\bibitem[BP02]{breuer2002theory}
Heinz-Peter Breuer and Francesco Petruccione.
\newblock {\em The theory of open quantum systems}.
\newblock OUP Oxford, 2002.

\bibitem[CB76]{cantoni1976properties}
Antonio Cantoni and Paul Butler.
\newblock Properties of the eigenvectors of persymmetric matrices with applications to communication theory.
\newblock {\em IEEE T COMMUN}, 24(8):804--809, 1976.

\bibitem[CB21]{chen2021fast}
Chi-Fang Chen and Fernando~G.S.L. Brand{\~a}o.
\newblock Fast {T}hermalization from the {E}igenstate {T}hermalization {H}ypothesis.
\newblock {\em arXiv preprint arXiv:2112.07646}, 2021.

\bibitem[CKBG23]{ChenKastoryanoBrandaoEtAl2023}
Chi-Fang Chen, Michael~J. Kastoryano, Fernando~G.S.L. Brand{\~a}o, and Andr{\'a}s Gily{\'e}n.
\newblock Quantum thermal state preparation.
\newblock {\em arXiv preprint arXiv:2303.18224}, 2023.

\bibitem[CKG23]{ChenKastoryanoGilyen2023}
Chi-Fang Chen, Michael~J. Kastoryano, and Andr{\'a}s Gily{\'e}n.
\newblock An efficient and exact noncommutative quantum {G}ibbs sampler.
\newblock {\em arXiv preprint arXiv:2311.09207}, 2023.

\bibitem[CLVBY24]{CampsLinVanBeeumenEtAl2024}
Daan Camps, Lin Lin, Roel Van~Beeumen, and Chao Yang.
\newblock Explicit quantum circuits for block encodings of certain sparse matrices.
\newblock {\em SIAM J. Matrix Anal.Appl.}, 45(1):801--827, 2024.

\bibitem[CM17]{CarlenMaas2017}
Eric~A. Carlen and Jan Maas.
\newblock Gradient flow and entropy inequalities for quantum markov semigroups with detailed balance.
\newblock {\em J. Funct. Anal.}, 273(5):1810--1869, 2017.

\bibitem[CM20]{CarlenMaas2020}
Eric~A. Carlen and Jan Maas.
\newblock Non-commutative calculus, optimal transport and functional inequalities in dissipative quantum systems.
\newblock {\em J. Stat. Phys.}, 178(2):319--378, 2020.

\bibitem[Com74]{comtet1974advanced}
L.~Comtet.
\newblock {\em Advanced Combinatorics: The Art of Finite and Infinite Expansions}.
\newblock Springer Netherlands, 1974.

\bibitem[CS17]{ChowdhurySomma2017}
Anirban~Narayan Chowdhury and Rolando~D. Somma.
\newblock {Quantum algorithms for Gibbs sampling and hitting-time estimation}.
\newblock {\em Quantum Inf. Comput.}, 17(1-2):41--64, 2017.

\bibitem[Cub23]{cubitt2023dissipative}
Toby~S. Cubitt.
\newblock Dissipative ground state preparation and the dissipative quantum eigensolver.
\newblock {\em arXiv preprint arXiv:2303.11962}, 2023.

\bibitem[CW12]{ChildsWiebe2012}
Andrew~M. Childs and Nathan Wiebe.
\newblock Hamiltonian simulation using linear combinations of unitary operations.
\newblock {\em Quantum Inf. Comput.}, 12:901--924, 2012.

\bibitem[CW17]{CW17}
Richard Cleve and Chunhao Wang.
\newblock Efficient quantum algorithms for simulating {L}indblad evolution.
\newblock In {\em ICALP 2017}, 2017.

\bibitem[Dav74]{Davies1974}
E.~Brian Davies.
\newblock Markovian master equations.
\newblock {\em Commun. Math. Phys.}, 39:91--110, 1974.

\bibitem[Dav76]{Davies1976}
E.~Brian Davies.
\newblock {\em Quantum theory of open systems}.
\newblock Academic Press, 1976.

\bibitem[Dav79]{davies1979generators}
E.~Brian Davies.
\newblock Generators of dynamical semigroups.
\newblock {\em J. Funct. Anal.}, 34(3):421--432, 1979.

\bibitem[DLL24]{DingLiLin2024}
Zhiyan Ding, Xiantao Li, and Lin Lin.
\newblock Simulating open quantum systems using hamiltonian simulations.
\newblock {\em arXiv preprint arXiv:2311.15533}, 2024.

\bibitem[FLT24]{fang2024mixing}
Di~Fang, Jianfeng Lu, and Yu~Tong.
\newblock Mixing time of open quantum systems via hypocoercivity.
\newblock {\em arXiv preprint arXiv:2404.11503}, 2024.

\bibitem[FS02]{FrenkelSmit2002}
Daan Frenkel and Berend Smit.
\newblock {\em {Understanding Molecular Simulation: From Algorithms to Applications}}.
\newblock Academic Press, 2002.

\bibitem[FU07]{fagnola2007generators}
Franco Fagnola and Veronica Umanit{\`a}.
\newblock Generators of detailed balance quantum {M}arkov semigroups.
\newblock {\em Infin. Dimens. Anal. Quantum Probab. Relat. Top.}, 10(03):335--363, 2007.

\bibitem[FU10]{fagnola2010generators}
Franco Fagnola and Veronica Umanit{\`a}.
\newblock Generators of {KMS} symmetric {M}arkov semigroups on symmetry and quantum detailed balance.
\newblock {\em Commun. Math. Phys.}, 298(2):523--547, 2010.

\bibitem[GCDK24]{gilyen2024quantumgeneralizations}
András Gilyén, Chi-Fang Chen, Joao~F. Doriguello, and Michael~J. Kastoryano.
\newblock Quantum generalizations of glauber and metropolis dynamics.
\newblock {\em arXiv/2405.20322}, 2024.

\bibitem[GKS76]{GoriniKossakowskiSudarshan1976}
Vittorio Gorini, Andrzej Kossakowski, and Ennackal Chandy~George Sudarshan.
\newblock Completely positive dynamical semigroups of $n$-level systems.
\newblock {\em J. Math. Phys.}, 17:821--825, 1976.

\bibitem[GSLW19]{GilyenSuLowEtAl2019}
Andr{\'a}s Gily{\'e}n, Yuan Su, Guang~Hao Low, and Nathan Wiebe.
\newblock Quantum singular value transformation and beyond: exponential improvements for quantum matrix arithmetics.
\newblock In {\em STOC 2019}, pages 193--204, 2019.

\bibitem[Has04]{PhysRevB.69.104431}
M.~B. Hastings.
\newblock Lieb-schultz-mattis in higher dimensions.
\newblock {\em Phys. Rev. B}, 69:104431, Mar 2004.

\bibitem[HHKL23]{doi:10.1137/18M1231511}
Jeongwan Haah, Matthew~B. Hastings, Robin Kothari, and Guang~Hao Low.
\newblock Quantum algorithm for simulating real time evolution of lattice hamiltonians.
\newblock {\em SIAM Journal on Computing}, 52(6):FOCS18--250--FOCS18--284, 2023.

\bibitem[HK06]{Hastings_2006}
Matthew~B. Hastings and Tohru Koma.
\newblock Spectral gap and exponential decay of correlations.
\newblock {\em Communications in Mathematical Physics}, 265(3):781--804, 2006.

\bibitem[HR19]{Gus_2019}
Gustavo Hoepfner and Andrew Raich.
\newblock Global {$L^q$} {G}evrey {F}unctions, {P}aley-{W}einer {T}heorems, and the {FBI} {T}ransform.
\newblock {\em Indiana University Mathematics Journal}, 68(3):pp. 967--1002, 2019.

\bibitem[KACR24]{kochanowski2024rapid}
Jan Kochanowski, Alvaro~M Alhambra, Angela Capel, and Cambyse Rouz{\'e}.
\newblock Rapid thermalization of dissipative many-body dynamics of commuting {H}amiltonians.
\newblock {\em arXiv preprint arXiv:2404.16780}, 2024.

\bibitem[KB16]{KastoryanoBrandao2016}
Michael~J. Kastoryano and Fernando~G.S.L. Brand{\~a}o.
\newblock Quantum {G}ibbs samplers: The commuting case.
\newblock {\em Commun. Math. Phys.}, 344:915--957, 2016.

\bibitem[KFGV77]{kossakowski1977quantum}
Andrzej Kossakowski, Alberto Frigerio, Vittorio Gorini, and Maurizio Verri.
\newblock Quantum detailed balance and {KMS} condition.
\newblock {\em Commun. Math. Phys.}, 57(2):97--110, 1977.

\bibitem[KSV02]{KitaevShenVyalyi2002}
Alexei~Yu Kitaev, Alexander Shen, and Mikhail~N. Vyalyi.
\newblock {\em Classical and quantum computation}.
\newblock Number~47 in Graduate Studies in Mathematics. American Mathematical Soc., 2002.

\bibitem[KT13]{KastoryanoTemme2013}
Michael~J Kastoryano and Kristan Temme.
\newblock {Quantum logarithmic Sobolev inequalities and rapid mixing}.
\newblock {\em J. Math. Phys.}, 54(5):1--34, 2013.

\bibitem[LC19]{LowChuang2019}
Guang~Hao Low and Isaac~L. Chuang.
\newblock Hamiltonian simulation by qubitization.
\newblock {\em Quantum}, 3:163, 2019.

\bibitem[Lid19]{Lidar2019}
Daniel~A. Lidar.
\newblock Lecture notes on the theory of open quantum systems.
\newblock {\em arXiv preprint arXiv:1902.00967}, 2019.

\bibitem[Lin76]{Lindblad1976}
Goran Lindblad.
\newblock On the generators of quantum dynamical semigroups.
\newblock {\em Commun. Math. Phys.}, 48:119--130, 1976.

\bibitem[LL24]{li2024quantum}
Bowen Li and Jianfeng Lu.
\newblock Quantum space-time poincar$\backslash$'e inequality for lindblad dynamics.
\newblock {\em arXiv preprint arXiv:2406.09115}, 2024.

\bibitem[LR72]{Lieb_1972}
Elliott~H. Lieb and Derek~W. Robinson.
\newblock The finite group velocity of quantum spin systems.
\newblock {\em Communications in Mathematical Physics}, 28(3):251--257, 1972.

\bibitem[LW23]{LiWang2023}
Xiantao Li and Chunhao Wang.
\newblock {Simulating Markovian Open Quantum Systems Using Higher-Order Series Expansion}.
\newblock In {\em ICALP 2023}, pages 87:1--87:20, 2023.

\bibitem[MH21]{MannHelmuth2021}
Ryan~L Mann and Tyler Helmuth.
\newblock Efficient algorithms for approximating quantum partition functions.
\newblock {\em Journal of Mathematical Physics}, 62(2), 2021.

\bibitem[ML20]{MozgunovLidar2020}
Evgeny Mozgunov and Daniel Lidar.
\newblock {Completely positive master equation for arbitrary driving and small level spacing}.
\newblock {\em Quantum}, 4(1):1--62, 2020.

\bibitem[MZ13]{michalakis2013stability}
Spyridon Michalakis and Justyna~P Zwolak.
\newblock Stability of frustration-free hamiltonians.
\newblock {\em Comm. Math. Phys.}, 322:277--302, 2013.

\bibitem[NKL22]{NguyenKianiLloyd2022}
Quynh~T. Nguyen, Bobak~T. Kiani, and Seth Lloyd.
\newblock Block-encoding dense and full-rank kernels using hierarchical matrices: applications in quantum numerical linear algebra.
\newblock {\em Quantum}, 6:876, 12 2022.

\bibitem[NS06]{Nach_2006}
Bruno Nachtergaele and Robert Sims.
\newblock Lieb-robinson bounds and the exponential clustering theorem.
\newblock {\em Communications in Mathematical Physics}, 265(1):119--130, 2006.

\bibitem[Pin08]{pinsky2008introduction}
Mark~A. Pinsky.
\newblock {\em Introduction to Fourier Analysis and Wavelets}.
\newblock Graduate studies in mathematics. American Mathematical Society, 2008.

\bibitem[PW09]{PoulinWocjan2009}
David Poulin and Pawel Wocjan.
\newblock Preparing ground states of quantum many-body systems on a quantum computer.
\newblock {\em Phys. Rev. Lett.}, 102(13):130503, 2009.

\bibitem[RFA24]{rouz2024}
Cambyse Rouz\'e, Daniel~Stilck Franca, and \'Alvaro~M. Alhambra.
\newblock Efficient thermalization and universal quantum computing with quantum {G}ibbs samplers.
\newblock {\em arXiv preprint arXiv:2403.12691}, 2024.

\bibitem[RW24]{rajakumar2024gibbs}
Joel Rajakumar and James~D Watson.
\newblock Gibbs {S}ampling gives {Q}uantum {A}dvantage at {C}onstant {T}emperatures with ${O} (1) $-{L}ocal {H}amiltonians.
\newblock {\em arXiv preprint arXiv:2408.01516}, 2024.

\bibitem[RWW22]{Rall_thermal_22}
Patrick Rall, Chunhao Wang, and Pawel Wocjan.
\newblock Thermal state preparation via rounding promises.
\newblock {\em arXiv preprint arXiv:2210.01670}, 2022.

\bibitem[RWW23]{RallWangWocjan2023}
Patrick Rall, Chunhao Wang, and Pawel Wocjan.
\newblock Thermal state preparation via rounding promises.
\newblock {\em Quantum}, 7:1132, 2023.

\bibitem[SCC23]{sunderhauf2023blockencoding}
Christoph S\"{u}nderhauf, Earl Campbell, and Joan Camps.
\newblock Block-encoding structured matrices for data input in quantum computing.
\newblock {\em arXiv preprint arXiv:2302.10949}, 2023.

\bibitem[Sly10]{Sly2010}
Allan Sly.
\newblock Computational transition at the uniqueness threshold.
\newblock In {\em 2010 IEEE 51st Annual Symposium on Foundations of Computer Science}, pages 287--296. IEEE, 2010.

\bibitem[SM23]{shtanko2023preparing}
Oles Shtanko and Ramis Movassagh.
\newblock Preparing thermal states on noiseless and noisy programmable quantum processors.
\newblock {\em arXiv preprint arXiv:2112.14688}, 2023.

\bibitem[TKR{\etalchar{+}}10]{TemmeKastoryanoRuskaiEtAl2010}
Kristan Temme, Michael~James Kastoryano, Mary~Beth Ruskai, Michael~Marc Wolf, and Frank Verstraete.
\newblock The $\chi^2$-divergence and mixing times of quantum {M}arkov processes.
\newblock {\em J. Math. Phys.}, 51(12), 2010.

\bibitem[TOV{\etalchar{+}}11]{Temme_2011}
Kristan Temme, Tobias~J. Osborne, Karl~G. Vollbrecht, David Poulin, and Frank Verstraete.
\newblock Quantum {Metropolis} sampling.
\newblock {\em Nature}, 471(7336):87--90, 2011.

\bibitem[VAGGdW17]{VanApeldoornGilyenGriblingEtAl2017}
Joran Van~Apeldoorn, Andr{\'a}s Gily{\'e}n, Sander Gribling, and Ronald de~Wolf.
\newblock Quantum {SDP}-solvers: Better upper and lower bounds.
\newblock In {\em FOCS 2017}, pages 403--414. IEEE, 2017.

\bibitem[Wol12]{wolf5quantum}
Michael~M. Wolf.
\newblock Quantum channels \& operations: guided tour. {Lecture Notes}.
\newblock {\em URL http://www-m5. ma. tum. de/foswiki/pub M}, 2012.

\bibitem[WT23]{WocjanTemme2023}
Pawel Wocjan and Kristan Temme.
\newblock Szegedy walk unitaries for quantum maps.
\newblock {\em Commun. Math. Phys.}, 402(3):3201--3231, 2023.

\bibitem[YAG12]{Man_2012}
Man-Hong Yung and Al\'{a}n Aspuru-Guzik.
\newblock A quantum–quantum {Metropolis} algorithm.
\newblock {\em PNAS}, 109(3):754--759, 2012.

\bibitem[YL23]{YinLucas2023}
Chao Yin and Andrew Lucas.
\newblock Polynomial-time classical sampling of high-temperature quantum gibbs states.
\newblock {\em arXiv preprint arXiv:2305.18514}, 2023.

\bibitem[ZB23]{zhang2023criteria}
Yikang Zhang and Thomas Barthel.
\newblock Criteria for {D}avies irreducibility of {M}arkovian quantum dynamics.
\newblock {\em J. Phys. A: Math. Theor}, 2023.

\end{thebibliography}
\end{document}